\documentclass[11pt,a4paper]{article}
\usepackage[utf8]{inputenc}

\usepackage{amsmath}
\usepackage{amsfonts}
\usepackage{amssymb}
\usepackage{amsthm}
\usepackage{breqn}
\usepackage{latexsym}
\usepackage{physics}

\usepackage{verbatim}

\usepackage[margin=2.2cm]{geometry}
\usepackage[all,cmtip]{xy}
\usepackage{hyperref}

\theoremstyle{definition}
\newtheorem{defn}{Definition}[section]

\theoremstyle{definition}
\newtheorem{prop}{Proposition}[section]

\newtheorem{thm}{Theorem}[section]

\newtheorem{lemma}{Lemma}[section]

\theoremstyle{remark}
\newtheorem{rem}{Remark}[section]

\newtheorem{hyp}{Hypothesis}[section]

\newtheorem{corollary}{Corollary}[section]

\newcommand{\bigslant}[2]{{\raisebox{.2em}{$#1$}\left/\raisebox{-.2em}{$#2$}\right.}}

 % Comments by Claudio
 % Comments by Alessio
\title{Fundamental solutions and Hadamard states for a scalar field with arbitrary boundary conditions on an asymptotically AdS spacetimes}

\author{Claudio Dappiaggi$^{1,2,3,a}$, Alessio Marta$^{4,5,6,b}$,
\vspace{4mm}\\
{\small $^1$ Dipartimento di Fisica -- Universit{\`a} di Pavia, Via Bassi 6, 27100 Pavia, Italy.}\vspace{2mm}\\
{\small $^2$ INFN, Sezione di Pavia -- Via Bassi 6, 27100 Pavia, Italy.}\vspace{2mm}\\
{\small $^3$ Istituto Nazionale di Alta Matematica -- Sezione di Pavia, Via Ferrata, 5, 27100 Pavia, Italy.}\vspace{2mm}\\
{\small $^4$ Dipartimento di Matematica -- Universit{\`a} di Milano, Via Cesare Saldini, 50 --  I-20133 Milano, Italy.}\vspace{2mm}\\
{\small $^5$ INFN, Sezione di Milano -- Via Celoria, 16 -- I-20133 Milano, Italy.}\vspace{2mm}\\
{\small $^6$ Istituto Nazionale di Alta Matematica -- Sezione di Milano, Via Saldini, 50, I-20133 Milano, Italy.}\vspace{4mm}\\
{\footnotesize  ~$^a$ claudio.dappiaggi@unipv.it~,~$^b$ alessio.marta@unimi.it}
}

\date{\today}

%%%%%%%%%%%%%%%%%%%%%%%%%%%%%%%%%%%%%%%%%%%%%%%%%%%%%%%%%%%%%%%%%%%%%%%%

\begin{document}

	\maketitle

\begin{abstract}
We consider the Klein-Gordon operator on an $n$-dimensional asymptotically anti-de Sitter spacetime $(M,g)$ together with arbitrary boundary conditions encoded by a self-adjoint pseudodifferential operator on $\partial M$ of order up to $2$. Using techniques from $b$-calculus and a propagation of singularities theorem, we prove that there exist advanced and retarded fundamental solutions, characterizing in addition their structural and microlocal properties. We apply this result to the problem of constructing Hadamard two-point distributions. These are bi-distributions which are weak bi-solutions of the underlying equations of motion with a prescribed form of their wavefront set and whose anti-symmetric part is proportional to the difference between the advanced and the retarded fundamental solutions. In particular, under a suitable restriction of the class of admissible boundary conditions and setting to zero the mass, we prove their existence extending to the case under scrutiny a deformation argument which is typically used on globally hyperbolic spacetimes with empty boundary.
\end{abstract}

\section{Introduction}\label{Sec: Introduction}

The $n$-dimensional anti-de Sitter spacetime (AdS$_n$) is a maximally symmetric solution of the vacuum Einstein equations with a negative cosmological constant. From a geometric viewpoint it is noteworthy since it is not globally hyperbolic and it possesses a timelike conformal boundary. Due to these features the study of hyperbolic partial differential equations on top of this background becomes particularly interesting, especially since the initial value problem does not yield a unique solution unless suitable boundary conditions are assigned. Therefore several authors have investigated the properties of the Klein-Gordon equation on an AdS spacetime, see {\it e.g.} \cite{Bachelot:2010zw,Enciso:2013lza,Hol12,War1,Vasy12} to quote some notable examples, which have inspired our analysis. 

A natural extension of the framework outlined in the previous paragraph consists of considering a more general class of geometries, namely the so-called $n$-dimensional asymptotically AdS spacetimes, which share the same behaviour of AdS$_n$ in a neighbourhood of conformal infinity. In this case the analysis of partial differential equations such as the Klein-Gordon one becomes more involved due to admissible class of backgrounds and, in particular, due to the lack of isometries of the metric. Noteworthy has been the recent analysis by Gannot and Wrochna, \cite{GaWr18}, in which, using techniques proper of $b$-calculus they have investigated the structural properties of the Klein-Gordon operator with Robin boundary conditions. In between the several results proven, we highlight in particular the theorem of propagation of singularities and the existence of advanced and retarded fundamental solutions.

Yet, as strongly advocated in \cite{DDF18}, the class of boundary conditions which are of interest in concrete models is  greater than the one considered in \cite{GaWr18}, a notable example in this direction being the so-called Wentzell boundary conditions, see {\it e.g.} \cite{Coclite:2014, Dappiaggi:2018pju, Favini:2002, Ueno:1973, Zahn:2015due}. For this reason in \cite{Dappiaggi:2020yxg}, we started an investigation aimed at generalizing the results of \cite{GaWr18} proving a theorem of propagation of singularities for the Klein-Gordon operator on an asymptotically anti-de Sitter spacetime $M$ such that the boundary condition is implemented by a $b$-pseudodifferential operator $\Theta\in\Psi^k_b(\partial M)$ with $k\leq 2$, see Section \ref{Sec: b-pseudodifferential operators} for the definitions.

Starting from this result, in this work we proceed with our investigation and, still using techniques proper of $b$-calculus, we discuss the existence of advanced and retarded fundamental solutions for the Klein-Gordon operator with prescribed boundary conditions. The first main result that we prove is the following:

\begin{thm}\label{first main result}
Let $P_\Theta$ be the Klein-Gordon operator as per Equation \eqref{Eq: PTheta} where $\Theta$ abides to Hypothesis \ref{hypothesis_WF_Theta}. Then there exist unique retarded $(+)$ and advanced $(-)$ propagators, that is continuous operators $G_{\Theta}^\pm:\dot{\mathcal{H}}^{-1,m+1}_\pm(M) \rightarrow \mathcal{H}^{1,m}_\pm(M)$ such that $P_\Theta G_{\Theta}^\pm = \mathbb{I}$ on $\dot{\mathcal{H}}^{-1,m+1}_\pm(M)$ and $G_{\Theta}^\pm P_\Theta = \mathbb{I}$ on $\mathcal{H}^{1,m}_{\pm,\Theta}(M)$. Furthermore, $G_{\Theta}^\pm$ is a continuous map from $\dot{\mathcal{H}_0^{-1,\infty}}(M)$ to $\mathcal{H}_{loc}^{1,\infty}(M)$ where the subscript $0$ indicates that we consider only functions of compact support. 
\end{thm}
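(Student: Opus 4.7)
The plan is to adapt to our pseudodifferential setting the scheme employed by Gannot and Wrochna in \cite{GaWr18} for Robin boundary conditions, whose backbone is the classical functional-analytic argument used to construct Green operators on globally hyperbolic manifolds: establish an \emph{a priori} energy-type estimate in the relevant weighted $b$-Sobolev spaces, invoke Hahn-Banach to obtain a right inverse of $P_\Theta$, and then upgrade it to a two-sided inverse using uniqueness of the forward/backward Cauchy problem. The crucial input beyond \cite{GaWr18} is the propagation of singularities theorem proved in \cite{Dappiaggi:2020yxg}, which is valid for arbitrary $\Theta\in\Psi^k_b(\partial M)$ satisfying Hypothesis \ref{hypothesis_WF_Theta}.

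First I would prove uniqueness. Let $u\in\mathcal{H}^{1,m}_{\pm,\Theta}(M)$ satisfy $P_\Theta u = 0$ with support in the future (or past). The propagation of singularities from \cite{Dappiaggi:2020yxg} rules out the creation of wavefront set inside $M$ and, together with the support assumption, provides enough regularity for $u$ to satisfy the standard energy identity for a second-order hyperbolic operator; self-adjointness of $\Theta$ ensures that the boundary flux term has a favourable sign, so that a Grönwall argument yields $u=0$.

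Second I would establish the a priori estimate
\[
\|u\|_{\mathcal{H}^{1,m}_\pm(M)} \;\le\; C\,\|P_\Theta u\|_{\dot{\mathcal{H}}^{-1,m+1}_\pm(M)}, \qquad u \in \mathcal{H}^{1,m}_{\pm,\Theta}(M).
\]
The base case $m=0$ follows directly from the energy identity, with the boundary contribution controlled by the self-adjointness of $\Theta$; the promotion to arbitrary $m$ is obtained by commuting $P_\Theta$ with suitable $b$-vector fields on $M$ and absorbing the commutator terms via the propagation of singularities theorem in $b$-Sobolev scales. With this estimate at hand, a standard Hahn-Banach/duality argument applied to the transpose of $P_\Theta$ — which, thanks to the self-adjointness of $\Theta$, coincides with $P_\Theta$ with reversed time orientation — yields the desired continuous right inverse $G_\Theta^\pm\colon\dot{\mathcal{H}}^{-1,m+1}_\pm(M)\to\mathcal{H}^{1,m}_\pm(M)$. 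The identity $G_\Theta^\pm P_\Theta = \mathbb{I}$ on $\mathcal{H}^{1,m}_{\pm,\Theta}(M)$ then follows by the uniqueness established in the previous step.

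The final claim, that $G_\Theta^\pm$ maps $\dot{\mathcal{H}}^{-1,\infty}_0(M)$ continuously into $\mathcal{H}^{1,\infty}_{loc}(M)$, is obtained by running the estimate at every order $m$ on a compactly supported datum and intersecting. The delicate point I anticipate is precisely the promotion of the energy estimate to arbitrary $b$-Sobolev order: the commutator of $P_\Theta$ with a $b$-vector field does not in general preserve the boundary condition once $\Theta$ is a genuine pseudodifferential operator rather than a multiplication operator, so the induction must be designed in such a way that the commutator contributions are controlled via the propagation of singularities theorem of \cite{Dappiaggi:2020yxg} rather than at the level of the boundary condition itself. This is exactly the step at which the present framework genuinely departs from the Robin case of \cite{GaWr18}, and I expect the bulk of the technical work to be concentrated there.
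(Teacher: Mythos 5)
Your overall architecture coincides with the paper's: an a priori energy estimate in twisted Sobolev spaces (the paper's Lemma \ref{Lemma: Bound u}), a Hahn--Banach/duality argument applied to the adjoint problem to produce the solution locally in time followed by a gluing along a partition of unity in $\tau$ (Lemma \ref{Lem: existence of the solution}, following \cite{Vasy12}), uniqueness from the estimate and the locality in time of $\Theta$ (Corollary \ref{Cor: uniqueness of the solution}), and the promotion to arbitrary order $m$ via the propagation of singularities theorems of \cite{Dappiaggi:2020yxg} (plus an approximation argument for $m<0$). You also correctly identify that the genuinely new work relative to \cite{GaWr18} sits in the control of the boundary contribution.

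However, there is a genuine gap precisely at that point. You assert twice that ``self-adjointness of $\Theta$ ensures that the boundary flux term has a favourable sign.'' It does not: Hypothesis \ref{hypothesis_WF_Theta} imposes only $\Theta=\Theta^*$ and locality in time, with no positivity whatsoever, so the boundary term $2\,\mathrm{Re}\,\langle \Theta\gamma_-u,\gamma_-V'u\rangle$ in the commutator identity \eqref{Eq: Commutator P_Theta V^prime} has no definite sign (already for Robin conditions $\Theta=c\,\mathbb{I}$ with $c$ of either sign, let alone for $\Theta$ of order $2$). An energy or Gr\"onwall argument that discards this term by sign would therefore fail for a large part of the admissible class. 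The paper's actual mechanism is different: self-adjointness is used only to make $i[(V')^*P_\Theta-P_\Theta V']$ formally self-adjoint so that the commutator pairing equals $2\,\mathrm{Re}\,\langle P_\Theta u,V'u\rangle$; the boundary term is then estimated \emph{in absolute value} by writing $\Theta=\sum_i Q_i\Theta_i+\Theta'$ (and iterating once more for order $2$) with $\Theta_i,\Theta'\in\Psi^0_b$, invoking the $\varepsilon$-trace inequality \eqref{Prop: Bound_Gamma_Minus}, the boundedness of zeroth-order $b$-$\Psi$DOs (Lemma \ref{Lemma: PDO zero order bounded}) and the twisted Poincar\'e inequality \eqref{Eq: Poincare}, which yields a bound of the form $C_\varepsilon\delta^{-1}(\tau_1-\tau_0)^2\|(-\phi')^{1/2}d_Fu\|^2_{\mathcal{L}^2}$ that is \emph{absorbed} into the positive-definite stress-energy term for $\delta$ large. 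Your proposal needs this decomposition-and-absorption step spelled out; without it the base case of your a priori estimate is not established. A secondary, smaller discrepancy: the paper does not commute $P_\Theta$ with $b$-vector fields to reach order $m$; for $m\ge 0$ it applies the propagation of singularities theorem directly to the $\mathcal{H}^1$-solution to upgrade its regularity, and for $m<0$ it approximates $f$ by a sequence $f_j\in\dot{\mathcal{H}}^{-1,m+1}$ and passes to the limit using the order-$m$ estimate, which sidesteps the issue you raise about commutators not preserving the boundary condition.
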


Here the spaces $\dot{\mathcal{H}}^{-1,s+1}_\pm(M)$, $\mathcal{H}^{1,s}_\pm(M)$ as well as $\dot{\mathcal{H}_{0}^{-1,\infty}}(M)$, $\mathcal{H}_{loc}^{1,\infty}(M)$ and $\mathcal{H}^{1,m}_{\pm,\Theta}(M)$ are characterized in Definition \ref{Def: tildeH spaces} and in Section \ref{Sec: Fundamental Solutions}, see in particular Equations \eqref{Eq: suspsaces of H1(M) +}, \eqref{Eq: suspsaces of H1(M) -} and \eqref{Eq: timelikecompact H1(M) with boundary condition}.

In addition, we characterize the wavefront set of the advanced $(-)$ and of the retarded $(+)$ fundamental solutions as well as their wavefront set, thanks to the theorem of propagation of singularities proven in \cite{Dappiaggi:2020yxg}. This result allows us to discuss a notable application which is strongly inspired by the so-called algebraic approach to quantum field theory, see {\it e.g.} \cite{Book_AQFT} for a recent review. In this framework a key r\^{o}le is played by the so-called Hadamard two-point distributions, which are positive bi-distributions on the underlying background which are characterized by the following defining properties: they are bi-solutions of the underlying equations of motion, their antisymmetric part is proportional to the difference between the advanced and retarded fundamental solutions and their wavefront set has a prescribed form, see {\it e.g.} \cite{Khavkine:2014mta}. If the underlying background is globally hyperbolic and with empty boundary, the existence of these two-point distributions is a by-product of the standard H\"ormander propagation of singularities theorem and of a deformation argument due to Fulling, Narcovich and Wald, see \cite{FNW81}. 

In the scenarios investigated in this work this conclusion does no longer apply since we are considering asymptotically AdS spacetimes which possess in particular a conformal boundary. At the level of Hadamard two-point distributions this has long-standing consequences since even the standard form of the wavefront set has to be modified to take into account reflection of singularities at the boundary, see \cite{DF18} and Definition \ref{Def: Hadamard 2-pt function} below. Our second main result consists of showing that, under a suitable restriction on the allowed class of boundary conditions, see Hypothesis \ref{hypothesis_WF_Theta} in the main body of this work, it is possible to prove existence of Hadamard two-point distributions. First we focus on static spacetimes and, using spectral techniques, we construct explicitly an example, which, in the language of theoretical physics, is often referred to as the ground state. Subsequently we show that, starting from this datum and using the theorem of propagation of singularities proven in \cite{Dappiaggi:2020yxg}, we can use also in this framework a deformation argument to infer the existence of an Hadamard two-point distribution on a generic $n$-dimensional asymptotically AdS spacetime. It is important to observe that this result is in agreement and it complements the one obtained in \cite{Wro17}. To summarize our second main statement is the following, see also Definition \ref{Def: admissible boundary conditions} for the notion of static and of physically admissible boundary conditions:  

\begin{thm}
Let $(M,g)$ be a globally hyperbolic, asymptotically anti-de Sitter spacetime and let $(M_S,g_S)$ be its static deformation as per Lemma \ref{Lem: deformation_spacetime}. Let $\Theta_K$ be a static and physically admissible boundary condition so that the Klein-Gordon operator $P_{\Theta_K}$ on $(M_S,g_S)$ admits a Hadamard two-point function as per Proposition \ref{prop:wavefront_bulk_to_bulk}. Then there exists a Hadamard two point-function on $(M,g)$ for the associated Klein-Gordon operator with boundary condition ruled by $\Theta_K$.
\end{thm}

It is important to stress that the deformation argument forces us to restrict in the last part of the paper the class of admissible boundary conditions and notable examples such as those of Wentzell type are not included. They require a separate analysis of their own \cite{inpreparation}. 

\vskip .2cm

The paper is structured as follows. In Section \ref{Sec: Geometric Data} we recollect the main geometric data, particularly the notions of globally hyperbolic spacetime with timelike boundary and that of asymptotically AdS spacetime. In Section \ref{Sec: Analytic Preliminaries} we discuss the analytic data at the heart of our analysis. We start from a succinct review of $b$-calculus in Section \ref{Sec: b-pseudodifferential operators}, followed by one of twisted Sobolev spaces and energy forms. In Section \ref{Sec: The boundary value Problem} we formulate the dynamical problem, we are interested in, both in a strong and in a weak sense. In Section
\ref{Sec: Fundamental Solutions} we obtain our first main result, namely the existence of advanced and retarded fundamental solutions for all boundary conditions abiding to Hypothesis \ref{hypothesis_WF_Theta}. In addition we investigate the structural properties of these propagators and we characterize their wavefront set. In Section \ref{Sec: Hadamard two-point distributions} we investigate the existence of Hadamard two-point distributions in the case of vanishing mass. First, in Section \ref{Sec: Static Fundamental Solutions} and \ref{Sec: Existence of Hadamard States on Static Spacetimes}, using spectral techniques we prove their existence on static spacetimes though for a restricted class of admissible boundary conditions, see Hypothesis \ref{hypothesis_WF_Theta} and Definition \ref{Def: admissible boundary conditions}. Subsequently, in Section \ref{Sec: A deformation Argument}, we extend to the case in hand a deformation argument due to Fulling, Narcowich and Wald proving existence of Hadamard two-point distributions on a generic $n$-dimensional asymptotically AdS spacetime.

\section{Geometric Data}\label{Sec: Geometric Data}
In this section our main goal is to fix notations and conventions as well as to introduce the three main geometric data that we shall use in our analysis, namely globally hyperbolic spacetimes with timelike boundary, asymptotically anti-de Sitter spacetimes and manifolds of bounded geometry. We assume that the reader is acquainted with the basic notions of Lorentzian geometry, {\it cf.} \cite{ONeill83}. Throughout this paper with {\em spacetime}, we indicate always a smooth, connected, oriented and time oriented Lorentzian manifold $M$ of dimension $\dim M=n\geq 2$ equipped with a smooth Lorentzian manifold $g$ of signature $(-,+,\dots,+)$. With $C^\infty(M)$ ({\em resp.} $C^\infty_0(M)$) we indicate the space of smooth ({\em resp.} smooth and compactly supported) functions on $M$, while $\dot{C}^\infty(M)$ ({\em resp.} $\dot{C}^\infty_0(M)$) stands for the collection of all smooth ({\em resp.} smooth and compactly supported) functions vanishing at $\partial M$ with all their derivatives. In between all spacetimes, the following class plays a notable r\^{o}le \cite{AFS18}.

\begin{defn}\label{Def: globally hyperbolic}
	Let $(M,g)$ be a spacetime with non empty boundary $\iota:\partial M\to M$. We say that $(M,g)$ 
	\begin{enumerate}
		\item has a {\bf timelike boundary} if $(\partial M,\iota^*g)$ is a smooth, Lorentzian manifold,
		\item is {\bf globally hyperbolic} if it does not contain closed causal curves and if, for every $p,q\in M$, $J^+(p)\cap J^-(q)$ is either empty or compact. 
		%Here $J^\pm$ stand for the causal future (+) and past (-).
	\end{enumerate}
	If both conditions are met, we call $(M,g)$ a {\em globally hyperbolic spacetime with timelike boundary} and we indicate with $\mathring{M}=M\setminus \partial M$ the interior of $M$.
\end{defn}

Observe that, for simplicity, we assume throughout the paper that also $\partial M$ is {\em connected}. Notice in addition that Definition \ref{Def: globally hyperbolic} reduces to the standard notion of globally hyperbolic spacetimes when $\partial M=\emptyset$. The following theorem, proven in \cite{AFS18}, gives a more explicit characterization of the class of manifolds, we are interested in and it extends a similar theorem valid when $\partial M=\emptyset$.
%As a preliminary step, we recall that, given a Lorentzian manifold $(M,g)$, a Cauchy surface $\Sigma$ is an achronal subset of $M$ such that every inextensible, piecewise smooth, timelike curve intersects $\Sigma$ only once.

\begin{thm}\label{Thm: globally hyperbolic}
	Let $(M,g)$ be an $n$-dimensional globally hyperbolic spacetime with timelike boundary. Then it is isometric to a Cartesian product $\mathbb{R}\times\Sigma$ where $\Sigma$ is an $(n-1)$-dimensional Riemannian manifold. The associated line element reads 
	\begin{equation}\label{Eq: line element}
	ds^2=-\beta d\tau^2 + \kappa_\tau,
	\end{equation}
	where $\beta\in C^\infty(\mathbb{R}\times\Sigma;(0,\infty))$ while $\tau:\mathbb{R}\times\Sigma\to\mathbb{R}$ plays the r\^{o}le of time coordinate. In addition $\mathbb{R}\ni\tau\mapsto\kappa_{\tau}$ identifies a family of Riemmannian metrics, smoothly dependent on $\tau$ and such that, calling $\Sigma_\tau\doteq\{\tau\}\times\Sigma$, each $(\Sigma_\tau,\kappa_\tau)$ is a Cauchy surface with non empty boundary.
\end{thm}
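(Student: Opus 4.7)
The plan is to adapt the classical Geroch and Bernal-S\'anchez splitting theorems to the presence of a non-empty timelike boundary. The goal is to produce a smooth Cauchy temporal function $\tau: M \to \mathbb{R}$ whose gradient is everywhere timelike and, crucially, tangent to $\partial M$ along the boundary. The flow of a rescaled gradient then provides the diffeomorphism $M \cong \mathbb{R} \times \Sigma$ with $\Sigma = \tau^{-1}(0)$, after which reading off the metric directly yields the announced line element $ds^2 = -\beta\, d\tau^2 + \kappa_\tau$.

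As a first step I would establish a topological Cauchy hypersurface with boundary by adapting Geroch's volume-function argument. Choosing a finite Borel measure $\mu$ on $M$ one defines $t^\pm(p) = \mu(J^\pm(p))$ and checks that an appropriate ratio yields a continuous Cauchy time function. To see that each level set is a topological manifold with boundary meeting $\partial M$ transversally, one uses the hypothesis that $(\partial M, \iota^* g)$ is itself Lorentzian: causal cones reach $\partial M$ non-tangentially, so each level set inherits a well-defined codimension-one boundary $\Sigma \cap \partial M$ rather than collapsing onto $\partial M$.

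Next I would smooth this time function into a Cauchy temporal function $\tau$ by adapting the Bernal-S\'anchez procedure. The essential novelty is that $\nabla \tau$ must lie in $T\partial M$ along $\partial M$; equivalently, $\tau|_{\partial M}$ must be a Cauchy temporal function on the globally hyperbolic Lorentzian manifold $(\partial M, \iota^* g)$. My plan is to first construct such a function on the boundary, extend it into a collar neighborhood so that its differential annihilates the outward conormal, and only afterwards add interior steep temporal functions supported away from $\partial M$, so that the tangency of $\nabla\tau$ is preserved. Setting $X = -\nabla\tau/g(\nabla\tau,\nabla\tau)$ one then obtains a vector field with $X(\tau) = 1$ that is tangent to $\partial M$ on $\partial M$ and complete, since any incomplete integral curve would violate global hyperbolicity through failure of compactness of $J^+(p)\cap J^-(q)$. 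Its flow $\Phi_s$ preserves $\partial M$, and $(s,p) \mapsto \Phi_s(p)$ on $\mathbb{R} \times \Sigma$ is the desired isometric identification with $\beta = -1/g(\nabla\tau,\nabla\tau) > 0$ and $\kappa_\tau$ the induced Riemannian metric on each slice.

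The main obstacle is precisely the tangency condition: producing a Cauchy temporal function whose gradient stays parallel to $\partial M$ is the delicate step that distinguishes this theorem from its classical empty-boundary counterpart. It requires the two-stage smoothing outlined above, in which data are first constructed on $\partial M$ using its own global hyperbolicity and then extended into the bulk under careful control of the transversal derivative of $\tau$. Without this tangency the flow of $X$ would either exit $M$ across $\partial M$ or fail to be smooth there, and the clean product decomposition with the prescribed form of the metric would break down.
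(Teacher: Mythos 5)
The paper does not supply a proof of this statement: the theorem is imported verbatim from Ak\'e Hau, Flores and S\'anchez \cite{AFS18}, as the sentence preceding it says explicitly. That said, your outline is, in broad strokes, the strategy of that reference: a Geroch-type volume function producing a continuous Cauchy time function whose level sets are topological hypersurfaces with boundary, a Bernal--S\'anchez-type smoothing adapted so that the resulting Cauchy temporal function $\tau$ has gradient tangent to $\partial M$ along the boundary, and then the flow of a rescaled $\nabla\tau$ as the diffeomorphism onto $\mathbb{R}\times\Sigma$. You correctly single out the tangency requirement as the genuinely new and delicate ingredient of the boundary case. The sketch nonetheless leaves unargued the hardest step: showing that a single smooth function can be made \emph{simultaneously} Cauchy for $(M,g)$, temporal in all of $M$, and tangent at $\partial M$. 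Patching a temporal function on $(\partial M,\iota^*g)$, a collar extension killing the transversal derivative, and interior ``steep'' contributions in a way that keeps $d\tau$ timelike globally rather than merely near $\partial M$ is exactly where \cite{AFS18} have to do substantial technical work (their construction of tame Cauchy temporal functions), and your proposal records the goal but not the mechanism. The completeness claim for $X$ also needs to be made precise: an incomplete integral curve of $X$ on $[0,s_0)$ is a future-inextendible causal curve confined to $\tau^{-1}\big([\tau_0,\tau_0+s_0]\big)$, hence missing the Cauchy hypersurface $\Sigma_{\tau_0+s_0+1}$, a contradiction; and when the curve accumulates on $\partial M$, the tangency of $X$ is what guarantees the limit point is in $M$ and that the extension continues inside $M$ (possibly along $\partial M$). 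Finally, a small sign slip: with $X=-\nabla\tau/g(\nabla\tau,\nabla\tau)$ one has $X(\tau)=-1$ and $X$ is past-directed; you want $X=\nabla\tau/g(\nabla\tau,\nabla\tau)$, which is future-directed since $\nabla\tau$ is past-directed and $g(\nabla\tau,\nabla\tau)<0$.
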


\begin{rem}\label{Rem: boundary metric}
	Observe that a notable consequence of this theorem is that, calling $\iota_{\partial M}:\partial M\to M$ the natural embedding map, then $(\partial M,h)$ where $h=\iota^*_{\partial M}g$ is a globally hyperbolic spacetime. In particular the associated line element reads 
	$$ds^2|_{\partial M}=-\beta|_{\partial M}d\tau^2+\kappa_{\tau}|_{\partial M}.$$
\end{rem}

\noindent In addition to Definition \ref{Def: globally hyperbolic} we consider another notable class of spacetimes introduced in \cite{GaWr18}. 

\begin{defn}\label{Def: asymptotically AdS}
	Let $M$ be an n-dimensional manifold with non empty boundary $\partial M$. Suppose that $\mathring{M}=M\setminus\partial M$ is equipped with a smooth Lorentzian metric $g$ and that 
	\begin{itemize}
		\item[a)] If $x \in \mathcal{C}^\infty(M)$ is a boundary function, then $\widehat{g} = x^2 g$ extends smoothly to a Lorentzian metric on $M$.
		\item[b)] The pullback $h=\iota^*_{\partial M}\widehat{g}$ via the natural embedding map $\iota_{\partial M}:\partial M\to M$ individuates a smooth Lorentzian metric.
		\item[c)] $\widehat{g}^{-1}(dx,dx)=1$ on $\partial M$.
	\end{itemize}
	Then $(M,g)$ is called an {\em asymptotically anti-de Sitter (AdS) spacetime}. In addition, if $(M,\widehat{g})$ is a globally hyperbolic spacetime with timelike boundary, {\it cf.} Definition \ref{Def: globally hyperbolic}, then we call $(M,g)$ a {\em globally hyperbolic asymptotically AdS spacetime}.
\end{defn}

Observe that conditions a), b) and c) are actually independent from the choice of the boundary function $x$ and the pullback $h$ is actually determined up to a conformal multiple since there exists always the freedom of multiplying the boundary function $x$ by any nowhere vanishing $\Omega\in C^\infty(M)$. Such freedom plays no r\^{o}le in our investigation and we shall not consider it further. Hence, for definiteness, the reader can assume that a global boundary function $x$ has been fixed once and for all.

As a direct consequence of the collar neighbourhood theorem and of the freedom in the choice of the boundary function in Definition \ref{Def: asymptotically AdS}, this can always be engineered in such a way, that, given any $p\in\partial M$, it is possible to find a neighbourhood $U\subset\partial M$ containing $p$ and $\epsilon>0$ such that on $U\times[0,\epsilon)$ the line element associated to $g$ reads
\begin{equation}\label{Eq: metric near the boundary}
ds^2 = \frac{-dx^2+h_x}{x^2}
\end{equation}
where $h_x$ is a family of Lorentzian metrics depending smoothly on $x$ such that $h_0\equiv h$.

\begin{rem}
	It is important to stress that the notion of asymptotically AdS spacetime given in Definition \ref{Def: asymptotically AdS} is actually more general than the one given in \cite{Ashtekar:1999jx}, which is more commonly used in the general relativity and theoretical physics community. Observe in particular that $h_x$ in Equation \eqref{Eq: metric near the boundary} does not need to be an Einstein metric nor $\partial M$ is required to be diffeomorphic to $\mathbb{R}\times\mathbb{S}^{n-2}$. Since we prefer to make a close connection to both \cite{GaWr18} and \cite{Dappiaggi:2020yxg} we stick to their nomenclature.
\end{rem}

\begin{rem}
	Throughout the paper, with the symbols $\tau$ and $x$ we shall always indicate respectively the time coordinate as in Equation \eqref{Eq: line element} and the spatial coordinate as in Equation \eqref{Eq: metric near the boundary}.
\end{rem}

\subsection{Manifolds of bounded geometry}\label{Sec: Manifolds of bounded geometry}

To conclude this section we introduce the manifolds of bounded geometry since they are the natural arena where one can  define Sobolev spaces when the underlying background has a non empty boundary. In this section we give a succinct survey of the main concepts and of those results which will play a key r\^{o}le in our analysis. An interested reader can find more details in \cite{Sch01,ANN16,GS13,GOW17} as well as in \cite[Sec. 2.1 \& 2.2]{DDF18}. 

\begin{defn}\label{def:manifold_bounded_wb}
	A Riemannian manifold $(N,h)$ with empty boundary is of bounded geometry if 
	\begin{itemize}
		\item[a)] The injectivity radius $r_{inj}(N)$ is strictly positive,
		\item[b)] $N$ is of totally bounded curvature, namely for all $k \in \mathbb{N}\cup\{0\}$ there exists a constant $C_k>0$ such that $\| \bigtriangledown^k R\|_{L^\infty(M)} < C_k$.
	\end{itemize}
\end{defn}

This definition cannot be applied slavishly to a manifold with non empty boundary and, to extend it, we need to introduce a preliminary concept.

\begin{defn}\label{def:submanifold_bounded}
	Let $(N,h)$ be a Riemannian manifold of bounded geometry and let $(Y,\iota_Y)$ be a codimension $k$, closed, embedded smooth submanifold with an inward pointing, unit normal vector field $\nu_Y$. The submanifold $(Y,\iota^*_Y g)$ is of bounded geometry if:
	\begin{itemize}
		\item[a)] The second fundamental form $II$ of $Y$ in $N$ and all its covariant derivatives along $Y$ are bounded,
		\item[b)] There exists $\varepsilon_Y>0$ such that the map $\phi_{\nu_Y}:Y\times(-\varepsilon_Y,\varepsilon_Y) \rightarrow N$ defined as $(x,z) \mapsto \phi_{\nu_Y}(x,z)\doteq exp_x(z \nu_{Y,x})$ is injective. 
	\end{itemize}
\end{defn} 

These last two definitions can be combined to introduce the following notable class of Riemannian manifolds

\begin{defn}\label{Def: Riemannian manifold with boundary and of bounded geometry}
	Let $(N,h)$ be a Riemannian manifold with $\partial N\neq\emptyset$. We say that $(N,h)$ is of bounded geometry if there exists a Riemannian manifold of bounded geometry $(N^\prime,h^\prime)$ of the same dimension as $N$ such that:
	\begin{itemize}
		\item[a)] $N \subset N^\prime$ and $h = h^\prime|_{N}$
		\item[b)] $(\partial N, \iota^*h^\prime)$ is a bounded geometry submanifold of $N^\prime$, where $\iota:\partial N \rightarrow N^\prime$ is the embedding map.
	\end{itemize}
\end{defn}

\begin{rem}
	Observe that Definition \ref{Def: Riemannian manifold with boundary and of bounded geometry} is independent from the choice of $N^\prime$. For completeness, we stress that an equivalent definition which does not require introducing $N^\prime$ can be formulated, see for example \cite{Sch01}.
\end{rem}

\noindent Definition \ref{Def: Riemannian manifold with boundary and of bounded geometry} applies to a Riemannian scenario, but we are particularly interested in Lorentzian manifolds. In this case the notion of bounded geometry can be introduced as discussed in \cite{GOW17} for the case of a manifold without boundary, although the extension is straightforward. More precisely let us start from $(N,h)$ a Riemannian manifold of bounded geometry such that $\dim N=n$. In addition we call $BT^m_{m^\prime}(B_n(0,\frac{r_{inj}(N)}{2}),\delta_E)$, the space of all bounded tensors on the ball $B_n(0,\frac{r_{inj}(N)}{2})$ centered at the origin of the Euclidean space $(\mathbb{R}^n,\delta_E)$ where $\delta_E$ stands for the flat metric. For every $m,m^\prime\in\mathbb{N}\cup\{0\}$, we denote with $BT^m_{m^\prime}(N)$ the space of all rank $(m,m^\prime)$ tensors $T$ on $N$ such that, for any $p\in M$, calling $T_p\doteq(\exp_p\circ e_p)^*T$ where $e_p:(\mathbb{R}^n,\delta)\to (T_pN, h_p)$ is a linear isometry,  the family $\{T_p\}_{p\in M}$ is bounded on $BT^m_{m^\prime}(B_n(0,\frac{r_{inj}(N)}{2}),\delta_E)$. 

\begin{defn}\label{Def: Lorentzian manifold of bounded geometry}
	A smooth Lorentzian manifold $(M,g)$ is of bounded geometry if there exists a Riemannian metric $\widehat{g}$ on $M$ such that:
	\begin{itemize}
		\item[a)] $(M,\widehat{g})$ is of bounded geometry.
		\item[b)] $g \in BT^0_2 (M,\widehat{g})$ and $g^{-1} \in BT^2_0(M,\widehat{g})$.
	\end{itemize}
\end{defn}

On top of a Riemannian (or of a Lorentzian) manifold of bounded geometry $(N,h)$ we can introduce $H^k(N)\equiv W^{2,k}(N)$ which is the completion of 
$$\mathcal{E}^k(N)\doteq\{f\in C^\infty(N)\;|\;f,\nabla f,\dots,(\nabla)^k f\in L^2(N)\},$$
with respect to the norm 
$$\|f\|_{W^{2,k}(N)}=\left(\sum\limits_{i=0}^k\|(\nabla)^i f\|_{L^2(N)}\right)^{\frac{1}{2}},$$
where $\nabla$ is the covariant derivative built out of the Riemannian metric $h$, while $(\nabla)^i$ indicates the $i$-th covariant derivative. This notation is employed to disambiguate with $\nabla^i=h^{ij}\nabla_j$.

\begin{rem}\label{Rem: Sobolev vs bounded geometry}
	One might wonder why the assumption of bounded geometry is necessary since it seems to play no r\^{o}le in above characterization. The reason is actually two-fold. On the one hand it is possible to give a local definition of Sobolev spaces via a suitable choice of charts, which yields in turn a global counterpart via a partition of unity argument. Such definition is a prior different from the one given above unless one assumes to work with manifolds of bounded geometry, see \cite{GS13}. In addition such alternative characterization of Sobolev spaces allows for introducing a suitable generalization to manifolds of bounded geometry of the standard Lions-Magenes trace, which will play an important r\^{o}le especially in Section \ref{Sec: Static Fundamental Solutions}.
\end{rem}

Observe that, henceforth, we shall always assume implicitly that all manifolds that we consider are of bounded geometry.

\section{Analytic Preliminaries}\label{Sec: Analytic Preliminaries}

In this section we introduce the main analytic tools that play a key r\^{o}le in our investigation. We start by recollecting the main results from \cite{Dappiaggi:2020yxg} which are, in turn, based on \cite{GaWr18} and \cite{Vasy10,Vasy12}. 

\subsection{On b-pseudodifferential operators}\label{Sec: b-pseudodifferential operators}

In the following we assume for definiteness that $(M,g)$ is a globally hyperbolic, asymptotically $AdS$ spacetime of bounded geometry as per Definition \ref{Def: asymptotically AdS} and Definition \ref{Def: Lorentzian manifold of bounded geometry}. In addition we assume that the reader is familiar with the basic ideas and tools behind $b$-geometry, first introduced by R. Melrose in \cite{Mel92}. Here we limit ourselves to fix notations and conventions, following the presentation of \cite{GMP}. 

In the following with ${}^bTM$ we indicate the {\em $b$-tangent bundle}  which is a vector bundle whose fibres are
$${}^bT_pM=\left\{ 
\begin{array}{ll}
T_p M & p\in \mathring{M}\\
\textrm{span}_{\mathbb{R}}(x\partial_x, T_p\partial M) & p \in \partial M
\end{array}
\right., $$
where $x$ is the global boundary function introduced in Definition \ref{Def: asymptotically AdS}, here promoted to coordinate. Similarly we can define per duality the {\em $b$-cotangent bundle}, ${}^bT^*M$ which is a vector bundle whose fibers are
$${}^bT^*_pM=\left\{ 
\begin{array}{ll}
T^*_p M & p\in \mathring{M}\\
\textrm{span}_{\mathbb{R}}(\frac{dx}{x}, T^*_p\partial M) & p \in \partial M
\end{array}
\right.$$
For future convenience, whenever we fix a chart $U$ of $M$ centered at a point $p\in\partial M$, we consider $(x,y_i,\xi,\eta_i)$ and $(x,y_i,\zeta,\eta_i)$, $i=1,\dots,n-1=\dim\partial M$, local coordinates respectively of $T^*M|_U$ and of ${}^bT^*M|_U$. Since we are considering globally hyperbolic spacetimes, hence endowed with a distinguished time direction $\tau$, {\it cf.} Equation \eqref{Eq: line element}, we identify implicitly $\eta_{n-1}\equiv \tau$. In addition, observe that there exists a natural projection map 
$$\pi:T^*M\to{}^b T^*M,\quad (x,y_i,\xi,\eta_i)\mapsto \pi(x,y_i,\xi,\eta_i)=(x,y_i,x\xi,\eta_i),$$
which is non-injective. This feature prompts the definition of a very important structure in our investigation, namely the {\em compressed $b$-cotangent bundle} 
\begin{equation}\label{Eq: Compressed b-cotangent bundle}
{}^b\dot{T}^*M\doteq\pi[T^*M],
\end{equation}
which is a vector sub-bundle of ${}^bT^*M$, such that ${}^b\dot{T}^*_pM\equiv T^*_pM$ whenever $p\in\mathring{M}$. The last geometric structure that we shall need in this work is the {\em b-cosphere bundle} which is realized as the quotient manifold obtained via the action of the dilation group on $T^*_bM\setminus\{0\}$, namely
\begin{equation}\label{Eq: cosphere bundle}
	{}^bS^*M\doteq\bigslant{{}^bT^*M\setminus\{0\}}{\mathbb{R}^+}.
\end{equation}
We remark that, if we consider a local chart $U\subset M$ such that $U\cap\partial M\neq\emptyset$ and the local coordinates $(x,y_i,\zeta,\eta_i)$, $i=1,\dots,n-1=\dim\partial M$, on ${}^bT^*_UM\doteq{}^bT^*M|_U$, we can build a natural counterpart on ${}^bS^*_UM$, namely $(x,y_i,\widehat{\zeta},\widehat{\eta}_i)$ where $\widehat{\zeta}=\frac{\zeta}{|\eta_{n-1}|}$ and $\widehat{\eta_i}=\frac{\eta_i}{|\eta_{n-1}|}$. On top of these geometric structures we can define two natural classes of operators.

\begin{defn}\label{Def: algebra of pseudodifferential operators}
	Let $(M,g)$ be a globally hyperbolic, asymptotically $AdS$ spacetime. We call 
\begin{itemize}
	\item $\textbf{Diff}_b(M)\doteq\bigoplus_{k=0}^\infty\textbf{Diff}^k_b(M)$ the graded, differential operator algebra generated by $\Gamma({}^bTM)$, the space of smooth section of the $b$-tangent bundle.
	\item $\Psi_b^m(M)$ the set of properly supported $b$-pseudodifferential operators ($b-\Psi$DOs) of order $m$, $m\in\mathbb{R}$.
\end{itemize}
\end{defn}

The notion of $b-\Psi$DOs is strictly intertwined with $S^m({}^bT^*M)$ the set of all symbols of order $m$ on ${}^bT^*M$ and in particular there exists a principal symbol map 
\begin{equation}\label{Eq: principal symbol map}
\sigma_{b,m}:\Psi_b^m(M)\to S^m({}^bT^*M)/S^{m-1}({}^bT^*M),\quad A\mapsto a=\sigma_{b,m}(A),
\end{equation}
which gives rise to an isomorphism 
$$\Psi_b^m(M)/\Psi_b^{m-1}(M)\simeq S^m({}^bT^*M)/S^{m-1}({}^bT^*M).$$
In addition we can endow the space of symbols $S^m({}^bT^*M)$ with a Fr\'echet topology induced by the family of seminorms
$$ \| a \|_{N} \ = \sup_{(z,k_z) \in K_i \times \mathbb{R}^n} \max_{|\alpha|+ |\gamma| \leq N} \dfrac{|\partial_z^\alpha \partial_\zeta^\gamma a(z,k_z) | }{\langle k_z \rangle^{m-|\gamma|}},$$
where $\langle k_z \rangle = (1+|k_z|^2)^{\frac{1}{2}}$, while $\{K_i\}_{i\in I}$, $I$ being an index set, is an exhaustion of $M$ by compact subsets. Hence one can endow $S^m({}^bT^*M)$ with a metric $d$ as follows
\begin{equation*}
d(a,b) = \sum_{N \in \mathbb{N}} 2^{-N} \dfrac{\|a-b\|_{N}}{1+\|a-b \|_N}.\quad\forall a,b \in S^m({}^bT^*M)
\end{equation*}
In view of these data the following definition is natural 
\begin{defn}\label{Def: bounded PsiDOs}
	A subset of $\Psi_b^m(M)$ is called {\em bounded} if such is the associated subset of $S^m({}^bT^*M)$ with respect to the Fr\'echet topology.
\end{defn}
Finally we can recall the notion of elliptic $b-\Psi$DO and of wavefront set both of a single and of a family of pseudodifferential operators, {\it cf.} \cite{Hor1}:

\begin{defn}\label{Def: ellptic PsiDO}
	A b-pseudodifferential operator $A \in \Psi^m_b(M)$ is {\em elliptic} at a point $q_0 \in \ {}^bT^*M \setminus\{0\}$ if there exists $c\in S^{-m}(^bT^*M)$ such that
	\begin{equation*}
	\sigma_{b,m}(A)\cdot c - 1 \in S^{-1}(^bT^*M),
	\end{equation*} 
	in a conic neighbourhood of $q_0$. We call $ell_b(A)$ the (conic) subset of $^bT^*M \setminus\{0\}$ in which $A$ is elliptic.
\end{defn}

\begin{defn}\label{Def: WF of PsiDO}
	For any $P \in \Psi^m_b(M)$, we say that $(z_0,k_{z_0}) \notin WF^\prime_b(P)$ if the associated symbol $p(z,k_z)$ is such that, for every multi-indices $\gamma$ and for every $N\in\mathbb{N}$, there exists a constant $C_{N,\alpha,\gamma}$ such that
	\begin{equation*}
	|\partial_z^\alpha \partial^\gamma_{k_z} p(z,k_z)| \leq C_{N,\alpha,\gamma} \langle k_z \rangle^{-N}, 
	\end{equation*}
	for $z$ in a neighbourhood of $z_0$ and $k_z$ in a conic neighbourhood of $k_{z_0}$. 
	
	Similarly, if $\mathcal{A}$ is a bounded subset of $\Psi_b^m(M)$ and $q \in {}^bT^*M$. We say that $q \not \in WF_b^\prime(\mathcal{A})$ if there exists $B \in \Psi_b(M)$, elliptic at $q$, such that $\{ BA : A \in \mathcal{A}\}$ is a bounded subset of $\Psi_b^{-\infty}(M)$.
\end{defn}

To conclude this part of the section, we stress that, in order to study the behavior of a b-pseudodifferential operator at the boundary, it is useful to introduce the notion of \textit{indicial family}, \cite{GaWr18}. Let $A \in \Psi_b^m(M)$. For a fixed boundary function $x$, {\it cf.} Definition \ref{Def: asymptotically AdS}, and for any $v \in \mathcal{C}^\infty(\partial M)$ we define the indicial family $\widehat{N}(A)(s):C^\infty(\partial M)\to C^\infty(\partial M)$ as:
\begin{equation}\label{Eq: indicial family}
	\widehat{N}(A)(s)v = x^{-is}A \left( x^{is}u \right)|_{\partial M}
\end{equation}
where $u \in \mathcal{C}^\infty(M)$ is any function such that $u|_{\partial M}=v$.

\subsection{Twisted Sobolev Spaces}\label{Sec: Twisted Sobolev Spaces}

In this section we introduce the second main analytic ingredient that we need in our investigation. To this end, once more we consider $(M,g)$ a globally hyperbolic, asymptotically $AdS$ spacetime and the associated Klein-Gordon operator $P\doteq\Box_g-m^2$, where $m^2$ plays the r\^{o}le of a mass term, while $\Box_g$ is the D'Alembert wave operator built out of the metric $g$. It is convenient to introduce the parameter 
\begin{equation}\label{Eq: nu parameter}
\nu=\frac{1}{2}\sqrt{(n-1)^2+4m^2},
\end{equation}
which is constrained to be positive. This is known in the literature as the Breitenlohner-Freedman bound \cite{BF82}. In the spirit of \cite{GaWr18} and \cite[Sec. 3.2]{Dappiaggi:2020yxg} we introduce the following, finitely generated, space of twisted differential operators
$$\textbf{Diff}^1_\nu(M)\doteq\{x^{\nu_-}Dx^{-\nu_-}\;|\;D\in\textbf{Diff}^1(M)\},$$
where $\nu_-=\frac{n-1}{2}-\nu$, $n=\dim M$. Starting from these data, and calling with $x$ and $d\mu_g$ respectively the global boundary function, {\it cf.} Definition \ref{Def: asymptotically AdS}, and the metric induced volume measure we set
 
\begin{equation}\label{Eq: Twisted Hilbert Spaces}
\mathcal{L}^2(M)\doteq L^2(M;x^2d\mu_g)\;\;\textrm{and}\;\mathcal{H}^1(M)\doteq\{u\in\mathcal{L}^2(M)\;|\;Qu\in\mathcal{L}^2(M)\;\forall Q\in\textbf{Diff}^1_\nu(M)\}.
\end{equation}

\noindent The latter is a Sobolev space if endowed with the norm
$$\| u \|^2_{\mathcal{H}^1(M)} = \| u \|^2_{\mathcal{L}^2(M)} + \sum_{i=1}^n \|Q_i u\|^2_{\mathcal{L}^2(M)},$$
where $\{Q_i\}_{i=1\dots n}$ is a generating set of $\textbf{Diff}^1_\nu(M)$. In addition we shall be using $\mathcal{L}^2_{loc}(M)$, the space of locally square integrable functions over $M$ with respect to the measure $x^2d\mu_g$ and $\dot{\mathcal{L}^2}_{loc}(M)$ the counterpart built starting from $\dot{C}^\infty(M)$ in place of $C^\infty(M)$. Starting from these spaces we can build the first order Sobolev spaces $\mathcal{H}^1_{loc}(M)$ and $\dot{\mathcal{H}}^1_{loc}(M)$ as well as their respective topological duals, $\dot{\mathcal{H}}^{-1}_{loc}(M)$ and $\mathcal{H}^{-1}_{loc}(M)$. Finally, calling $\mathcal{E}^\prime(M)$ the topological dual space of $\dot{C}^\infty(M)$, we set 
\begin{equation}\label{Eq: H10}
\mathcal{H}^1_0(M)=\mathcal{H}^1_{loc}(M)\cap\mathcal{E}^\prime(M),
\end{equation}
while, similarly, we define $\mathcal{H}^{-1}_0(M)$. 

We discuss succinctly the interactions between $\Psi^m_b(M)$ and $\textbf{Diff}_\nu^1(M)$. We begin by studying the action of pseudodifferential operators of order zero on the spaces $\mathcal{H}^k_{loc/0}(M)$, $k=\pm 1$, just defined. Every $A \in \Psi^0_b(M)$ is a bounded operator thereon, as stated in the following lemma.

\begin{lemma}[\cite{GaWr18}, Lemma 3.8 and \cite{Vasy08}, Lemma 3.2, Corollary 3.4]\label{Lemma: PDO zero order bounded}
Let $A \in \Psi^0_b(M)$. Then $A$ is a continuous linear map
$$ \mathcal{H}^1_{loc/0}(M) \rightarrow \mathcal{H}^1_{loc/0}(M), \ \ \ \dot{\mathcal{H}}^1_{loc/0}(M) \rightarrow \dot{\mathcal{H}}^1_{loc/0}(M), $$
which extends per duality to a continuous map
$$ \dot{\mathcal{H}}^{-1}_{0/loc}(M) \rightarrow \dot{\mathcal{H}}^{-1}_{0/loc}(M), \ \ \ \mathcal{H}^{-1}_{0/loc}(M) \rightarrow \mathcal{H}^{-1}_{0/loc}(M).$$ 
\end{lemma}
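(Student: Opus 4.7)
The plan is to reduce everything, in three stages, to the classical $L^2$-boundedness of zeroth-order $b$-pseudodifferential operators.

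First, I would establish $\mathcal{L}^2(M)$-boundedness. The measure $x^2\,d\mu_g$ differs from a smooth $b$-density on $M$ only by multiplication by a smooth positive function of the form $x^\beta \rho$, with $\beta\in\mathbb{R}$ and $\rho$ bounded above and away from zero. Writing $u\in\mathcal{L}^2(M)$ as $u=x^{-\beta/2}\tilde u$ with $\tilde u\in L^2_b(M)$, the action of $A$ becomes $Au=x^{-\beta/2}\bigl(x^{\beta/2}Ax^{-\beta/2}\bigr)\tilde u$; since conjugation by a power of the boundary function stays inside the $b$-calculus and leaves the order of the operator unchanged, $x^{\beta/2}Ax^{-\beta/2}\in\Psi^0_b(M)$, and Melrose's Calder\'on--Vaillancourt theorem for $L^2_b(M)$ transfers to boundedness on $\mathcal{L}^2(M)$.

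Second, I would promote the bound to $\mathcal{H}^1_{loc/0}(M)$ and $\dot{\mathcal{H}}^1_{loc/0}(M)$ via a commutator argument. Fixing a finite generating set $\{Q_i\}\subset\textbf{Diff}^1_\nu(M)$ for the norm on $\mathcal{H}^1(M)$, one decomposes
\[
Q_i(Au)=A(Q_iu)+[Q_i,A]u
\]
for $u\in\mathcal{H}^1_{loc/0}(M)$. The first summand is in $\mathcal{L}^2_{loc/0}(M)$ by the definition of $\mathcal{H}^1_{loc/0}$ together with the first stage, using also proper support of $A$ to preserve the compact-support condition that defines the $0$-variant. For the second summand, one verifies that $Q_i=x^{\nu_-}D_ix^{-\nu_-}$ lies in $\Psi^1_b(M)$: any plain derivative $\partial_x$ appearing in $D_i$ produces, upon conjugation, a correction $\nu_- x^{-1}$ that combines with $\partial_x$ to yield a smooth multiple of the $b$-vector field $x\partial_x$ plus a bounded multiplier, all belonging to the $b$-calculus. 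Consequently $[Q_i,A]\in\Psi^0_b(M)$, and the first stage again applies. The dotted variants follow by the same argument applied after density from $\dot C^\infty(M)$, once one notes that $\Psi^0_b(M)$ preserves the space of functions vanishing to infinite order at $\partial M$.

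Third, the negative-order statements follow by duality. The formal $\mathcal{L}^2(M)$-adjoint $A^*$ lies again in $\Psi^0_b(M)$, by the standard closure of the $b$-calculus under adjoints, so the previous step gives continuity of $A^*$ on the positive-order spaces. Transposing with respect to the natural pairings that identify the four negative-order spaces in the statement as topological duals of the corresponding positive-order ones then yields the claimed continuity of $A$. The principal obstacle is the commutator step above: one must confirm that the conjugation by $x^{\nu_-}$ built into $\textbf{Diff}^1_\nu(M)$ really does land in $\textbf{Diff}^1_b(M)$ modulo bounded multipliers, so that $[Q_i,A]$ genuinely sits in $\Psi^0_b(M)$. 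This is precisely the point where the specific value $\nu_-=(n-1)/2-\nu$ dictated by the Breitenlohner--Freedman bound enters: it is exactly this twist that converts plain derivatives into $b$-vector fields at the conformal boundary, and without it the commutator would fail to be a zeroth-order $b$-operator.
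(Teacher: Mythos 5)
Your overall architecture ($\mathcal{L}^2$-boundedness, then a commutator argument for $\mathcal{H}^1$, then duality for the negative-order spaces) is the right one and matches the strategy of the cited references \cite{Vasy08,GaWr18}; the paper itself does not reprove the lemma but imports it from there. Stages one and three are essentially sound. However, the second stage contains a genuine error at exactly the point you single out as the principal obstacle. The conjugated derivative is
\begin{equation*}
x^{\nu_-}\,\partial_x\, x^{-\nu_-}=\partial_x-\nu_-x^{-1}=x^{-1}\left(x\partial_x-\nu_-\right),
\end{equation*}
which carries an overall factor of $x^{-1}$ and therefore lies in $x^{-1}\textbf{Diff}^1_b(M)$, \emph{not} in $\textbf{Diff}^1_b(M)$ or $\Psi^1_b(M)$: the correction $\nu_-x^{-1}$ is an unbounded multiplier, and $\partial_x$ itself is $x^{-1}(x\partial_x)$, which is not a smooth multiple of a $b$-vector field. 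The twist by $x^{\nu_-}$ does not tame the transversal derivative; indeed, if $\textbf{Diff}^1_\nu(M)\subset\Psi^1_b(M)$ held, then $\mathcal{H}^1(M)$ would coincide with the purely $b$-regular space $\mathcal{H}^{0,1}(M)$, which is false and would trivialize the whole twisted-Sobolev framework. Consequently your claim that $[Q_i,A]\in\Psi^0_b(M)$ fails, and with it the asserted bound $\|[Q_i,A]u\|_{\mathcal{L}^2(M)}\lesssim\|u\|_{\mathcal{L}^2(M)}$.

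The correct resolution, which is the actual content of \cite[Lemma 3.2]{Vasy08} and \cite[Lemma 3.8]{GaWr18}, is a structural statement about moving twisted derivatives past $b$-pseudodifferential operators: one shows that $[Q_i,A]$ can be decomposed as $\sum_j C_jQ_j+C_0$ with $C_j\in\Psi^{-1}_b(M)$ and $C_0\in\Psi^0_b(M)$ (the singular factor $x^{-1}$ produced by the commutator is absorbed because commutation with the boundary function gains a power of $x$ while dropping the $b$-order by one). This yields only the weaker estimate $\|[Q_i,A]u\|_{\mathcal{L}^2(M)}\lesssim\|u\|_{\mathcal{H}^1(M)}$, which is nevertheless exactly what is needed to close the argument, since then $\|Q_iAu\|_{\mathcal{L}^2(M)}\leq\|AQ_iu\|_{\mathcal{L}^2(M)}+\|[Q_i,A]u\|_{\mathcal{L}^2(M)}\lesssim\|u\|_{\mathcal{H}^1(M)}$. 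So the lemma is true and your skeleton is salvageable, but you must replace the false membership $[Q_i,A]\in\Psi^0_b(M)$ by this decomposition; as written, the step you rely on is precisely the nontrivial technical input of the cited lemmas, and your justification for it is incorrect.
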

The proof of this lemma gives a useful information. Let $A \in \Psi^0_b(M)$ be with compact support $U \subset M$. Then there exists $\chi \in \mathcal{C}_0^\infty(U)$ such that
\begin{equation}\label{Prop: Bound_PSI_Zero}
\| A u \|_{\mathcal{H}^k(M)} \leq C \|\chi u \|_{\mathcal{H}^k(M)},
\end{equation}
for every $u \in \mathcal{H}^k(M)$ with $k=\pm 1$. The constant $C$ is bounded by a seminorm of $A$.

To study in full generality the interactions between $\Psi^m_b(M)$ and $\textbf{Diff}_\nu^1(M)$, we need to introduce one last class of relevant spaces

\begin{defn}\label{Def: tildeH spaces}
	Let $k=-1,0,1$ and let $m \geq 0$. Given $u \in \mathcal{H}_{loc}^k(M)$ ({\em resp.} $\mathcal{H}^k(M)$), we say that $u \in \mathcal{H}_{loc}^{k,m}(M)$ ({\em resp.} $\mathcal{H}^{k,m}(M)$) if $Au \in \mathcal{H}_{loc}^k(M)$ ({\em resp.} $\mathcal{H}^k(M)$) for all $A \in \Psi^m_b(M)$. Furthermore, we define $\mathcal{H}^{k,\infty}(M)$ as:
	\begin{equation}
	\mathcal{H}^{k,\infty}(M) \doteq  \bigcap_{m=0}^\infty \mathcal{H}^{k,m}(M).
	\end{equation}
\end{defn}

\begin{rem}\label{rem: H1k_elliptic}
	As observed in \cite{Vasy08}, whenever $m$ is finite, it is enough to check that both $u$ and $Au$ lie in $\mathcal{H}^k_{loc}(M)$ for a single elliptic operator $A \in \Psi^m_b(M)$.
\end{rem}

\noindent Observe that, in full analogy to Definition \ref{Def: tildeH spaces}, we define similarly $\mathcal{H}^{k,m}_0(M)$ and $\dot{\mathcal{H}}^{k,m}_{loc}(M)$. In the following definition, we extend the notion of wavefront set to the spaces $\mathcal{H}_{loc}^{k,m}(M)$.

\begin{defn}\label{Def: wavefrontset for Hkm-loc}
	Let $k=0,\pm 1$ and let $u \in \mathcal{H}^{k,m}_{loc}(M)$, $m \in \mathbb{R}$. Given $q \in {}^bT^*M \setminus\{0\} $, we say that $q \not \in WF_b^{k,m}(u)$ if there exists $A \in \Psi_b^m(M)$ such that $q \in ell_b(A)$ and $Au \in \mathcal{H}^k_{loc}(M)$, where $ell_b$ stands for the elliptic set as per Definition \ref{Def: ellptic PsiDO}. When $m=+\infty$, we say that $q \not \in WF_b^{k,\infty}(M)$ if there exists $A \in \Psi^0_b(M)$ such that $q \in ell_b(A)$ and $Au \in \mathcal{H}^{k,\infty}_{loc}(M)$.
\end{defn}

With all these data, we can define two notable trace maps which will be a key ingredient in the next section. The following proposition summarizes the content of \cite[Lemma 3.3]{GaWr18} and \cite[Lemma 3.4]{Dappiaggi:2020yxg}:

\begin{thm}\label{Thm: gamma-}
	Let $(M,g)$ be a globally hyperbolic, asymptotically $AdS$ spacetime of bounded geometry with $n=\dim M$ and let $\nu>0$, {\it cf.} Equation \eqref{Eq: nu parameter}. Then there exists a continuous map $\widetilde{\gamma}_-:\mathcal{H}^1_0(M)\to \mathcal{H}^\nu(\partial M)$, which can be extended to a continuous map 
	$$\gamma_-:\mathcal{H}^{1,m}_{loc}(M)\to\mathcal{H}^{\nu+m}_{loc}(\partial M),\quad\forall m\leq 0.$$
\end{thm}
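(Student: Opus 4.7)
The plan is to localize near $\partial M$ using a partition of unity so that in the collar the metric takes the form \eqref{Eq: metric near the boundary}, reduce the construction of $\widetilde{\gamma}_-$ to a weighted trace statement in the model half-space, and then pass to $m<0$ by exploiting an elliptic $b$-pseudodifferential operator together with its indicial family.

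For the construction of $\widetilde{\gamma}_-$, pick $u\in\mathcal{H}^1_0(M)$ and write $u=x^{\nu_-}w$ in a collar. Since $\textbf{Diff}^1_\nu(M)$ is generated by elements of the form $x^{\nu_-}Dx^{-\nu_-}$, membership of $u$ in $\mathcal{H}^1_0(M)$ forces $w$ and its $b$-derivatives $x\partial_x w,\partial_{y_i}w$ to lie in $L^2$ with respect to the twisted weight $x^{2\nu-1}\,dx\,d\mu_{h_x}$ near the boundary. A Hardy-type inequality of the form
\begin{equation*}
\int_0^\epsilon |w(x,\cdot)-w(0,\cdot)|^2\,x^{2\nu-1}\,dx\;\leq\;C\int_0^\epsilon |x\partial_x w|^2\,x^{2\nu-1}\,dx
\end{equation*}
then shows that $w$ admits a trace at $x=0$; combining this with the tangential bound on $\partial_{y_i}w$ and a Mellin analysis at the indicial root $s=i\nu_-$, one recovers precisely fractional regularity $\nu$ on $\partial M$, so that $\widetilde{\gamma}_- u\doteq w|_{\partial M}\in\mathcal{H}^\nu(\partial M)$, with continuous dependence on $u$ inherited from the Hardy estimate.

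For the extension to $\mathcal{H}^{1,m}_{loc}(M)$ with $m\leq 0$, by Remark \ref{rem: H1k_elliptic} choose an elliptic $A\in\Psi_b^{-m}(M)$ with $Au\in\mathcal{H}^1_{loc}(M)$. The first step applies to $Au$ and returns $\widetilde{\gamma}_-(Au)\in\mathcal{H}^\nu_{loc}(\partial M)$. On the other hand, the indicial family \eqref{Eq: indicial family} evaluated at the indicial root associated to the leading exponent $\nu_-$ produces an elliptic classical pseudodifferential operator $\widehat{N}(A)(\cdot)\in\Psi^{-m}(\partial M)$, and a standard normal-operator computation in the $b$-calculus yields an identity of the form
\begin{equation*}
\widetilde{\gamma}_-(Au)\;=\;\widehat{N}(A)(i\nu_-)\,\gamma_- u\qquad\text{modulo smoothing terms.}
\end{equation*}
Inverting $\widehat{N}(A)(i\nu_-)$ with an elliptic parametrix on $\partial M$ then gives $\gamma_- u\in\mathcal{H}^{\nu+m}_{loc}(\partial M)$, and continuity follows from the boundedness of $A$ and of the boundary parametrix.

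The delicate point is to obtain the precise fractional order $\nu$ at the boundary rather than an integer one: the weighted Hardy inequality alone only yields a trace in a weak $L^2$ sense, and it is only through the sharper Mellin analysis at the indicial root $s=i\nu_-$, coupled with the bounded-geometry characterization of fractional Sobolev spaces on $\partial M$ recalled in Remark \ref{Rem: Sobolev vs bounded geometry}, that the precise order $\nu$ is identified. An analogous subtlety appears in the extension step, where one must verify that the indicial family shifts the boundary Sobolev order by exactly $m$; this is sensitive to the normalization of the $b$-principal symbol of $A$ and encodes the basic compatibility between bulk $b$-regularity and boundary Sobolev regularity.
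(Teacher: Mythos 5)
The paper itself offers no proof of this statement: it is imported from \cite{GaWr18} (Lemma 3.3) and \cite{Dappiaggi:2020yxg} (Lemma 3.4), so the comparison is with those arguments. Your architecture --- collar localization, the substitution $u=x^{\nu_-}w$, a one-dimensional weighted trace estimate in the normal variable, and then commutation of $\gamma_-$ with elliptic $b$-pseudodifferential operators through the indicial family to shift the boundary order by $m$ --- is exactly the one used there; in particular the identity $\gamma_-(Au)=\widehat{N}(A)(-i\nu_-)\gamma_-u$ is the engine of the second step and is even invoked explicitly in the proof of Lemma \ref{Lemma: Bound u} of this paper.

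However, the central estimate is wrong as written, in a way that would make the trace fail to exist. With $\nu_-=\tfrac{n-1}{2}-\nu$ and $x^2d\mu_g=x^{2-n}d\mu_{\widehat g}$, membership of $u=x^{\nu_-}w$ in $\mathcal{H}^1_0(M)$ places $w$ and its first derivatives in $L^2(x^{1-2\nu}\,dx\,d\mu_{h_x})$, not $L^2(x^{2\nu-1}\,dx\,d\mu_{h_x})$; more importantly, $\textbf{Diff}^1_\nu(M)$ is the conjugate by $x^{\nu_-}$ of $\textbf{Diff}^1(M)$, not of $\textbf{Diff}^1_b(M)$, so what is controlled is the \emph{full} normal derivative $\partial_x w$, not merely the $b$-derivative $x\partial_x w$. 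This is precisely what saves the trace: from $\int_0^\epsilon|\partial_xw|^2x^{1-2\nu}dx<\infty$ and Cauchy--Schwarz one gets $\|w(x,\cdot)-w(x',\cdot)\|_{L^2(\partial M)}\leq C\,|x^{2\nu}-x'^{2\nu}|^{1/2}$, and the limit at $x=0$ exists exactly because $\nu>0$. Under your hypothesis (control of $x\partial_xw$ only), a cutoff times $w=(\log(1/x))^{1/4}$ satisfies everything for $0<\nu<1$ and has no trace, so the inequality you wrote cannot carry the argument. Two further slips in the extension step: for $m\leq0$ the elliptic operator must have order $m$ (so that applying it \emph{gains} $b$-regularity and lands in $\mathcal{H}^1_{loc}$), or equivalently $u$ must be written as a sum of operators in $\Psi_b^{-m}(M)$ applied to $\mathcal{H}^1_{loc}$ elements --- your $A\in\Psi_b^{-m}(M)$ applied directly to $u$ moves in the wrong direction; and with the convention of Equation \eqref{Eq: indicial family} the indicial family must be evaluated at $s=-i\nu_-$, not $i\nu_-$. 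Finally, landing exactly in $\mathcal{H}^\nu(\partial M)$ requires interpolating the $O(x^\nu)$ convergence rate in $L^2$ against the uniform tangential $H^1$ control; your appeal to ``Mellin analysis'' points in the right direction but is not yet an argument.
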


\begin{rem}\label{Rem: significance of gamma-}
	In order to better grasp the r\^{o}le of the trace map defined in Theorem \ref{Thm: gamma-}, it is convenient to focus the attention on $\mathbb{R}^n_+\doteq [0,\infty)\times\mathbb{R}^{n-1}$. In this setting, any $u\in\mathcal{H}^1(\mathbb{R}^n_+)$ can be restricted to the subset $[0,\epsilon)\times\mathbb{R}^{n-1}$, $\epsilon>0$ admitting an asymptotic expansion $u=x^{\nu_-}u_-+x^{r+1}u_0$ where $2r=n-2$, while $u_-\in \mathcal{H}^\nu(\mathbb{R}^n)$ and $u_0\in\mathcal{H}^1([0,\epsilon);L^2(\mathbb{R}^{n-1}))$. In this context it holds that 
	$\gamma_-(u)=u_-$.
\end{rem}
At last we recall from \cite{GaWr18} a notable property of the trace $\gamma_-$ related to its boundedness. Let $u \in \mathcal{H}(M)$, then for every $\varepsilon >0 $ there exists $C_\varepsilon >0 $ such that 
\begin{equation}\label{Prop: Bound_Gamma_Minus}
\| \gamma_- u \|^2_{L^2(\partial M)} \leq \varepsilon \| u \|^2_{\mathcal{H}^1(M)} + C_\varepsilon \| u \|^2_{\mathcal{L}^2(M)}.
\end{equation}

\subsection{Twisted Energy Form}\label{Sec: Twisted Energy Form}

In this section we focus the attention on discussing the last two preparatory key concepts before stating the boundary value problem, we are interested in. We recall that $P=\Box_g-m^2$ is the Klein-Gordon operator and, following \cite{GaWr18}, we can individuate a distinguished class of spaces whose elements enjoy additional regularity with respect to $P$:

\begin{defn}\label{Def: chik spaces}
	Let $(M,g)$ be a globally hyperbolic, asymptotically anti-de Sitter spacetime and let $P$ be the Klein-Gordon operator. For all $m \in \mathbb{R}\cup\{\pm\infty\}$, we define the Frech\'et spaces 
	\begin{equation}\label{Eq: spazi chik}
	\mathcal{X}^m(M) = \{u \in \mathcal{H}^{1,m}_{loc}(M)\; |\; Pu \in x^2 \mathcal{H}^{0,m}_{loc}(M) \},
	\end{equation}
	with respect to the seminorms
	\begin{equation}\label{Eq: seminorme chik}
	\norm{u}_{\mathcal{X}^m(M)} = \norm{\phi u}_{\mathcal{H}^{1,m}(M)}+\norm{x^{-2}\phi P u}_{\mathcal{H}^{0,m}(M)},
	\end{equation}
	where $\phi\in C^\infty_0(M)$.
\end{defn}

At this point we are ready to introduce a suitable energy form. The standard definition must be adapted to the case in hand, in order to avoid divergences due to the behaviour of the solutions of the Klein-Gordon equation at the boundary. To this end it is convenient to make use of the so-called {\em admissible twisting functions}, that is, calling $x$ the global boundary function as per Definition \ref{Def: asymptotically AdS}, the collection of $F\in x^{\nu_-}C^\infty(M)$ such that
\begin{enumerate}
	\item $x^{-\nu_-}F>0$ on $M$,
	\item $S_F\doteq F^{-1}P(F)\in x^2 L^\infty(M)$ where $P$ is the Klein-Gordon operator.
\end{enumerate}
For any such function, we can define a twisted differential
\begin{equation}\label{Eq: Twisted Differential}
d_F\doteq F\circ d\circ F^{-1}:\dot{C}^\infty(M)\to \dot{C}^\infty(M;T^*M),\quad v\mapsto d_F(v)=dv+v F^{-1}(dF).
\end{equation}
Accordingly we can introduce the {\em twisted Dirichlet (energy) form}
\begin{equation}\label{Eq: Twisted Dirichlet Form}
\mathcal{E}_0(u,v)\doteq -\int\limits_M g(d_Fu, d_F\overline{v})d\mu_g.\quad\forall u,v,\in\mathcal{L}^2_{loc}(M)
\end{equation}
Starting from these data, we are ready to introduce a second trace map. More precisely we start from 
$$\widetilde{\gamma}_+:\mathcal{X}^\infty(M)\to\mathcal{H}^{1,\infty}_{loc}(M)\quad u\mapsto \widetilde{\gamma}_+(u)=x^{1-2\nu}\partial_x(F^{-1}u)|_{\partial M}.$$
Calling $d^\dagger_F$ the formal adjoint of $d_F$ as in Equation \eqref{Eq: Twisted Differential} with respect to the inner product on $L^2(M;d\mu_g)$ we observe that, on account of the identity $P=-d^\dagger_F d_F+F^{-1}P(F)$, the following Green's formula holds true for all $u\in\mathcal{X}^\infty(M)$ and for all $v\in\mathcal{H}^1_0(M)$:
\begin{equation}\label{Eq: Green Formula}
\int Pu \cdot \overline{v} \ d\mu_g = \mathcal{E}_0(u,v) + \int S_Fu \cdot \bar{v} \ d\mu_g + \int \gamma_+u \cdot \gamma_-\bar{v} \ d\mu_h.
\end{equation}
With these premises the following result holds true, {\it cf.} \cite[Lemma 4.8]{GaWr18}:

\begin{lemma}\label{Lem: gamma+}
	The map $\widetilde{\gamma}_+$ can be extended to a bounded linear map 
	$$\gamma_+:\mathcal{X}^k(M)\to\mathcal{H}^{k-\nu}_{loc}(\partial M),\quad\forall k\in\mathbb{R}$$
	and, if $u \in \mathcal{X}^k(M)$, the Green's formula \eqref{Eq: Green Formula} holds true for every $v \in \mathcal{H}^{1,-k}_{0}(M)$.
\end{lemma}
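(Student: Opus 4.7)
The natural strategy is to extract $\gamma_+$ from Green's formula \eqref{Eq: Green Formula}, viewing the boundary pairing $\int \gamma_+u\cdot\gamma_-\bar v\,d\mu_h$ as defined by the bulk expression $\int Pu\,\bar v\,d\mu_g-\mathcal{E}_0(u,v)-\int S_Fu\,\bar v\,d\mu_g$, and then identifying $\gamma_+u$ by duality against $\gamma_-$. The role of Remark \ref{Rem: significance of gamma-} is that on the half-space model the twisted trace $\gamma_-$ reads off the leading $x^{\nu_-}$-coefficient, so $\gamma_+$ will correspond by duality to the sub-leading coefficient in the asymptotic expansion at $\partial M$.

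The first step is a density statement: $\mathcal{X}^\infty(M)$ is dense in $\mathcal{X}^k(M)$ in the topology induced by the seminorms \eqref{Eq: seminorme chik}. This is obtained by smoothing with a bounded family $\{A_\varepsilon\}_{\varepsilon>0}\subset\Psi^0_b(M)$ converging strongly to the identity on $\mathcal{H}^{1,k}_{loc}(M)$; because such operators are uniformly bounded on the relevant twisted Sobolev spaces (Lemma \ref{Lemma: PDO zero order bounded}) and commute with $P$ up to lower-order $b$-errors, the approximants $A_\varepsilon u$ lie in $\mathcal{X}^\infty(M)$ and converge to $u$ in $\mathcal{X}^k(M)$.

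The second step is an a priori estimate: for $u\in\mathcal{X}^\infty(M)$ and $v\in\mathcal{H}^{1,-k}_0(M)$, one rewrites the first bulk term as $\langle x^{-2}Pu,v\rangle_{\mathcal{L}^2(M)}$ and dominates each of the three bulk contributions by $C\,\|u\|_{\mathcal{X}^k(M)}\,\|v\|_{\mathcal{H}^{1,-k}(M)}$, using the duality between $\mathcal{H}^{j,k}_{loc}(M)$ and $\mathcal{H}^{j,-k}_0(M)$ for $j\in\{0,1\}$, boundedness of $d_F$ between the twisted spaces, and the hypothesis $S_F\in x^2L^\infty(M)$. Theorem \ref{Thm: gamma-} then guarantees that $\gamma_-(\mathcal{H}^{1,-k}_0(M))$ sits densely inside $\mathcal{H}^{\nu-k}_0(\partial M)$, so the estimate exhibits $\gamma_+u$ as a continuous linear functional on $\mathcal{H}^{\nu-k}_0(\partial M)$, hence an element of $\mathcal{H}^{k-\nu}_{loc}(\partial M)$ with norm controlled by $\|u\|_{\mathcal{X}^k(M)}$. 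The extension from $\mathcal{X}^\infty(M)$ to $\mathcal{X}^k(M)$ follows by the density of Step 1, and Green's formula passes to the limit on both sides.

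The main obstacle I anticipate is not the bulk bounds, which are routine, but the careful identification of $\gamma_+u$ as an element of $\mathcal{H}^{k-\nu}_{loc}(\partial M)$ for the full range $k\in\mathbb{R}$. This requires (i) verifying that $\gamma_-(\mathcal{H}^{1,-k}_0(M))$ is actually dense in $\mathcal{H}^{\nu-k}_0(\partial M)$ — reducing via the collar form \eqref{Eq: metric near the boundary} and the bounded-geometry atlas of Section \ref{Sec: Manifolds of bounded geometry} to the explicit half-space trace of Remark \ref{Rem: significance of gamma-} — and (ii) handling the duality cleanly so that the distinction between the $loc$ and $0$ subscripts is consistently preserved on both $M$ and $\partial M$; this bookkeeping is the most delicate part of the argument.
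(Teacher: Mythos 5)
First, a structural remark: the paper does not prove this lemma itself, it imports it verbatim from \cite[Lemma 4.8]{GaWr18}. The proof there runs through the conormal asymptotic expansion recorded in the remark following the lemma: one integrates $x^{-2}Pu\in\mathcal{H}^{0,k}_{loc}(M)$ twice in the normal variable (solving the indicial equation in the collar \eqref{Eq: metric near the boundary}) to obtain $u=Fu_-+x^{\nu_+}u_++\rho$ with $u_+$ in a boundary Sobolev space depending continuously on the $\mathcal{X}^k$-seminorms, and then sets $\gamma_+u\doteq 2\nu u_+$; Green's formula for rough $v$ follows by approximating $v$ by elements of $\dot{C}^\infty_0(M)$. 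Your route --- defining $\gamma_+u$ by duality through Green's formula --- is a genuinely different strategy, and your Step 1 (density of $\mathcal{X}^\infty(M)$ in $\mathcal{X}^k(M)$ via a bounded mollifying family in $\Psi^0_b(M)$) and the bulk estimates of Step 2 are sound in outline. The problem sits exactly where you located it, but it is a genuine gap rather than bookkeeping, and density is the wrong requirement.

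Your a priori estimate controls the bulk functional $\ell_u(v)=\int Pu\,\bar v\,d\mu_g-\mathcal{E}_0(u,v)-\int S_Fu\,\bar v\,d\mu_g$ by $C\,\|u\|_{\mathcal{X}^k(M)}\|v\|_{\mathcal{H}^{1,-k}(M)}$. To promote $\ell_u$ to an element of $\mathcal{H}^{k-\nu}_{loc}(\partial M)$ you need two things that density of $\gamma_-(\mathcal{H}^{1,-k}_0(M))$ does not supply. (a) Well-definedness: $\ell_u(v)$ must depend only on $\gamma_-v$, i.e.\ vanish on $\ker\gamma_-$; for $k>0$ the space $\mathcal{H}^{1,-k}_0(M)$ is strictly larger than $\mathcal{H}^1_0(M)$, so this is itself an extension of \eqref{Eq: Green Formula} beyond its stated domain and cannot be assumed. (b) Continuity in the boundary norm: the bound $|\ell_u(v)|\leq C\|v\|_{\mathcal{H}^{1,-k}(M)}$ yields $|\ell_u(v)|\leq C'\|\gamma_-v\|_{\mathcal{H}^{\nu-k}(\partial M)}$ only if every boundary datum $\psi$ admits an extension $v$ with $\gamma_-v=\psi$ and $\|v\|_{\mathcal{H}^{1,-k}(M)}\leq C''\|\psi\|_{\mathcal{H}^{\nu-k}(\partial M)}$, i.e.\ a bounded right inverse of $\gamma_-$, which is strictly stronger than density of the range and is nowhere established. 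Moreover Theorem \ref{Thm: gamma-} asserts the mapping property of $\gamma_-$ only for $m\leq 0$, so for $k<0$ even the continuity of $\gamma_-:\mathcal{H}^{1,-k}_0(M)\to\mathcal{H}^{\nu-k}_0(\partial M)$ that your pairing presupposes is not available in the paper. A bounded extension operator can be built (e.g.\ $\psi\mapsto\chi(x)\,F\cdot\widetilde\psi$ in the collar, patched over a bounded-geometry atlas), and with it your argument closes; but this is the essential missing step of the duality approach, and it is precisely what the asymptotic-expansion proof sidesteps by producing $\gamma_+u$ directly as a coefficient of $u$ rather than through the dual of $\gamma_-$.
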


\begin{rem}
	In order to better grasp the r\^{o}le of the second trace map characterized in Lemma \ref{Lem: gamma+}, it is convenient to focus once more the attention on $\mathbb{R}^n_+\doteq[0,\infty)\times\mathbb{R}^{n-1}$ endowed with a metric whose line element reads in standard Cartesian coordinates
	$$ds^2=\frac{-dx^2+h_{ab}dy^a dy^b}{x^2},$$
	where $h$ is a smooth Lorentzian metric on $\mathbb{R}^{n-1}$. Consider an admissible twisting function $F$ such that $\lim\limits_{x\to 0^+}x^{-\nu_-}F=1$ and $u\in\mathcal{H}^{1,k}_0(\mathbb{R}^n_+)$ such that $Pu\in x^2\mathcal{H}^{0,k}_0(\mathbb{R}^n_+)$ for $k\geq 0$. Then, for every $\epsilon>0$, the restriction of $u$ to $[0,\epsilon)\times\mathbb{R}^n$ admits an asymptotic expansion of the form $u=Fu_-+x^{\nu_+}u_++x^{r+2}\mathcal{H}_b^{k+2}([0,\epsilon);\mathcal{H}^{k-3}(\mathbb{R}^{n-1}))$ where $2r=n-2$ while $u_-\in\mathcal{H}^{\nu+k}(\mathbb{R}^{n-1})$ and $u_+\in\mathcal{H}^{k-1-2\nu}(\mathbb{R}^{n-1})$. In this context it holds that $\gamma_+(u)=2\nu u_+$.
\end{rem}

\subsection{The boundary value problem}\label{Sec: The boundary value Problem}

In this section we use the ingredients introduced in the previous analysis to formulate the dynamical problem we are interested in. At a formal level we look for $u\in \mathcal{H}^1_{loc}(M)$ such that 
\begin{equation}\label{Eq: Strong boundary KG}
\left\{ 
\begin{array}{l}
Pu = (\Box_g-m^2) u = f\\
\gamma_+ u = \Theta \gamma_- u
\end{array}
\right. ,
\end{equation}
where $\Theta\in\Psi^k_b(\partial M)$ while $\gamma_-,\gamma_+$ are the trace maps introduced in Theorem \ref{Thm: gamma-} and in Lemma \ref{Lem: gamma+} respectively. It is not convenient to look for strong solutions of Equation \eqref{Eq: Strong boundary KG}. More precisely, for any $\Theta\in\Psi^k_b(\partial M)$ , we assume that there exists an admissible twisting function $F$ and we define the energy functional 
\begin{equation}\label{Eq: Theta Twisted Energy Functional}
\mathcal{E}_\Theta(u,v)=\mathcal{E}_0(u,v)+\int\limits_M S_Fu\cdot\overline{v}\,d\mu_g+\int\limits_{\partial M}\Theta\gamma_-u\cdot\gamma_-\overline{v},
\end{equation}
where $S_F=F^{-1}P(F)$, $\mathcal{E}_0$ is the twisted Dirichlet form, {\it cf.} Equation \eqref{Eq: Twisted Dirichlet Form}, $u\in\mathcal{H}^1_{loc}(M)$, while $v\in\mathcal{H}^1_0(M)$. Hence, we can introduce $P_\Theta : \mathcal{H}^1_{loc}(M) \rightarrow \dot{\mathcal{H}}^{-1}_{loc}(M)$ by
\begin{equation}\label{Eq: PTheta}
\langle P_\Theta u, v \rangle = \mathcal{E}_\Theta(u,v).
\end{equation}
Observe that, on account of the regularity of $\gamma_-u$, we can extend $P_\Theta$ as an operator $P_\Theta: \mathcal{H}^{1,m}_{loc}(M) \rightarrow \dot{\mathcal{H}}^{-1,m}_{loc}(M)$, $m\in\mathbb{R}$ \cite{GaWr18}.

\begin{rem}\label{Rem: Where is gamma+}
	The reader might be surprised by the absence of $\gamma_+$ in the weak formulation of the boundary value problem as per Equation \eqref{Eq: PTheta}. This is only apparent since the last term in the right hand side of Equation \eqref{Eq: PTheta} is a by-product of the Green's formula as per Equation \eqref{Eq: Green Formula} together with the boundary condition introduced in Equation \eqref{Eq: Strong boundary KG}.
\end{rem}

We are now in position to recollect the two main results proved in \cite{Dappiaggi:2020yxg} concerning a propagation of singularities theorem for the Klein-Gordon operator with boundary conditions ruled by a pseudo-differential operator $\Theta\in\Psi^k_b(\partial M)$ with $k\leq 2$. As a preliminary step, we introduce two notable geometric structures. More precisely, since the principal symbol of $x^{-2}P$ reads $\widehat{p}\doteq\widehat{g}(X,X)$, where $X\in\Gamma(T^*M)$, the associated {\em characteristic set} is 
\begin{equation}\label{Eq: characteristic set}
\mathcal{N}=\left\{(q,k_q)\in T^*M\setminus\{0\}\;|\; \widehat{g}^{ij}(k_q)_i (k_q)_j=0\right\},
\end{equation}
while the {\em compressed characteristic set} is 
\begin{equation}\label{Eq: compressed characteristic set}
\dot{\mathcal{N}}=\pi[\mathcal{N}]\subset{}^b\dot{T}(M),
\end{equation}
where $\pi$ is the projection map from $T^*M$ to the compressed cotangent bundle, {\it cf.} Equation \eqref{Eq: Compressed b-cotangent bundle}. A related concept is the following: 

\begin{defn}\label{Def: generalized broken bicharacteristics}
	Let $I \subset \mathbb{R}$ be an interval. A continuous map $\gamma : I \rightarrow \dot{\mathcal{N}}$ is a {\em generalized broken bicharacteristic} (GBB) if for every $s_0 \in I$ the following conditions hold true:
	\begin{itemize}
		\item[a)] If $q_0 = \gamma(s_0) \in \mathcal{G}$, then for every $\omega\in \Gamma^\infty(^bT^*M)$,
		\begin{equation}
		\frac{d}{ds}(\omega \circ \gamma) = \{ \widehat{p},\pi^* \omega \}(\eta_0),
		\end{equation}
		where $\eta_0 \in \mathcal{N}$ is the unique point for which $\pi(\eta_0)=q_0$, while $\pi:T^*M\to{}^bT^*M$ and $\{,\}$ are the Poisson brackets on $T^*M$.
		\item[b)] If $q_0 = \gamma(s_0) \in \mathcal{H}$, then there exists $\varepsilon > 0$ such that $0 < |s-s_0| < \varepsilon $ implies $x(\gamma(s))\neq 0$, where $x$ is the global boundary function, {\it cf.} Definition \ref{Def: asymptotically AdS}.
	\end{itemize}
\end{defn}

With these structures and recalling in particular the wavefront set introduced in Definition \ref{Def: wavefrontset for Hkm-loc} we can state the following two theorems, whose proof can be found in \cite{Dappiaggi:2020yxg}:

\begin{thm}\label{Thm: main theorem k positivo}
	Let $\Theta \in \Psi_b^k(\partial M)$ with $0<k\leq 2$. If $u \in \mathcal{H}_{loc}^{1,m}(M)$ for  $m \leq 0$ and $s \in \mathbb{R} \cup \{ + \infty \}$, then $WF_b^{1,s}(u) \setminus \left( WF_b^{-1,s+1}(P_\Theta u) \cup WF_b^{-1,s+1}(\Theta u) \right) $ is the union of maximally extended generalized broken bicharacteristics within the compressed characteristic set $\dot{\mathcal{N}}$.
\end{thm}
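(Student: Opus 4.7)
The plan is to prove this propagation theorem by a positive commutator argument adapted to the b-calculus framework, extending the method used by Vasy and by Gannot--Wrochna to accommodate the pseudodifferential boundary operator $\Theta$. The overall strategy is: (i) reduce to a local, microlocal propagation estimate at each point $q_0$ of the compressed characteristic set $\dot{\mathcal{N}}$; (ii) construct a b-pseudodifferential commutant whose symbolic Poisson bracket with the principal symbol $\widehat{p}$ of $x^{-2}P$ has a sign along the Hamilton/GBB direction; (iii) pair $\langle [P_\Theta,A^*A]u,u\rangle$ via the twisted Green's formula \eqref{Eq: Green Formula} and absorb both the interior error and the boundary error into the right-hand side, controlling the latter via the boundary condition $\gamma_+u=\Theta\gamma_-u$. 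The standard open-closed argument on maximally extended GBBs then upgrades the local propagation into the global statement.

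First I would split into the three usual microlocal regimes according to the stratification of $\dot{\mathcal{N}}$ into elliptic, hyperbolic $\mathcal{H}$, and glancing $\mathcal{G}$ points. Interior and elliptic points are handled by standard elliptic b-regularity applied to $x^{-2}P$. Hyperbolic boundary points reduce, by condition (b) of Definition \ref{Def: generalized broken bicharacteristics}, to propagation on either side of $\partial M$ through short segments of bicharacteristics in $\mathring{M}$, where one can localize with a commutant supported away from $\{x=0\}$ and invoke Hörmander-type propagation in $\Psi_b^*(M)$. The genuinely delicate case is the glancing set $\mathcal{G}$ on $\partial M$, where tangential bicharacteristics interact with the boundary and no direction of propagation is free of the boundary term.

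For the glancing case I would follow the Melrose--Sjöstrand/Vasy scheme: choose a symbol $a\in S^{2s+1-2m'}({}^bT^*M)$ localized near $q_0$, with $\{\widehat{p},a\}$ of definite sign modulo terms of lower order along the GBB, and quantize it to $A\in\Psi_b^{s+1/2-m'}(M)$ in such a way that $A^*A$ is elliptic at $q_0$. Applying Green's formula \eqref{Eq: Green Formula} to $\langle P_\Theta u, A^*A u\rangle-\langle u, A^*A P_\Theta u\rangle$ produces the twisted energy form $\mathcal{E}_\Theta$ plus a boundary contribution of the form
\begin{equation*}
\int_{\partial M}\gamma_+(Au)\cdot \gamma_-\overline{Au}\,d\mu_h-\int_{\partial M}\gamma_+u\cdot\gamma_-\overline{A^*Au}\,d\mu_h.
\end{equation*}
Using the boundary condition to replace each occurrence of $\gamma_+$ by $\Theta\gamma_-$, commuting $\Theta$ past $\gamma_\pm A$ with the indicial calculus \eqref{Eq: indicial family}, and bookkeeping the order of each symbol, one rewrites the boundary term as a quadratic form in $\gamma_-u$ with coefficients of order at most $k\le 2$ in $\Psi_b(\partial M)$, up to an error whose wavefront set is contained in $WF_b^{-1,s+1}(\Theta u)$. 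Here the bound \eqref{Prop: Bound_Gamma_Minus} on $\gamma_-$, together with the hypothesis $k\le 2$, is exactly what is needed to absorb this boundary quadratic form into the $\mathcal{H}^1$-energy of $Au$ with a small constant.

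The main obstacle, which dictates the upper bound $k\le 2$ on the order of $\Theta$, is precisely this balancing of the boundary contribution. For $k=0$ (the Robin case of \cite{GaWr18}) the boundary term is a lower-order perturbation of $\mathcal{E}_0$; for $0<k\le 2$ the symbol of $\Theta$ couples nontrivially with that of $A$, and one must verify at the symbolic level that the commutator of the indicial family $\widehat{N}(A^*A)$ with $\Theta$ produces a remainder whose b-wavefront set is controlled by $WF_b^{-1,s+1}(\Theta u)$, which is exactly why this set appears in the statement. Beyond $k=2$ the boundary form fails to be controlled by the twisted Dirichlet form, and the energy inequality used to close the argument breaks down; within the admissible range, the remaining steps (Gårding-type positivity, regularization, and the open-closed argument along GBBs) are standard adaptations of the Melrose--Sjöstrand--Vasy machinery to the twisted Sobolev setting of Section \ref{Sec: Twisted Sobolev Spaces}.
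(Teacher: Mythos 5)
You should first be aware that the present paper does not prove Theorem \ref{Thm: main theorem k positivo}: it is imported verbatim from \cite{Dappiaggi:2020yxg}, where the full argument is carried out, so the only ``proof'' in this text is a citation. That said, your outline does follow the same overall strategy as the cited proof, namely the Melrose--Sj\"ostrand positive commutator scheme as implemented by Vasy \cite{Vasy12} and Gannot--Wrochna \cite{GaWr18}, with the new ingredient being the control of the boundary pairing generated by a $\Theta$ of positive order; the mechanism you describe for that control (distributing the $k\leq 2$ boundary derivatives of $\Theta$ onto the twisted first-order generators so that they can be absorbed into the $\mathcal{H}^1$-energy via the trace bound \eqref{Prop: Bound_Gamma_Minus}) is the same device that appears in Lemma \ref{Lemma: Bound u} of this paper and in \cite{Dappiaggi:2020yxg}.

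There is, however, a genuine error in your case analysis. You claim that hyperbolic boundary points can be handled ``with a commutant supported away from $\{x=0\}$'' by interior H\"ormander propagation. This cannot work: a hyperbolic point $q_0\in\mathcal{H}$ lies over $\partial M$, so any commutant elliptic at $q_0$ --- which the final G\aa rding/positivity step requires --- is necessarily supported up to the boundary, and the commutator identity then produces exactly the boundary pairing $\int_{\partial M}\Theta\gamma_-(\cdot)\cdot\gamma_-(\cdot)\,d\mu_h$ that you only address in the glancing case. Condition (b) of Definition \ref{Def: generalized broken bicharacteristics} constrains the behaviour of the GBB near $q_0$, not the support of the commutant; the content of the hyperbolic estimate is precisely that regularity propagates \emph{through} the reflection at the boundary, and this is where the boundary condition and the restriction on $\Theta$ enter on the same footing as in the glancing region. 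In the scheme of \cite{Vasy12}, \cite{GaWr18} and \cite{Dappiaggi:2020yxg}, the hyperbolic and glancing estimates are two separate boundary-supported positive commutator arguments, differing in the construction of the escape function (at glancing points one must additionally control the flow in the normal variables), and both require the treatment of the $\Theta$-term. As written, your argument would fail at every point of $\mathcal{H}$, which is where generic GBBs actually meet the boundary.
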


\noindent In full analogy it holds also

\begin{thm}\label{Thm: main theorem k negativo}
	Let $\Theta \in \Psi_b^k(M)$ with $k \leq 0$. If $u \in \mathcal{H}_{loc}^{1,m}(M)$ for $m \leq 0$ and $s \in \mathbb{R} \cup \{ + \infty \}$, then it holds that $WF_b^{1,s}(u) \setminus WF_b^{-1,s+1}(P_\Theta u)$ is the union of maximally extended GBBs within the compressed characteristic set $\dot{\mathcal{N}}$.
\end{thm}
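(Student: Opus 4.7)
The plan is to adapt the positive commutator scheme that proves Theorem \ref{Thm: main theorem k positivo}, exploiting the fact that for $k\leq 0$ the boundary operator $\Theta$ does not produce any regularity loss. The extra wavefront set $WF_b^{-1,s+1}(\Theta u)$ that appears on the right hand side of Theorem \ref{Thm: main theorem k positivo} is an artefact of the need to absorb, inside the boundary integral of the Green identity \eqref{Eq: Green Formula}, contributions which can lose up to $k$ derivatives when $k>0$. For $k\leq 0$, the analogue on $\partial M$ of Lemma \ref{Lemma: PDO zero order bounded}, extended by duality, implies that $\Theta:\mathcal{H}^a_{loc}(\partial M)\to\mathcal{H}^a_{loc}(\partial M)$ is continuous for every $a\in\mathbb{R}$, so the term $WF_b^{-1,s+1}(\Theta u)$ is automatically contained in $WF_b^{1,s}(u)$ after pull-back via $\gamma_-$ and can be discarded.

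Concretely, I would fix $q_0\in\dot{\mathcal{N}}$ with $q_0\notin WF_b^{-1,s+1}(P_\Theta u)$ and split the analysis depending on whether $q_0$ lies in the interior of $M$, in the hyperbolic, or in the glancing component of $\partial^b\dot{\mathcal{N}}$. In each stratum, the aim is to produce a commutant $A\in\Psi_b(M)$ of appropriate order, microlocalised near $q_0$, whose principal symbol is positive along the Hamilton flow of $\widehat{p}$, and constructed exactly as in the proof of Theorem \ref{Thm: main theorem k positivo}. Applying $A^*A$ to $P_\Theta u=f$ and pairing with $u$, invoking \eqref{Eq: Green Formula} together with the definition \eqref{Eq: Theta Twisted Energy Functional} of $\mathcal{E}_\Theta$, one obtains after standard manipulations a schematic estimate of the form
\begin{equation*}
\|Au\|^2_{\mathcal{H}^1(M)}\ \lesssim\ \|AP_\Theta u\|_{\mathcal{H}^{-1}(M)}\|Au\|_{\mathcal{H}^1(M)}+\Bigl|\int_{\partial M}\Theta\gamma_-(Au)\cdot\gamma_-\overline{Au}\,d\mu_h\Bigr|+\text{lower order terms}.
\end{equation*}
Because $\Theta$ has order $\leq 0$, the boundary integral is bounded by $C\|\gamma_-(Au)\|^2_{L^2(\partial M)}$, and by the trace estimate \eqref{Prop: Bound_Gamma_Minus} this is in turn absorbed into $\varepsilon\|Au\|^2_{\mathcal{H}^1(M)}+C_\varepsilon\|Au\|^2_{\mathcal{L}^2(M)}$, closing the estimate without introducing any $\Theta u$-dependent term.

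Once this a priori estimate is available, the conclusion is standard: if a GBB $\gamma:I\to\dot{\mathcal{N}}$ avoids $WF_b^{-1,s+1}(P_\Theta u)$ and satisfies $\gamma(s_1)\notin WF_b^{1,s}(u)$ for some $s_1\in I$, then iterating the commutator estimate along $\gamma$, together with the reading of the Hamilton flow at boundary points provided by Definition \ref{Def: generalized broken bicharacteristics}, yields $\gamma(s_0)\notin WF_b^{1,s}(u)$ for every $s_0\in I$; the case $s=+\infty$ follows by intersecting the finite-$s$ statements. The main technical obstacle I anticipate, as in the $k>0$ case, is the analysis near the glancing set: there one must use the full b-calculus, and in particular the indicial family \eqref{Eq: indicial family} of $P_\Theta$ together with Hypothesis \ref{hypothesis_WF_Theta}, in order to verify that the boundary remainder in the commutator estimate is truly subleading with respect to the elliptic contribution of $A$, so that the construction of the commutant is compatible with the generalised broken bicharacteristic picture and the estimate actually closes uniformly along $\gamma$.
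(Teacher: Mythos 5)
Note first that this paper does not actually prove Theorem \ref{Thm: main theorem k negativo}: it is recalled from \cite{Dappiaggi:2020yxg}, to which the text explicitly defers for the proof, so there is no in-paper argument to compare against. That said, your outline --- positive commutator estimates microlocalised near points of $\dot{\mathcal{N}}$, with the boundary term in the Green identity controlled via the order $k\leq 0$ of $\Theta$ together with the trace estimate \eqref{Prop: Bound_Gamma_Minus}, followed by iteration along generalized broken bicharacteristics --- is precisely the strategy of that reference (itself adapted from Vasy and Gannot--Wrochna), and your diagnosis of why the extra term $WF_b^{-1,s+1}(\Theta u)$ present in Theorem \ref{Thm: main theorem k positivo} disappears for $k\leq 0$ is the correct one. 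As a complete proof your sketch would still need two standard but essential ingredients that it currently omits: the microlocal elliptic estimate showing that $WF_b^{1,s}(u)\setminus WF_b^{-1,s+1}(P_\Theta u)$ is contained in $\dot{\mathcal{N}}$ in the first place, and the regularisation of the commutant's order required to justify the pairings when $u$ is only known a priori to lie in $\mathcal{H}^{1,m}_{loc}(M)$ with $m\leq 0$.
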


\section{Fundamental Solutions}\label{Sec: Fundamental Solutions}

In this section we prove the first of the main results of our work. We start by investigating the existence of fundamental solutions associated to the boundary value problem as in Equation \eqref{Eq: Strong boundary KG}. We shall uncover that a positive answer can be found, though we need to restrict suitably the class of admissible b-$\Psi$DOs $\Theta\in\Psi_b^k(\partial M)$ in comparison to that of Theorem \ref{Thm: main theorem k positivo} and \ref{Thm: main theorem k negativo}. We stress that, from the viewpoint of applications, these additional conditions play a mild r\^{o}le since all scenarios of interest are included in our analysis. 

We recall that the case of Dirichlet boundary condition was already analysed in \cite{Vasy12}, while the generalization to Robin boundary conditions was studied in \cite{War1} and \cite{GaWr18}, that we follow closely. We introduce a cutoff function playing an important r\^{o}le in the following theorems. Consider 
\begin{equation*}
\chi_0(s) =
\begin{cases}
exp(s^{-1}) \ if \ s > 0\\
0 \ \ \ \ \ \ \ \ \ \ \ if \ s \leq 0
\end{cases},
\end{equation*}
and let $\chi_1\in C^\infty(\mathbb{R})$ be such that $\chi_1(s)=0$ for all $s\in (-\infty,0]$ while $\chi_1(s)=1$ if $s\in [1,+\infty)$. For any but fixed $\tau_0,\tau_1\in\mathbb{R}$ with $\tau_0<\tau_1$, we call $\chi : (\tau_0,\tau_1) \rightarrow \mathbb{R}$ the smooth function
\begin{equation}
\chi(s)\doteq \chi_0( - \delta^{-1}(s-\tau_1)) \chi_1((s-\tau_0)/\varepsilon),
\end{equation}
where $\delta\gg 1$ while $\varepsilon\in (0,\tau_1-\tau_0)$. Under these hypotheses, calling $\chi^\prime_0=\frac{d\chi_0}{ds}$, it holds that, {\it cf.} \cite{Vasy12} 
\begin{equation} \label{Eq: chi bound}
\chi \leq -\delta^{-1} (\tau_1-\tau_0)^2 \chi^\prime\;\;\textrm{with}\;\;\chi^\prime = -\delta^{-1}\chi_0^\prime(-\delta^{-1}(s-\tau_1)).
\end{equation}
Consider $u_{loc} \in \mathcal{H}^{1,1}(M)$ such that its support lies in $[\tau_0+\varepsilon, \tau_1]\times\Sigma$, {\it cf.} Definition \ref{Def: globally hyperbolic}. As discussed in \cite{GaWr18}, one can use the cutoff function introduced to prove a twisted version of the Poincar\'e inequality proved in \cite[Proposition 2.5]{Vasy12}: 
\begin{equation}\label{Eq: Poincare}
\|(-\chi^\prime)^{1/2} u \|^2_{\mathcal{L}^2(M)} \leq C \|(-\chi^\prime)^{1/2} d_F u \|^2_{\mathcal{L}^2(M)},
\end{equation}
where $d_F$ is the twisted differential as per Equation \eqref{Eq: Twisted Differential}. 

Since we deal with a larger class of boundary conditions than those considered in \cite{Vasy12} and in \cite{GaWr18}, we need to make an additional hypothesis. Recall that, as in the previous sections, we are identifying a pseudodifferential operator on $\partial M$ with its natural extension on $M$, {\it i.e.} constant in $x$, the global boundary function.
As starting point we need a preliminary definition:
\begin{defn}\label{Def: pseudo local in time}
	Let $\Theta\in\Psi^k_b(M)$. We call it {\em local in time} if, for every $u$ in the domain of $\Theta$, $\tau(\textrm{supp}(\Theta u))\subseteq\tau(\textrm{supp}(u))$ where $\tau:\mathbb{R}\times\Sigma\to\mathbb{R}$ is the time coordinate individuated in Theorem \ref{Thm: globally hyperbolic}.
\end{defn}

Recalling \cite[Sec. 6]{Jos99} for the definition of the adjoint of a pseudodifferential operator, we can now formulate the following hypothesis 

\begin{hyp}\label{hypothesis_WF_Theta}
We consider $\Theta \in \Psi^k_b(M)$ with $k \leq 2$, only if it is local in time, see Definition \ref{Def: pseudo local in time}, and if $\Theta=\Theta^*$. 
\end{hyp}

The next step in the analysis of the problem in hand lies in proving the following lemma which generalizes a counterpart discussed in \cite{GaWr18} for the case of Robin boundary conditions. 

\begin{lemma}\label{Lemma: Bound u}
Let $u \in \mathcal{H}^{1,1}_{loc}(M)$ and let $\Theta\in\Psi^k_b(\partial M)$ be such that its canonical extension to $M$ abides to the Hypothesis \ref{hypothesis_WF_Theta}. Then there exists a compact subset $K \subset M$ and a real positive constant $C$ such that 
\begin{equation*}
\| (-\phi^\prime)^{1/2} u \|_{\mathcal{H}^1(K)} \leq C \| P_\Theta u \|_{\mathcal{H}^{-1,1}(K)},
\end{equation*}
where $\phi=\tau\chi$, $\chi$ being the same as in Equation \eqref{Eq: Poincare}, while $P_\Theta$ is defined in Equation \eqref{Eq: PTheta}.
\end{lemma}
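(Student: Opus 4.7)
The strategy mimics the one developed in \cite{GaWr18} for Robin boundary conditions, extended to accommodate the larger class of boundary operators admitted by Hypothesis \ref{hypothesis_WF_Theta}. I would pair $P_\Theta u$ with the shifted test function $\phi u = \tau\chi\,u$: by Equation \eqref{Eq: PTheta},
\begin{equation*}
\langle P_\Theta u, \phi u\rangle = \mathcal{E}_\Theta(u,\phi u),
\end{equation*}
and Equation \eqref{Eq: Theta Twisted Energy Functional} splits the right-hand side into three pieces: the twisted Dirichlet contribution $-\int_M g(d_Fu, d_F(\phi\bar u))\,d\mu_g$, a lower-order bulk term carrying $S_F$, and a boundary contribution $\int_{\partial M} \Theta\gamma_-u\cdot\gamma_-(\phi\bar u)\,d\mu_h$.

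The core bulk manipulation relies on the Leibniz rule $d_F(\phi\bar u) = \phi\,d_F\bar u + \bar u\,d\phi$, with $d\phi = (\chi+\tau\chi')\,d\tau$, so that the only derivative of $\phi$ entering the estimate is $\chi'$. The timelike character of $d\tau$, combined with the standard integration-by-parts argument and symmetrization in $u$ and $\bar u$, yields a coercive bulk expression of the shape $\|(-\chi')^{1/2}d_Fu\|^2_{\mathcal{L}^2(K)}$; the companion error proportional to $\chi$ (rather than $\chi'$) is dominated by a small multiple of the same quantity via Equation \eqref{Eq: chi bound}, thanks to the smallness afforded by $\delta^{-1}$. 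The twisted Poincar\'e inequality \eqref{Eq: Poincare} then promotes this $d_F$-estimate to a full control of $\|(-\chi')^{1/2}u\|^2_{\mathcal{H}^1(K)}$.

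The main technical obstacle, and the place where Hypothesis \ref{hypothesis_WF_Theta} becomes indispensable, is the boundary term $\int_{\partial M}\Theta\gamma_-u\cdot\gamma_-(\phi\bar u)\,d\mu_h$. Since $\Theta$ is local in time and $\phi = \tau\chi$ is a function of $\tau$ alone, the commutator $[\Theta,\phi]$ on $\partial M$ is a $b$-pseudodifferential operator of order at most $k-1\leq 1$. Using $\Theta=\Theta^*$, the symmetric part of the boundary pairing is real and cancels in the imaginary part of the identity, leaving only a remainder driven by $[\Theta,\phi]\gamma_-u$. This remainder can be dominated via the twisted Sobolev trace bound \eqref{Prop: Bound_Gamma_Minus}, which allows a small multiple of $\|u\|^2_{\mathcal{H}^1(K)}$ to be absorbed on the left at the cost of a larger $\mathcal{L}^2$-term, subsequently controlled by the Poincar\'e estimate.

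Finally, I would close the argument by invoking the duality bound $|\langle P_\Theta u,\phi u\rangle|\leq \|P_\Theta u\|_{\mathcal{H}^{-1,1}(K)}\|\phi u\|_{\mathcal{H}^{1,1}(K)}$ together with the fact that multiplication by the smooth, compactly supported function $\phi$ is bounded on $\mathcal{H}^{1,1}(K)$. After absorbing into the left-hand side all lower-order pieces arising from $S_F$, the commutator $[\Theta,\phi]$, and the $\chi$-versus-$\chi'$ discrepancy, the desired inequality follows. The most delicate step I anticipate is the careful bookkeeping needed to confirm that $[\Theta,\phi]$ really contributes only to lower order despite $\Theta$ being of order up to $2$: this is exactly why the locality-in-time clause in Hypothesis \ref{hypothesis_WF_Theta} is imposed, ensuring that the commutator does not enlarge the time-slab support of $u$ and thereby keeps the estimate confined to the compact set $K$.
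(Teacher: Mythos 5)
There is a genuine gap at the heart of your bulk estimate: the choice of multiplier. You pair $P_\Theta u$ with the \emph{zeroth-order} multiplier $\phi u$, and claim that the Leibniz rule plus ``symmetrization'' produces a coercive term $\|(-\chi')^{1/2}d_Fu\|^2_{\mathcal{L}^2}$. It does not. With that pairing the principal contribution is $-\int_M\phi\,g(d_Fu,d_F\bar u)\,d\mu_g$, which is \emph{indefinite} because $g$ is Lorentzian, while the term actually carrying $\chi'$, namely $-\int_M\bar u\,(\chi+\tau\chi')\,g(d_Fu,d\tau)\,d\mu_g$, is only \emph{linear} in $d_Fu$ and cannot yield a sign-definite quadratic form in the derivatives. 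The paper instead runs a positive-commutator (vector-field multiplier) argument: one evaluates $\langle -i[(V')^*P_\Theta-P_\Theta V']u,u\rangle=2\,\mathrm{Re}\langle P_\Theta u,V'u\rangle$ with the \emph{first-order} multiplier $V'=FVF^{-1}$, $V=-\phi\,\nabla_{\widehat g}\tau$. The resulting quadratic form is the stress-energy tensor $T(W,\nabla_{\widehat g}\phi)$ contracted against $Q_iu\,\overline{Q_ju}$, and it is definite precisely because both $\nabla_{\widehat g}\tau$ and $\nabla_{\widehat g}\phi=\chi'\nabla_{\widehat g}\tau$ are timelike with opposite time orientations; this is the only source of the control on $\|(-\phi')^{1/2}d_Fu\|^2_{\mathcal{L}^2}$, after which the Poincar\'e inequality \eqref{Eq: Poincare} and the $\delta^{-1}$-smallness from \eqref{Eq: chi bound} absorb the error terms. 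Without this step your estimate does not close.

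Your treatment of the boundary term also misplaces the hypotheses. Self-adjointness of $\Theta$ is used to make $i[(V')^*P_\Theta-P_\Theta V']$ formally self-adjoint, so that the commutator identity with $2\,\mathrm{Re}\langle P_\Theta u,V'u\rangle$ holds at all; it is not used to cancel a ``symmetric part in the imaginary part'' (you never take imaginary parts). Moreover, the delicate point for $\Theta$ of order up to $2$ is not the commutator $[\Theta,\phi]$ (which is of order $k-1$ by symbolic calculus alone, independently of locality in time), but the loss of derivatives in pairing $\Theta\gamma_-u$ with $\gamma_-V'u$ when only $\mathcal{H}^{1,1}$ control of $u$ is available. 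The paper handles this by decomposing $\Theta=\sum_iQ_i\Theta_i+\Theta'$ into generators of $\mathbf{Diff}^1_\nu$ times order-zero operators (iterated once more for order $2$), and then combining the trace bound \eqref{Prop: Bound_Gamma_Minus} with the boundedness of order-zero $b$-$\Psi$DOs. Locality in time enters to preserve the time-slab support, as you note, but it does not substitute for this derivative bookkeeping.
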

\begin{proof}
The proof is a generalization of those in \cite{Vasy12} and \cite{GaWr18} to the case of boundary conditions encoded by pseudodifferential operators. Therefore we shall sketch the common part of the proof, focusing on the terms introduced by the boundary conditions. Adopting the same conventions as at the beginning of the section, assume that $supp(u) \subset [\tau_0+\varepsilon, \tau_1]\times\Sigma$. We start by computing a twisted version of the energy form considered in \cite{Vasy12}. Consider $\langle -i [(V^\prime)^* P_\Theta - P_\Theta V^\prime]u,u \rangle$, with $V^\prime = F V F^{-1} \in \textit{Diff}_b^{ \ 1}(M)$ and $V \in \mathcal{V}_b(M)$ with compact support. Note that, since $\Theta$ is  self-adjoint, {\it i.e.}, $\Theta=\Theta^*$, then $i [(V^\prime)^* P_\Theta - P_\Theta V^\prime]$ is a second order formally self-adjoint operator, the purpose of $V^{\prime*}$ being to remove zeroth order terms. Let $V= - \phi W$ with $W=\bigtriangledown_{\widehat{g}}\tau$. It belongs to $\mathcal{V}_b(X)$ because $\widehat{g}(dx,dt)=0$. A direct computation shows that 
\begin{equation}\label{Eq: Commutator P_Theta V^prime}
\begin{split}
\langle -i [(V^\prime)^* P_\Theta - P_\Theta V^\prime]u,u \rangle = 2 Re \langle P_\Theta u, V^\prime u \rangle = \\
=  2 Re \mathcal{E}_0(u, V^\prime u)+ 2 Re \langle S_F u, V^\prime u \rangle + 2 Re \langle \Theta \gamma_- u, \gamma_- V^\prime u\rangle,
\end{split}
\end{equation}
where $\mathcal{E}_0$ is the twisted Dirichlet energy form, {\it cf.} Equation \eqref{Eq: Twisted Dirichlet Form}, $S_F$ is defined in Section \ref{Sec: Twisted Energy Form}, while $\gamma_+$ and $\gamma_-$ are the trace maps introduced in Theorem \ref{Thm: gamma-} and in Lemma \ref{Lem: gamma+}. We analyze each term in the above sum separately. Starting form the first one and proceeding as in \cite{GaWr18}, we rewrite
\begin{equation*}
2 Re \mathcal{E}_0(u, V^\prime u) = \langle B^{ij} Q_i u, Q_j u\rangle,
\end{equation*}
where $Q_i$, $i=1,\dots,n$ is a generating set of $\textbf{Diff}^1_\nu(M)$, while the symmetric tensor $B$ is 
\begin{equation}\label{eq-tensor_B}
\begin{split}
B = - (\phi\cdot div_{\widehat{g}}W + 2 F\phi V(F^{-1})+ (n-2)\phi x^{-1}W(x) ) \widehat{g}^{-1} + \\
+ \phi\mathcal{L}_W \widehat{g}^{-1} + 2 T(W,\bigtriangledown_{\widehat{g}} \phi).
\end{split}
\end{equation}
Here $T(W,\bigtriangledown_{\widehat{g}} \phi)$ is the stress-energy tensor, with respect to $\widehat{g}$, see Definition \ref{Def: asymptotically AdS}, of a scalar field associated with $W$ and $\bigtriangledown_{\widehat{g}} \phi$, that is, denoting with $\odot$ the symmetric tensor product,
\begin{equation}
T(W,\bigtriangledown_{\widehat{g}} \phi) = ( \bigtriangledown_{\widehat{g}} \phi ) \odot W  - \frac{1}{2}\widehat{g}(\bigtriangledown_{\widehat{g}} \phi,W)\cdot \widehat{g}^{-1}.
\end{equation}
Focusing on this term and using that $\bigtriangledown_{\widehat{g}} \phi = \chi^\prime \bigtriangledown_{\widehat{g}} \tau$, a direct computation yields:
\begin{equation}
T_{\widehat{g}}(W,\bigtriangledown_{\widehat{g}} \phi) = \frac{1}{2} (\chi^\prime \circ\tau) \big[2 (\bigtriangledown_{\widehat{g}}\tau) \otimes (\bigtriangledown_{\widehat{g}}\tau) - \widehat{g}(\bigtriangledown_{\widehat{g}}\tau,\bigtriangledown_{\widehat{g}}\tau)\cdot \widehat{g}^{-1}\big].
\end{equation}
Since $\bigtriangledown_{\widehat{g}} \phi$ and $\bigtriangledown_{\widehat{g}}\tau$ are respectively past- and future-pointing timelike vectors, then $T_{\widehat{g}}(W,\bigtriangledown_{\widehat{g}} \phi)$ is negative definite.
Hence we can rewrite Equation \eqref{Eq: Commutator P_Theta V^prime} as
\begin{equation}\label{Eq:Stress-Energy=Commutator}
\begin{split}
\langle -T^{ij}_{\widehat{g}}(W,\bigtriangledown_{\widehat{g}} \phi) Q_i u, Q_j u \rangle = 
\langle -i [(V^\prime)^* P_\Theta - P_\Theta V^\prime]u,u \rangle + 2 Re \mathcal{E}_0(K^{ij} Q_i u,Q_j u)+ \\
+ 2 Re \langle S_F u, V^\prime u \rangle + 2 Re \langle \Theta \gamma_- u, \gamma_- V^\prime u\rangle,
\end{split}
\end{equation}
with
\begin{equation*}
K = - (F\phi V(F^{-1})+ (n-2)\phi x^{-1}W(x) ) \widehat{g}^{-1} + \phi\mathcal{L}_W \widehat{g}^{-1}.
\end{equation*}
Since $-T_{\widehat{g}}(W,\bigtriangledown_{\widehat{g}} \phi)^{ij}$ is positive definite, then $\mathcal{Q}(u,u)\doteq\langle -T_{\widehat{g}}(W,\bigtriangledown_{\widehat{g}} \phi)^{ij} Q_i u, Q_j u \rangle\geq 0$. This can be seen by direct inspection from the explicit form 
\begin{equation}\label{Eq: Dirichlet Form H}
\begin{split}
\mathcal{Q}(u,u) = \int_M \phi^\prime \left( (\bigtriangledown_{\widehat{g}}\tau)^i (\bigtriangledown_{\widehat{g}}\tau)^j - \dfrac{1}{2} \widehat{g}((\bigtriangledown_{\widehat{g}}\tau)^i(\bigtriangledown_{\widehat{g}}\tau)^j)\right) Q_i u \ \overline{Q_j u} \ x^2 d\mu_g \\ 
= \int_M  H( (-\phi^\prime)^{\/2} d_F u, (-\phi^\prime)^{1/2} d_F \overline{u}) x^2 d\mu_g,
\end{split}
\end{equation}
where $H$ is the sesquilinear pairing between $1$-forms induced by the metric. Focusing then on the term $\langle K^{ij} Q_i u, Q_j u \rangle$, we observe that, as a consequence of our choice for the functions $f$ and $W$, we have  $V(x) = \widehat{g}(\bigtriangledown_{\widehat{g}}\tau, \bigtriangledown_{\widehat{g}} x) = 0$ on $\partial M$. In addition it holds that $x^{-1}W(x)= \mathcal{O}(1)$ near $\partial M$, and $\mathcal{L}_V \widehat{g}^{-1} = 2 \bigtriangledown_{\widehat{g}} ( \bigtriangledown_{\widehat{g}}\tau)= 2 \widehat{\Gamma}^{i}_{\tau\tau}\partial_i$.
These observations allow us to establish the following bound, {\it cf.} \cite{Vasy12} and \cite{GaWr18}:
\begin{equation}\label{Eq: Bound Rij}
|\langle K^{ij} Q_i u, Q_j u \rangle| \leq C \| \phi^{1/2} d_F u  \|_{\mathcal{L}^2(M)} \leq C \delta^{-1} (\tau_1-\tau_0)^2 \| (-\phi^\prime)^{1/2} d_F u \|^2_{\mathcal{L}^2(M)},
\end{equation}
with $C$ a suitable, positive constant. Now we focus on establishing a bound for the terms on the right hand side of Equation \eqref{Eq:Stress-Energy=Commutator}.
We estimate the first one as follows:
\begin{gather}\label{Eq: Commutator Bound}
%\begin{split}
|\langle -i [(V^\prime)^* P_\Theta - P_\Theta V^\prime]u,u \rangle| \leq\notag \\ 
C \left( \| \phi^{1/2} FWF^{-1}  P_\Theta u \|_{\dot{\mathcal{H}}^{-1}(M)}^2 + \|\phi^{1/2} u \|_{\mathcal{H}^1(M)}^2 \right) 
+ C \left( \| \phi^{1/2} P_\Theta u \|_{\mathcal{L}^{2}(M)}^2 + \|\phi^{1/2}  FWF^{-1} u \|_{\mathcal{L}^2(M)}^2 \right) \leq\notag \\ 
\leq C \big( \| FWF^{-1} P_\Theta u \|_{\dot{\mathcal{H}}^{-1}(M)}^2 + \delta^{-1}(\tau_1-\tau_0)^2\| (-\phi^\prime)^{1/2} u \|_{\mathcal{H}^1(M)}^2 +\notag\\
+\| P_\Theta u \|_{\mathcal{L}^{2}(M)}^2 + \delta^{-1}(\tau_1-\tau_0)^2\| (-\phi^\prime)^{1/2} FWF^{-1} u \|_{\mathcal{L}^2(M)}^2  \Big),
%\end{split}
\end{gather}
where in the last inequality we used Equation \eqref{Eq: chi bound}.
As for the second term in Equation \eqref{Eq:Stress-Energy=Commutator}, using that $S_F\in x^2L^\infty(M)$, we establish the bound
\begin{equation*}
2 |Re \langle S_F u, V^\prime u \rangle| \leq \widetilde{C} \left(\| \phi^{1/2} \ u \|^2_{\mathcal{L}^2(M)} + \| \phi^{1/2} \ d_F u \|^2_{\mathcal{L}^2(M)}  \right),
\end{equation*}
for a suitable constant $\widetilde{C}>0$. Using Equation \eqref{Eq: chi bound} and the Poincar\'e inequality, this last bound becomes
\begin{equation}\label{Eq: Twisting Function Bound}
2 |Re \langle S_F u, V^\prime u \rangle| \leq C \delta^{-1}(\tau_1-\tau_0)^2 \|(-\phi^\prime)^{1/2} d_F u \|^2_{\mathcal{L}^2(M)}.
\end{equation}
At last we give a give a bound for the last term in Equation \eqref{Eq: Commutator P_Theta V^prime}, containing the pseudodifferential operator $\Theta$ which implements the 
boundary conditions. Recalling Hypothesis \ref{hypothesis_WF_Theta}, it is convenient to consider the following three cases separately 
\begin{itemize}
\item[a)]  $\Theta \in \Psi^k_b(\partial M)$ with $k \leq 1$,
%\item[b)] $\Theta \in \Psi^k_b(\partial M)$, with $k \leq 0$,
\item[b)] $\Theta \in \Psi^k_b(\partial M)$ with $1<k\leq 2$.
\end{itemize}
Now we give a bound case by case.
\begin{itemize}
\item[a)] It suffices to focus on $\Theta \in \Psi^1_b(\partial M)$ recalling that, for $k<1$, $\Psi^k_b(\partial M) \subset \Psi^1_b(\partial M)$.
If with a slight abuse of notation we denote with $\Theta$ both the operator on the boundary and its trivial extension to the whole manifold, we can write 
$$\langle \Theta \gamma_- u, \gamma_- V^\prime u \rangle = \langle \widehat{N}(\Theta)(-i\nu_-) \gamma_- u, \gamma_ -V^\prime u \rangle =  \langle \gamma_-  \Theta u, \gamma_- V^\prime u \rangle,$$
where $\widehat{N}(\Theta)(-i\nu_-)$ is the indicial family as in Equation \eqref{Eq: indicial family}. 
We recall that any $A \in \Psi^s_b(\partial M)$, $s\in\mathbb{N}$, can be decomposed as $\sum\limits_{i=1}^n Q_i A_i + B$, with $A_i,B \in \Psi^{s-1}_b(\partial M)$, while $Q_i$, $i=1,\dots,n$ is a generating set of $\mathbf{Diff}^1_\nu(M)$. Hence we can rewrite $\Theta$ as
$$\Theta=\sum_i Q_i \Theta_i + \Theta^\prime = \sum_i \left( \Theta_i Q_i + [Q_i,\Theta_i] \right) + \Theta^\prime,$$
where $\Theta_i,\Theta^\prime$ and $[Q_i,\Theta_i]$ are in $\Psi^0(\partial M)$.
Therefore 
\begin{equation*}
|  \langle \gamma_-  \Theta u, \gamma_- V^\prime u \rangle | \leq | \langle \gamma_- \left( \sum_i \Theta_i Q_i  u \right), \gamma_- V^\prime u \rangle | +  | \langle \gamma_- \left( \left( [Q_i,\Theta_i] + \Theta^\prime \right)  u \right), \gamma_- V^\prime u \rangle |.
\end{equation*}
To begin with, we focus on the first term on the right hand side of this inequality. Using Equations \eqref{Prop: Bound_Gamma_Minus} and \eqref{Eq: chi bound} together with the Poincar\'e inequality \eqref{Eq: Poincare} and Lemma \ref{Lemma: PDO zero order bounded},
\begin{equation*}
\begin{split}
| \langle \gamma_- \left( \sum_i \Theta_i Q_i  u \right), \gamma_- V^\prime u \rangle | \leq \varepsilon \left( \sum_i \| 	\phi^{1/2} \Theta_i   Q_i u \|_{\mathcal{H}^1(M)}^2 + \|\phi^{1/2}  FWF^{-1} u\|_{\mathcal{H}^1(M)}^2 \right) + \\
+ C_\varepsilon \left( \sum_i \ \| \phi^{1/2}  \ Q_i u \|_{\mathcal{L}^2(M)}^2 + \| \phi^{1/2} FWF^{-1} u \|_{\mathcal{L}^2(M)}^2	\right) \leq C_\varepsilon \delta^{-1}(\tau_1-\tau_0)^2 \| (-\phi^\prime)^{1/2} d_F u \|^2_{\mathcal{L}^{2}(M)},
\end{split}
\end{equation*}
for a suitable constant $C_\varepsilon>0$.
As for the second term, since $u \in \mathcal{H}^{1,1}(M)$ we can proceed as above using that the operator $\Theta^\prime+[Q_i,\Theta_i]$ is of order $0$ and we can conclude that
\begin{equation*}
\left| \langle \gamma_- \left( \left( [Q_i,\Theta_i] + \Theta^\prime \right)  u \right), \gamma_- V^\prime u \rangle \right| \leq \widetilde{C}_\varepsilon \|  \phi^{1/2}  \ u \|_{\mathcal{H}^1(M)}^2 \leq C_\varepsilon \delta^{-1}(\tau_1-\tau_0)^2 \|  (-\phi^\prime)^{1/2} d_F u \|_{\mathcal{L}^{2}(M)}^2,
\end{equation*}
for suitable positive constants $C_\varepsilon$ and $\widetilde{C}_\varepsilon$.
Therefore, it holds a bound of the form
$$|Re \langle \Theta \gamma_- u, \gamma_- V^\prime u\rangle| \leq C^\prime_\epsilon \delta^{-1}(\tau_1-\tau_0)^2 \|  (-\phi^\prime)^{1/2} d_F u \|_{\mathcal{L}^{2}(M)}^2. $$

\item[b)] Since $\Psi^k_b(\partial M)\subset\Psi^{k^\prime}_b(\partial M)$ if $k<k^\prime$, it is enough to consider $\Theta\in\Psi^2_b(\partial M)$ and to observe that, we can decompose $\Theta$ as 
$$\Theta=\sum\limits_{i=1}^nQ_i\left(\sum\limits_{j=1}^n Q_j A_{ij}\right)+B_i,$$
where $B_i\in\Psi^1_b(\partial M)$ while $A_{ij}\in\Psi^0_b(\partial M)$. At this point one can apply twice consecutively  the same reasoning as in item a) to draw the sought conclusion. 
\end{itemize}

Finally, considering Equation \eqref{Eq:Stress-Energy=Commutator} and collecting all bounds we proved, we obtain
\begin{equation}
\langle -T_{\widehat{g}}^{ij}(W,\bigtriangledown_{\widehat{g}} \phi) Q_i u, Q_j u \rangle \leq C \Big( \| P_\Theta u \|_{\dot{\mathcal{H}}^{-1,1}(M)(}^2 + C \delta^{-1} (\tau_1-\tau_0)^2  \| (-\phi^\prime)^{1/2} d_F u \|_{\mathcal{L}^2(M)}^2.  
\end{equation}
Since the inner product $H$ defined by the left hand side of Equation \eqref{Eq: Dirichlet Form H} is positive definite, then for $\delta$ large enough
$$ \langle -T^{ij} _{\widehat{g}}(W,\bigtriangledown_{\widehat{g}} \phi)Q_i u, Q_j u \rangle - C \delta^{-1} (\tau_1-\tau_0)^2  \| (-\phi^\prime)^{1/2} d_F u \|_{\mathcal{L}^2(M)}^2  \geq 0,$$ 
and the associated Dirichlet form $\widetilde{\mathcal{Q}}$ defined as
\begin{equation}
\widetilde{\mathcal{Q}}(u,u) = \int_M  \left[ H( (-\phi^\prime)^{\/2} d_F u, (-\phi^\prime)^{1/2} d_F \overline{u}) - C \delta^{-1} (\tau_1-\tau_0)^2 |(-\phi^\prime)^{1/2} d_F u|^2  \right] x^2 d\mu_g,
\end{equation}
bounds $\| (-\phi^\prime)^{1/2} d_F u \|^2_{\mathcal{L}^2(M)}$.
We conclude the proof by observing that, once we have an estimate for $\| (-\phi^\prime)^{1/2} d_F u \|^2_{\mathcal{L}^2(M)}$,  with the Poincar\'e inequality we can also bound $\| (-\phi^\prime)^{1/2} u \|_{\mathcal{L}^2(M)}$. Therefore, considering the support of $\chi$ and $u$, there exists a compact subset $K \subset M$ such that
\begin{equation}
\| (-\phi^\prime)^{1/2} u \|_{\mathcal{L}^2(M)} \leq C \|(-\phi^\prime)^{1/2} P_\Theta u \|_{\dot{\mathcal{H}}^{-1,1}(K)},
\end{equation} 
from which the sought thesis descends.

\end{proof}

\begin{rem}
The case with $\Theta \in \Psi^k(M)$ of order $k \leq 0$, can also be seen as a corollary of the well-posedness result of \cite{GaWr18}.
\end{rem}

The following two statements guarantee uniqueness and existence of the solutions for the Klein-Gordon equation associated to the operator $P_\Theta$ individuated in Equation \eqref{Eq: PTheta}. Mutatis mutandis, since we assume that $\Theta$ is local in time, the proof of the first statement is identical to the counterpart in \cite{Vasy12} and therefore we omit it.

\begin{corollary}\label{Cor: uniqueness of the solution}
Let $M$ be a globally hyperbolic, asymptotically anti-de Sitter spacetime, {\it cf.} Definition \ref{Def: asymptotically AdS} and let $f \in \dot{\mathcal{H}}^{-1,1}(M)$ be vanishing whenever $\tau<\tau_0$, $\tau_0\in\mathbb{R}$. Suppose in addition that $\Theta$ abides to the Hypothesis \ref{hypothesis_WF_Theta}. Then there exists at most one $u \in \mathcal{H}^1_0(M)$ such that $supp(u) \subset \{ q \in M \ | \ \tau(q)\geq \tau_0 \}$ and it is a solution of $P_\Theta u=f$
\end{corollary}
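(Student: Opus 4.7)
By linearity, uniqueness reduces to showing that any $u\in\mathcal{H}^1_0(M)$ with $P_\Theta u=0$ and $\mathrm{supp}(u)\subset\{\tau\geq\tau_0\}$ must vanish. Since $u\in\mathcal{H}^1_0(M)$ has compact support, it suffices to prove that $u\equiv 0$ on every slab $S_{\tau_0,\tau_1}\doteq\{\tau_0\leq\tau\leq\tau_1\}$ for arbitrary $\tau_1>\tau_0$; iterating then yields the thesis. The natural tool is the energy estimate established in Lemma \ref{Lemma: Bound u}, whose right-hand side vanishes once $P_\Theta u=0$, forcing $\|(-\phi^\prime)^{1/2}u\|_{\mathcal{H}^1(K)}=0$.

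The plan is therefore as follows. First, I would pick $\tau_1>\tau_0$ inside the time support of $u$ and choose the cutoff $\chi$ as in Equation \eqref{Eq: chi bound}, built from parameters $\varepsilon$ small and $\delta$ large, so that $\chi$ is supported in $[\tau_0,\tau_1)$ and $-\phi^\prime>0$ on an open subinterval $I\subset(\tau_0,\tau_1)$. Since $\mathrm{supp}(u)\subset\{\tau\geq\tau_0\}$, the function $u$ automatically has the support property required at the beginning of the argument leading to Lemma \ref{Lemma: Bound u}. Next, to meet the regularity assumption $u\in\mathcal{H}^{1,1}_{loc}(M)$ demanded by Lemma \ref{Lemma: Bound u}, I would use the propagation of singularities statement of Theorem \ref{Thm: main theorem k positivo}/\ref{Thm: main theorem k negativo}: since $P_\Theta u=0$ has empty wavefront set and $u$ vanishes on the open set $\{\tau<\tau_0\}$, the maximally extended GBBs passing through $\mathrm{supp}(u)$ carry no singularities, so $u\in\mathcal{H}^{1,\infty}_{loc}(M)\subset\mathcal{H}^{1,1}_{loc}(M)$ within the slab under consideration. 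An alternative route is a density/approximation argument with $P_\Theta u=0$ preserved in the limit.

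Having arranged the hypotheses, applying Lemma \ref{Lemma: Bound u} directly gives
\begin{equation*}
\|(-\phi^\prime)^{1/2}u\|_{\mathcal{H}^1(K)}\leq C\|P_\Theta u\|_{\dot{\mathcal{H}}^{-1,1}(K)}=0,
\end{equation*}
hence $u\equiv 0$ on $\{\phi^\prime<0\}$, which contains the slab $I\times\Sigma$. A short induction on a sequence $\tau_0<\tau_0^{(1)}<\tau_0^{(2)}<\dots$ obtained by sliding $I$ forward (at each stage $u$ vanishes up to time $\tau_0^{(k)}$, so the preceding argument applies with $\tau_0$ replaced by $\tau_0^{(k)}$) propagates the vanishing across the whole half-space $\{\tau\geq\tau_0\}$, which is all of $\mathrm{supp}(u)$ modulo the complementary half.

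The delicate point, and the reason Hypothesis \ref{hypothesis_WF_Theta} appears, is the locality in time of $\Theta$: at each step of the iteration one must ensure that when $u$ is replaced by a time-restricted truncation, the boundary contribution $\langle\Theta\gamma_- u,\gamma_- V^\prime u\rangle$ in the energy identity \eqref{Eq: Commutator P_Theta V^prime} remains controlled by $u$ on the same slab. Precisely because $\tau(\mathrm{supp}(\Theta v))\subseteq\tau(\mathrm{supp}(v))$ for $\Theta$ local in time, and because $\Theta=\Theta^*$ (which made the bounds in Lemma \ref{Lemma: Bound u} work), no information from outside the slab leaks in through the boundary term. This is the main technical obstacle to streamline; once it is settled, the rest is an essentially verbatim transcription of Vasy's energy argument in \cite{Vasy12}, which is why the authors suppress the details.
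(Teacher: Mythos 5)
Your strategy is exactly the one the paper intends: the authors omit the proof precisely because it is the Vasy-style energy argument you describe (reduce to $P_\Theta u=0$ by linearity, apply the estimate of Lemma \ref{Lemma: Bound u} with vanishing right-hand side on a slab where $-\phi^\prime>0$, then slide the slab forward), with Hypothesis \ref{hypothesis_WF_Theta} entering exactly where you say it does, namely to keep the boundary term $\langle\Theta\gamma_-u,\gamma_-V^\prime u\rangle$ confined to the slab and symmetric in the commutator identity.

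The one step that does not go through as written is your regularity upgrade from $u\in\mathcal{H}^1_0(M)$ to the $\mathcal{H}^{1,1}_{loc}(M)$ demanded by Lemma \ref{Lemma: Bound u}. For $\Theta\in\Psi^k_b(\partial M)$ with $0<k\leq 2$, Theorem \ref{Thm: main theorem k positivo} only controls $WF_b^{1,s}(u)\setminus\bigl(WF_b^{-1,s+1}(P_\Theta u)\cup WF_b^{-1,s+1}(\Theta u)\bigr)$, and the set $WF_b^{-1,s+1}(\Theta u)$ is not empty merely because $P_\Theta u=0$: the corollary assumes Hypothesis \ref{hypothesis_WF_Theta}, not physical admissibility in the sense of Definition \ref{Def: admissible boundary conditions}, so you cannot discard that term and conclude $u\in\mathcal{H}^{1,\infty}_{loc}$. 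The correct repair is your ``alternative route'': run the commutator estimate first for regularized $\Lambda_\gamma u$ with $\{\Lambda_\gamma\}$ a family in $\Psi^{-1}_b(M)$ bounded in $\Psi^0_b(M)$ and converging to the identity (as in Vasy's proof of the underlying energy estimate), using Lemma \ref{Lemma: PDO zero order bounded} and the locality in time of $\Theta$ to pass to the limit; this yields the estimate, and hence the vanishing of $u$ on the slab, at bare $\mathcal{H}^1$ regularity. With that substitution the iteration you describe is complete (note also that the lower cutoff $\chi_1$ must be centred slightly below $\tau_0$ so that $\mathrm{supp}(u)\subset\{\tau\geq\tau_0\}$ lies where $\chi_1\equiv 1$, a standard adjustment).
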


\noindent At the same time the following statement holds true.

\begin{lemma}\label{Lem: existence of the solution}
Let $M$ be a globally hyperbolic, asymptotically anti-de Sitter spacetime, {\it cf.} Definition \ref{Def: asymptotically AdS} and let $f \in \dot{\mathcal{H}}^{-1,1}(M)$ be vanishing whenever $\tau<\tau_0$, $\tau_0\in\mathbb{R}$. Then there exists $u \in \mathcal{H}^{1,-1}(M)$ of the problem $P_\Theta u = f$, {\it cf.} Equation \eqref{Eq: PTheta}, such that $\tau(\textrm{supp}(u)) \geq \tau_0$.
\end{lemma}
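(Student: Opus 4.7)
The plan is to prove existence by a Hahn–Banach/duality argument built on the a priori estimate of Lemma \ref{Lemma: Bound u}, combined with the self-adjointness of $P_\Theta$ guaranteed by Hypothesis \ref{hypothesis_WF_Theta} ($\Theta=\Theta^*$). The symmetry $\mathcal{E}_\Theta(u,v)=\overline{\mathcal{E}_\Theta(v,u)}$ makes $P_\Theta$ formally self-adjoint with respect to the natural duality pairing between $\mathcal{H}^{1,m}_{loc}(M)$ and $\dot{\mathcal{H}}^{-1,-m}_0(M)$, and this is what allows the $P_\Theta u=f$ problem to be recast as a problem on the \emph{dual} of the range of $P_\Theta$ acting on backward-supported test functions.

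First, I would derive the time-reversed analogue of Lemma \ref{Lemma: Bound u}. Since the construction of $\chi$, of the commutant vector field $V=-\phi W$, and of the stress-energy tensor $T_{\widehat g}$ in that proof depends only on choosing a timelike gradient, repeating the argument with $\tau$ replaced by $-\tau$ (equivalently, flipping the roles of $\chi_0$ and $\chi_1$) yields
\begin{equation*}
\|(-\phi^\prime)^{1/2} w\|_{\mathcal{H}^1(K)} \leq C\,\|P_\Theta w\|_{\dot{\mathcal{H}}^{-1,1}(K)},
\end{equation*}
for every $w\in\mathcal{H}^{1,1}_{loc}(M)$ with $\mathrm{supp}(w)\subset[\tau_0',\tau_1-\varepsilon]\times\Sigma$. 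In tandem, Corollary \ref{Cor: uniqueness of the solution} applied to $-\tau$ gives backward uniqueness: $P_\Theta w=0$ with support bounded above by $\tau_1$ forces $w=0$.

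Next, for an arbitrary $\tau_1>\tau_0$, I would introduce the test space $\mathcal{T}_{\tau_1}=\{w\in\mathcal{H}^{1,1}_0(M)\mid\mathrm{supp}(w)\subset\{\tau\leq\tau_1\}\}$ and define a linear functional $L$ on the range $P_\Theta(\mathcal{T}_{\tau_1})\subset\dot{\mathcal{H}}^{-1,1}(M)$ by $L(P_\Theta w)=\langle f,w\rangle$. Backward uniqueness makes $L$ well defined, and the backward energy estimate together with the duality of $\dot{\mathcal{H}}^{-1,1}$ with $\mathcal{H}^{1,-1}_0$ yields
\begin{equation*}
|L(P_\Theta w)|\leq \|f\|_{\dot{\mathcal{H}}^{-1,1}(M)}\|w\|_{\mathcal{H}^{1,-1}(K)}\leq C\,\|f\|_{\dot{\mathcal{H}}^{-1,1}(M)}\|P_\Theta w\|_{\dot{\mathcal{H}}^{-1,1}(M)}.
\end{equation*}
Hahn–Banach then extends $L$ to the whole of $\dot{\mathcal{H}}^{-1,1}(M)$ and Riesz representation identifies the extension with some $u_{\tau_1}\in\mathcal{H}^{1,-1}(M)$ satisfying $\langle u_{\tau_1},P_\Theta w\rangle=\langle f,w\rangle$ for all $w\in\mathcal{T}_{\tau_1}$; via the formal self-adjointness of $P_\Theta$ this is the weak form of $P_\Theta u_{\tau_1}=f$.

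To conclude I would glue these local solutions into a single $u\in\mathcal{H}^{1,-1}(M)$: forward uniqueness (Corollary \ref{Cor: uniqueness of the solution}) ensures compatibility of $u_{\tau_1}$ and $u_{\tau_1'}$ on their overlap, so letting $\tau_1\to+\infty$ produces a global weak solution. The support condition is then immediate: testing against any $w\in\dot C^\infty_0(M)$ with $\mathrm{supp}(w)\subset\{\tau<\tau_0\}$ gives $\langle u,P_\Theta w\rangle=\langle f,w\rangle=0$ because $f$ vanishes for $\tau<\tau_0$, and varying $w$ over a dense family yields $u|_{\{\tau<\tau_0\}}=0$. The main obstacle is the bookkeeping for the indices of the twisted Sobolev spaces: one must verify that the range $P_\Theta(\mathcal{T}_{\tau_1})$ is actually \emph{norm-closed} (or at least that the extension of $L$ stays in the stated space) inside $\dot{\mathcal{H}}^{-1,1}(M)$, and that the two-sided estimate marries cleanly with the $\mathcal{H}^{1,-1}$ target regularity. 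This is where care is needed in passing between dotted and undotted spaces, and in justifying the use of the duality pairing on a spacetime with non-compact $\tau$-direction.
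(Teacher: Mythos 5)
Your overall strategy --- an a priori estimate for the (formal) adjoint problem combined with a Hahn--Banach/duality argument, exploiting $\Theta=\Theta^*$ --- is the same as the paper's, which follows Vasy's Prop.~4.15. The genuine gap is in your second step. The backward analogue of Lemma \ref{Lemma: Bound u} is only available, with a controlled constant, for $w$ supported in a time slab of \emph{bounded} width: the constant there is obtained by choosing $\delta$ large compared to $(\tau_1-\tau_0)^2$ (cf.\ Equation \eqref{Eq: chi bound} and the absorption step at the end of that proof), and the weight $(-\phi^\prime)^{1/2}$ degenerates exponentially away from the endpoint of the slab, so converting the weighted bound into an unweighted one on a fixed compact set costs a further slab-dependent constant. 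Your test space $\mathcal{T}_{\tau_1}$ contains functions whose supports reach arbitrarily far into the past, so the single uniform bound $\|w\|_{\mathcal{H}^{1,-1}(K)}\leq C\|P_\Theta w\|_{\dot{\mathcal{H}}^{-1,1}(M)}$ does not follow from the time-reversed Lemma \ref{Lemma: Bound u}; without it $L$ is not bounded and Hahn--Banach does not apply. This is exactly why the paper first proves a \emph{local} version of the statement --- solvability on $\{\tau<\tau_0+\sigma\}$ with $\sigma$ uniform for $\tau_0$ ranging in a compact interval, obtained from the invertibility of $P_\Theta^*$ on functions supported in suitably small compact slabs --- and then glues the local solutions forward in time via a partition of unity along $\tau$, rather than running one global duality argument below a fixed $\tau_1$. (Your worry about norm-closedness of $P_\Theta(\mathcal{T}_{\tau_1})$ is, by contrast, not the issue: Hahn--Banach only requires boundedness of $L$ on the range.)

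A second, smaller gap concerns the support property. From $\langle u,P_\Theta w\rangle=\langle f,w\rangle=0$ for all $w$ supported in $\{\tau<\tau_0\}$ one cannot conclude $u|_{\{\tau<\tau_0\}}=0$ unless one also knows that $P_\Theta$ applied to such test functions has dense range among the relevant distributions supported in the past --- which is itself a solvability statement for the time-reversed problem and hence not free. In the paper the support property is instead built into the construction: each local solution is produced directly with support in $\{\tau\geq\tau_0\}$, using that $f$ vanishes in the past together with Corollary \ref{Cor: uniqueness of the solution}, and the forward gluing preserves this.
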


The proof follows the one given in \cite[Prop. 4.15]{Vasy12}, but we feel worth sketching the main ideas. The first step consists of proving a local version of the lemma, namely that given a compact set $I \subset \mathbb{R}$, there exists $\sigma > 0$ such that for every $\tau_0 \in I$ there exists $u \in \mathcal{H}^{1,-1}(M)$ such that $supp(u)=\{p \in M \ | \ \tau(p) \geq 0\}$ and $P_\Theta u = f$ for $\tau < \tau_0+\sigma$. The main point of this part of the proof consists of applying Lemma \ref{Lemma: Bound u} to ensure that the adjoint of the Klein-Gordon operator, say $P^*_\Theta$, is invertible over the set of smooth functions supported in suitable compact subsets of $M$ -- see \cite[Lem. 4.14]{Vasy12} for further details. With this result in hand, one divides the time direction into sufficiently small intervals $[\tau_j,\tau_{j+1}]$ and uses a partition of unity along the time coordinate to build a global solution for $P_\Theta u = f$.

At last we extend our results for $u \in \mathcal{H}^{1,m}_{loc}(M)$ and for $f \in \dot{\mathcal{H}}^{-1,m+1}_{loc}(M)$. Let us consider $\Theta\in\Psi^k_b(\partial M)$ with $k\leq 0$, the proof for the positive cases being the same. If $m \geq 0$, Lemma \ref{Lemma: Bound u} entails that Equation \eqref{Eq: Strong boundary KG} admits a unique solution lying in $\mathcal{H}^1_{loc}(M)$. By the propagation of singularities theorem, {\it cf.} Theorem \ref{Thm: main theorem k negativo} and using Hypothesis \ref{hypothesis_WF_Theta}, the solution lies in $\mathcal{H}^{1,m}_{loc}(M)$ and the following generalization of the bound in Lemma \ref{Lemma: Bound u} holds true:
$$\| u \|_{\mathcal{H}^{1,m}(M)} \leq C \| f \|_{\dot{\mathcal{H}}^{-1,m+1}(M)}. $$
If $m<0$ we can draw the same conclusion considering, as in \cite[Thm. 8.12]{Vasy12}, 
\begin{equation}
%\begin{cases}
P_\Theta u_j = f_j\\
%\gamma_+ u_j = \Theta \gamma_- u_j
%\end{cases}
\end{equation}
where $f_j \in \dot{\mathcal{H}}^{-1,m+1}(M)$ is sequence converging to $f$ as $j\to\infty$. Each of these equations has a unique solution $u_j\in\mathcal{H}^1(M)$. In addition the propagation of singularities theorem, {\it cf.} Theorem \eqref{Thm: main theorem k negativo} yields the bound
\begin{equation*}
\| u_k - u_j \|_{\mathcal{H}^{1,m}(K)} \leq C \| f_k - f_j \|_{\dot{\mathcal{H}}^{-1,m+1}(L)}
\end{equation*} 
for suitable compact sets $K,L \subset M$ and for every $j,k \in \mathbb{N}$. Since $f_j \rightarrow f$ in $\dot{\mathcal{H}}^{-1,m+1}(L)$, we can conclude that the sequence $u_j$ is converging to $u \in \mathcal{H}^{1,m}(K)$. Considering $f_j$ such that each $f_j$ vanishes if $\{\tau<\tau_0 \}$, one obtains the desired support property of the solution. To conclude this analysis we summarize the final result which combines Corollary \ref{Cor: uniqueness of the solution} and Lemma \ref{Lem: existence of the solution}.

\begin{prop}\label{Prop: existence and uniqueness}
Let $M$ be a globally hyperbolic, asymptotically anti-de Sitter spacetime, {\it cf.} Definition \ref{Def: asymptotically AdS} and let $m,\tau_0 \in \mathbb{R}$ while $f \in \dot{\mathcal{H}}^{-1,m+1}_{loc}(M)$. Assume in addition that $\Theta$ abides to Hypothesis \ref{hypothesis_WF_Theta}. If $f$ vanishes for $\tau<\tau_0$, $\tau_0\in\mathbb{R}$ being arbitrary but fixed, then there exists a unique $u \in \mathcal{H}^{1,m}_{loc}(M)$ such that
\begin{equation}
%\begin{cases}
P_\Theta u = f,
%\gamma_+ u = \Theta \gamma_- u
%\end{cases},
\end{equation}
where $P_\Theta$ is the operator in Equation \eqref{Eq: PTheta}.
%where $\gamma_-$ and $\gamma_+$ are the trace maps introduced respectively in Theorem \ref{Thm: gamma-} and in Lemma \ref{Lem: gamma+}.
\end{prop}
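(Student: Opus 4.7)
The plan is to leverage the two results already obtained immediately above, namely Corollary \ref{Cor: uniqueness of the solution} and Lemma \ref{Lem: existence of the solution}, which jointly settle the regularity level $m=0$ under the stronger assumption $f\in\dot{\mathcal{H}}^{-1,1}_{loc}(M)$, and then to bootstrap the index $m$ upward and downward by exploiting the propagation of singularities theorems \ref{Thm: main theorem k positivo} and \ref{Thm: main theorem k negativo} together with Hypothesis \ref{hypothesis_WF_Theta}. Uniqueness is immediate: if two candidate solutions $u_1,u_2\in\mathcal{H}^{1,m}_{loc}(M)\subset\mathcal{H}^{1}_{loc}(M)$ share the same source and both vanish in $\{\tau<\tau_0\}$, their difference satisfies the hypotheses of Corollary \ref{Cor: uniqueness of the solution} on every relatively compact subset of $M$, and hence must vanish.

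For existence with $m\geq 0$, the inclusion $\dot{\mathcal{H}}^{-1,m+1}_{loc}(M)\hookrightarrow\dot{\mathcal{H}}^{-1,1}_{loc}(M)$ allows one to invoke Lemma \ref{Lem: existence of the solution} directly, producing a solution $u\in\mathcal{H}^{1,-1}_{loc}(M)$ supported in $\{\tau\geq\tau_0\}$. I would then iterate the propagation of singularities statement: since $u\equiv 0$ for $\tau<\tau_0$, the $b$-wavefront set $WF_b^{1,s}(u)$ is empty there, and by Theorems \ref{Thm: main theorem k positivo}--\ref{Thm: main theorem k negativo} it is contained in a union of maximally extended GBBs within the compressed characteristic set $\dot{\mathcal{N}}$, up to $WF_b^{-1,s+1}(P_\Theta u)\cup WF_b^{-1,s+1}(\Theta u)$. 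Hypothesis \ref{hypothesis_WF_Theta} -- specifically the time-locality and self-adjointness of $\Theta$ -- is exactly what is needed to keep the $\Theta u$ contribution under control. Stepping $s$ from $-1$ up to $m$ in unit increments then propagates regularity along every GBB out of the quiescent region, yielding $u\in\mathcal{H}^{1,m}_{loc}(M)$.

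For $m<0$ one instead approximates. I would choose a sequence $(f_j)_{j\in\mathbb{N}}\subset\dot{\mathcal{H}}^{-1,1}_{loc}(M)$, each vanishing in $\{\tau<\tau_0\}$ and converging to $f$ in $\dot{\mathcal{H}}^{-1,m+1}_{loc}(M)$, which can be produced by a convolution/truncation procedure that respects the support condition. Lemma \ref{Lem: existence of the solution} yields solutions $u_j\in\mathcal{H}^{1}_{loc}(M)$ with $\tau(\operatorname{supp} u_j)\geq\tau_0$. The energy estimate of Lemma \ref{Lemma: Bound u}, upgraded via the propagation of singularities bound, then gives, for any compact sets $K\Subset L\Subset M$ and a suitable constant $C=C(K,L)$, the inequality
\begin{equation*}
\|u_j-u_k\|_{\mathcal{H}^{1,m}(K)}\leq C\,\|f_j-f_k\|_{\dot{\mathcal{H}}^{-1,m+1}(L)},
\end{equation*}
so that $(u_j)$ is Cauchy in $\mathcal{H}^{1,m}(K)$ for every such $K$. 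The limit $u\in\mathcal{H}^{1,m}_{loc}(M)$ inherits the support property, and by continuity of the extension $P_\Theta:\mathcal{H}^{1,m}_{loc}(M)\to\dot{\mathcal{H}}^{-1,m}_{loc}(M)$ discussed after Equation \eqref{Eq: PTheta}, it solves $P_\Theta u=f$.

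The main technical obstacle I anticipate is precisely the $m<0$ step, where the bound displayed above is not a direct consequence of Lemma \ref{Lemma: Bound u} (that lemma is phrased at the level $m=1$). To obtain it one has to dualize: run Lemma \ref{Lemma: Bound u} for the adjoint problem $P_\Theta^{*}$, combine with Hypothesis \ref{hypothesis_WF_Theta} (which ensures $\Theta^{*}=\Theta$ so the adjoint problem has the same structure), and use the microlocal propagation result to convert the resulting estimate into one for $u_j-u_k$ in negative-order $b$-Sobolev norms. This is the analogue of the passage from Proposition~4.15 to Theorem~8.12 in \cite{Vasy12}, and handling it correctly in the presence of the pseudodifferential boundary condition $\Theta$ is the delicate point.
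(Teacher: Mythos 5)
Your proposal is correct and follows essentially the same route as the paper: uniqueness from Corollary \ref{Cor: uniqueness of the solution}, existence for $m\geq 0$ from Lemma \ref{Lem: existence of the solution} upgraded by the propagation of singularities theorems together with Hypothesis \ref{hypothesis_WF_Theta}, and for $m<0$ an approximation by more regular sources $f_j$ combined with a Cauchy estimate in $\mathcal{H}^{1,m}(K)$ modelled on the passage to Theorem~8.12 of \cite{Vasy12}. The paper's own argument is in fact terser at the point you flag as delicate (it simply asserts the negative-order bound as a consequence of propagation of singularities), so your added discussion of dualization is, if anything, more careful.
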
 

\noindent We have gathered all ingredients to prove the existence of advanced and retarded fundamental solutions associated to the Klein-Gordon operator $P_\Theta$, {\it cf.} Equation \eqref{Eq: PTheta}. To this end let us define the following notable subspaces of $\mathcal{H}^{k,m}(M)$, $k=0,\pm 1$, $m\in\mathbb{N}\cup\{0\}$: 

\begin{subequations}
\begin{equation}\label{Eq: suspsaces of H1(M) -}
	\mathcal{H}^{k,m}_-(M) = \{ u \in \mathcal{H}^{k,m}(M) \; | \;\exists\tau_-\in\mathbb{R}\;\textrm{such that}\; p\notin\textrm{supp}(u),\;\textrm{if}\, \tau(p)<\tau_-  \},
\end{equation}	
	\begin{equation}
		\label{Eq: suspsaces of H1(M) +}
	\mathcal{H}^{k,m}_+(M) = \{ u \in \mathcal{H}^{k,m}(M) \; | \;\exists\tau_+\in\mathbb{R}\;\textrm{such that}\; p\notin\textrm{supp}(u)\;\textrm{if}\, \tau(p)>\tau_+ \},
\end{equation}
\begin{equation}\label{Eq: timelikecompact H1(M)}
		\mathcal{H}^{k,m}_{tc}(M)\doteq \mathcal{H}^{k,m}_+(M)\cap\mathcal{H}^{k,m}_-(M),
\end{equation}
\end{subequations}	
where the subscript $tc$ stands for {\em timelike compact}. In addition we call
\begin{equation}\label{Eq: timelikecompact H1(M) with boundary condition}
	\mathcal{H}^{1,m}_{\pm,\Theta}(M)\doteq \{u\in H^{1,m}_\pm(M)\;|\; \gamma_+(u)=\Theta\gamma_-(u) \},
\end{equation}
where $\gamma_-,\gamma_+$ are the trace maps introduced in Theorem \ref{Thm: gamma-} and in Lemma \ref{Lem: gamma+}, while $\Theta$ is a pseudodifferential abiding to Hypothesis \ref{hypothesis_WF_Theta}.

Exactly as in \cite{GaWr18} from Lemma \ref{Lemma: Bound u} and from Proposition \ref{Prop: existence and uniqueness}, it descends the following result on the advanced and retarded propagators $G_{\Theta}^\pm$ associated to the Klein-Gordon operator $P_\Theta$, {\it cf.} Equation \eqref{Eq: PTheta}. 

\begin{thm}\label{Thm: Existence_Uniqueness_Propagators}
Let $P_\Theta$ be the Klein-Gordon operator as per Equation \eqref{Eq: PTheta} where $\Theta$ abides to Hypothesis \ref{hypothesis_WF_Theta}. Then there exist unique retarded $(+)$ and advanced $(-)$ propagators, that is continuous operators $G_{\Theta}^\pm:\dot{\mathcal{H}}^{-1,m+1}_\pm(M) \rightarrow \mathcal{H}^{1,m}_\pm(M)$ such that $P_\Theta G_{\Theta}^\pm = \mathbb{I}$ on $\dot{\mathcal{H}}^{-1,m+1}_\pm(M)$ and $G_{\Theta}^\pm P_\Theta = \mathbb{I}$ on $\mathcal{H}^{1,m}_{\pm,\Theta}(M)$. Furthermore, $G_{\Theta}^\pm$ is a continuous map from $\dot{\mathcal{H}}_0^{-1,\infty}(M)$ to $\mathcal{H}_{loc}^{1,\infty}(M)$ where the subscript $0$ indicates that we consider only functions of compact support. 
\end{thm}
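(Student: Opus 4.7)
The plan is to build $G_\Theta^\pm$ as the inverses of $P_\Theta$ restricted to functions with one-sided temporal support, exploiting directly the existence and uniqueness result of Proposition \ref{Prop: existence and uniqueness}. Concretely, for any $f\in\dot{\mathcal{H}}^{-1,m+1}_+(M)$ there exists, by definition, $\tau_0\in\mathbb{R}$ such that $f$ vanishes for $\tau<\tau_0$; Proposition \ref{Prop: existence and uniqueness} produces then a unique $u\in\mathcal{H}^{1,m}_{loc}(M)$ with $P_\Theta u=f$ and $\tau(\mathrm{supp}(u))\geq\tau_0$, so that $u\in\mathcal{H}^{1,m}_+(M)$. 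Setting $G_\Theta^+ f\doteq u$ defines the retarded propagator. The advanced propagator $G_\Theta^-$ is constructed analogously: since Hypothesis \ref{hypothesis_WF_Theta} requires $\Theta$ to be local in time and self-adjoint, the argument leading to Lemma \ref{Lemma: Bound u} and Lemma \ref{Lem: existence of the solution} is invariant under reversal of the time orientation $\tau\mapsto -\tau$, which yields existence and uniqueness of $u\in\mathcal{H}^{1,m}_-(M)$ solving $P_\Theta u=f$ whenever $f\in\dot{\mathcal{H}}^{-1,m+1}_-(M)$.

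Next I would verify the two identities. The relation $P_\Theta G_\Theta^\pm=\mathbb{I}$ on $\dot{\mathcal{H}}^{-1,m+1}_\pm(M)$ is immediate from the definition. For $G_\Theta^\pm P_\Theta=\mathbb{I}$ on $\mathcal{H}^{1,m}_{\pm,\Theta}(M)$, let $u\in\mathcal{H}^{1,m}_{\pm,\Theta}(M)$: its boundary condition $\gamma_+u=\Theta\gamma_-u$ ensures that $P_\Theta u$, viewed through the weak formulation in Equation \eqref{Eq: PTheta} together with the Green's formula \eqref{Eq: Green Formula}, belongs to $\dot{\mathcal{H}}^{-1,m+1}_\pm(M)$ with the correct timelike-support property. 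Then $G_\Theta^\pm P_\Theta u$ and $u$ are both elements of $\mathcal{H}^{1,m}_\pm(M)$ solving $P_\Theta v=P_\Theta u$ with one-sided support, so the uniqueness part of Proposition \ref{Prop: existence and uniqueness} (tracing back to Corollary \ref{Cor: uniqueness of the solution}) forces them to coincide.

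Continuity of $G_\Theta^\pm:\dot{\mathcal{H}}^{-1,m+1}_\pm(M)\to\mathcal{H}^{1,m}_\pm(M)$ is extracted from the quantitative estimate
$$\|u\|_{\mathcal{H}^{1,m}(K)}\leq C\,\|f\|_{\dot{\mathcal{H}}^{-1,m+1}(L)},$$
which is exactly the consequence of Lemma \ref{Lemma: Bound u} combined with the propagation of singularities Theorem \ref{Thm: main theorem k negativo} (or Theorem \ref{Thm: main theorem k positivo}) used to promote the estimate from $m=0$ to generic $m\in\mathbb{R}$, as already carried out in the discussion preceding Proposition \ref{Prop: existence and uniqueness}. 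Since the spaces $\mathcal{H}^{1,m}_\pm(M)$ and $\dot{\mathcal{H}}^{-1,m+1}_\pm(M)$ carry their natural local topologies, this bound translates into sequential continuity of $G_\Theta^\pm$. The last statement about $G_\Theta^\pm:\dot{\mathcal{H}}_0^{-1,\infty}(M)\to\mathcal{H}^{1,\infty}_{loc}(M)$ then follows by taking the intersection over $m$: if $f\in\dot{\mathcal{H}}_0^{-1,\infty}(M)$ then $f\in\dot{\mathcal{H}}^{-1,m+1}_\pm(M)$ for every $m$, hence $G_\Theta^\pm f\in\mathcal{H}^{1,m}_{loc}(M)$ for every $m$, with the corresponding continuity inherited from the finite-$m$ estimates.

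The principal technical obstacle I anticipate lies not in the abstract inversion, which is largely formal once Proposition \ref{Prop: existence and uniqueness} is in place, but in the verification that the image of $P_\Theta$ acting on $\mathcal{H}^{1,m}_{\pm,\Theta}(M)$ genuinely lands in $\dot{\mathcal{H}}^{-1,m+1}_\pm(M)$ as a functional annihilated on the boundary contributions. This requires a careful use of Remark \ref{Rem: Where is gamma+}: the boundary term $\int_{\partial M}\Theta\gamma_-u\cdot\gamma_-\bar v$ in \eqref{Eq: Theta Twisted Energy Functional} correctly encodes the condition $\gamma_+u=\Theta\gamma_-u$ precisely because of the Green's formula \eqref{Eq: Green Formula}, and one must check that Hypothesis \ref{hypothesis_WF_Theta}, in particular self-adjointness of $\Theta$, is what makes the pairings well defined so that the uniqueness argument (which originally applies to $\mathcal{H}^1_0(M)$) can be invoked on the one-sided space $\mathcal{H}^{1,m}_\pm(M)$.
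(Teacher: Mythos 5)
Your proposal is correct and takes essentially the same route as the paper, which obtains the theorem directly from Lemma \ref{Lemma: Bound u} and Proposition \ref{Prop: existence and uniqueness} (following \cite{GaWr18}): the propagators are the inverses furnished by the one-sided existence--uniqueness statement, the advanced case follows by time reversal, and continuity comes from the quantitative estimate $\|u\|_{\mathcal{H}^{1,m}(K)}\leq C\|f\|_{\dot{\mathcal{H}}^{-1,m+1}(L)}$ established in the discussion preceding that proposition. Your closing worry about the range of $P_\Theta$ on $\mathcal{H}^{1,m}_{\pm,\Theta}(M)$ is legitimate but concerns an index bookkeeping already implicit in the theorem's statement (the paper's $P_\Theta$ maps $\mathcal{H}^{1,m}_{loc}$ to $\dot{\mathcal{H}}^{-1,m}_{loc}$), and is resolved by applying uniqueness at the appropriate order, exactly as you indicate.
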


Observe that the restriction to $\mathcal{H}^{1,m}_{\pm,\Theta}(M)$ is necessary since, per construction an element in the range of $G^\pm_\Theta P_\Theta$ abides to the boundary conditions as in Equation \eqref{Eq: Strong boundary KG}. 

\begin{rem}\label{Rem: causal propagator}
	Associated to the advanced and to retarded propagators, one can define the {\em causal propagator} $G_\Theta : \dot{\mathcal{H}}_{0}^{-1,m+1}(M) \rightarrow \mathcal{H}^{1,m}_{loc}(M)$ as $G_\Theta = G_{\Theta}^+-G_{\Theta}^-$. 
\end{rem}

Since $G^\pm_\Theta$ are continuous maps, {\it cf.} Theorem \ref{Thm: Existence_Uniqueness_Propagators}, one can apply Schwartz kernel theorem to infer that one can associate to them a bi-distribution $\mathcal{G}^\pm_\Theta\in\mathcal{D}^\prime(M\times M)$. To conclude the section we highlight a standard and important application of the fundamental solutions and in particular of the causal propagator {\it cf.} Remark \ref{Rem: causal propagator}. 

\begin{prop}\label{Prop: exact sequence of solutions}
	Let $P_\Theta$ be the Klein-Gordon operator as per Equation \eqref{Eq: PTheta} and let $G_\Theta$ be its associated causal propagator, {\it cf.} Remark \ref{Rem: causal propagator}. Then the following is an exact sequence:
	\begin{align}\label{Eq: short exact sequence for spacetimes with timelike boundary}
		0\to \mathcal{H}^{1,\infty}_{tc,\Theta}(M)\stackrel{P_\Theta}{\longrightarrow}\dot{\mathcal{H}}^{-1,\infty}_{tc}(M) 
		\stackrel{G_\Theta}{\longrightarrow}
		\mathcal{H}^{1,\infty}_\Theta(M)
		\stackrel{P_\Theta}{\longrightarrow}
		\dot{\mathcal{H}}^{-1,\infty}(M)\to 0\,.
	\end{align}
\end{prop}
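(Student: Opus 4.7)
The plan is to verify exactness at each of the four indicated positions by systematically applying the identities $P_\Theta G_\Theta^\pm = \mathbb{I}$ and $G_\Theta^\pm P_\Theta = \mathbb{I}$ of Theorem \ref{Thm: Existence_Uniqueness_Propagators}, combined with a time partition of unity and the uniqueness statement of Proposition \ref{Prop: existence and uniqueness}. Injectivity of the leftmost $P_\Theta$ on $\mathcal{H}^{1,\infty}_{tc,\Theta}(M)$ is immediate: any $u$ in this space lies in particular in $\mathcal{H}^{1,\infty}_{+,\Theta}(M)$, so $u = G^+_\Theta P_\Theta u = 0$ when $P_\Theta u = 0$. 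At $\dot{\mathcal{H}}^{-1,\infty}_{tc}(M)$ the inclusion $\mathrm{im}(P_\Theta)\subseteq\ker(G_\Theta)$ follows from $G_\Theta P_\Theta u = (G^+_\Theta - G^-_\Theta) P_\Theta u = u - u = 0$; conversely, if $G_\Theta f = 0$ then $u := G^+_\Theta f = G^-_\Theta f$ is simultaneously past- and future-compact with $P_\Theta u = f$, placing $u$ in $\mathcal{H}^{1,\infty}_{tc,\Theta}(M)$. The identity $P_\Theta\circ G_\Theta = 0$ at the third position is also immediate.

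The two remaining surjectivity statements rest on the same time-cutoff construction. I would fix $\chi\in C^\infty(\mathbb{R})$ with $\chi(\tau) = 0$ for $\tau\leq\tau_0$ and $\chi(\tau) = 1$ for $\tau\geq\tau_1 > \tau_0$. For surjectivity of $P_\Theta$ on the right, given $f\in\dot{\mathcal{H}}^{-1,\infty}(M)$, set
\[
u := G^-_\Theta(\chi f) + G^+_\Theta((1-\chi)f),
\]
noting that $\chi f\in\dot{\mathcal{H}}^{-1,\infty}_-(M)$ and $(1-\chi)f\in\dot{\mathcal{H}}^{-1,\infty}_+(M)$; both summands lie in $\mathcal{H}^{1,\infty}_{\pm,\Theta}(M)$, so $u \in \mathcal{H}^{1,\infty}_\Theta(M)$ and $P_\Theta u = f$. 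For the middle exactness, given $u \in \mathcal{H}^{1,\infty}_\Theta(M)$ with $P_\Theta u = 0$, define $f := P_\Theta(\chi u)$; because $P_\Theta u = 0$ one also has $f = -P_\Theta((1-\chi)u)$, so $f$ is supported in the slab $\{\tau_0\leq\tau\leq\tau_1\}$ and belongs to $\dot{\mathcal{H}}^{-1,\infty}_{tc}(M)$. Since $\chi u$ is past-compact and $-(1-\chi)u$ is future-compact, the uniqueness part of Proposition \ref{Prop: existence and uniqueness} forces $G^-_\Theta f = \chi u$ and $G^+_\Theta f = -(1-\chi)u$, whence $G_\Theta f = -u$ and $u \in \mathrm{im}(G_\Theta)$.

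The most delicate point, and the one where Hypothesis \ref{hypothesis_WF_Theta} actually enters, is the invocation of Proposition \ref{Prop: existence and uniqueness} for the past-compact candidate $\chi u$, which does not in general satisfy the strong boundary condition $\gamma_+(\chi u) = \Theta\gamma_-(\chi u)$ since $[\Theta,\chi|_{\partial M}]\gamma_- u$ need not vanish. This is resolved by the fact that $P_\Theta$ is defined on all of $\mathcal{H}^{1,\infty}_{loc}(M)$ via the weak form $\mathcal{E}_\Theta$ of Equation \eqref{Eq: Theta Twisted Energy Functional}, which absorbs any such defect into a boundary contribution to $P_\Theta(\chi u)$; the uniqueness statement of Proposition \ref{Prop: existence and uniqueness} is formulated on $\mathcal{H}^{1,\infty}_{loc}(M)$ without an a priori boundary condition, which is exactly what the argument requires. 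The locality-in-time clause of Hypothesis \ref{hypothesis_WF_Theta} is what guarantees that this boundary component remains supported in $\{\tau_0\leq\tau\leq\tau_1\}$, so that $f$ indeed lies in $\dot{\mathcal{H}}^{-1,\infty}_{tc}(M)$.
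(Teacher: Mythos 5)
Your proof is correct and follows essentially the same route as the paper's: the same propagator identities handle injectivity and the two kernel inclusions, and the same time cutoff $\chi$ produces the preimages needed for exactness at the third and fourth positions. The one place you diverge is in justifying that $f=P_\Theta(\chi u)$ is timelike compact: the paper's proof simply asserts that $\Theta$ is static and hence commutes with $\chi$ (an assumption not present in the statement of the proposition), whereas your appeal to the weak formulation of $P_\Theta$ together with the locality-in-time clause of Hypothesis \ref{hypothesis_WF_Theta} covers the general case and is, if anything, more careful than the published argument.
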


\begin{proof}
	To prove that the sequence is exact, we start by establishing that $P_\Theta$ is injective on $\mathcal{H}^{1,\infty}_{tc,\Theta}(M)$. This is a consequence of Theorem \ref{Thm: Existence_Uniqueness_Propagators} which guarantees that, if $P_\Theta(h)=0$ for $h\in\mathcal{H}^{1,\infty}_{tc,\Theta}(M)$, then $G^+P_\Theta(h)=h=0$.
	
	Secondly, on account of Theorem \ref{Thm: Existence_Uniqueness_Propagators} and in particular of the identity $G^\pm_\Theta P_\Theta  =\mathbb{I}$ on $\mathcal{H}^1_{\pm,\Theta}(M)$, it holds that $G_\Theta P_\Theta(f)=0$ for all $f\in \mathcal{H}^{1,\infty}_{tc,\Theta}(M)$. Hence $\mathrm{Im}(P_\Theta)\subseteq\ker(P_\Theta)$. Assume that there exists $f\in\dot{\mathcal{H}}^{-1,\infty}_{tc}(M)$ such that $G_\Theta(f)=0$. It descends that $G^+_\Theta(f)=G^-_\Theta(f)\in\mathcal{H}^{1,\infty}_{tc,\Theta}(M)$. Applying $P_\Theta$ it holds that $f=P_\Theta G^+_\Theta(f)$, that is $f\in P_\Theta[\mathcal{H}^{1,\infty}_{tc,\Theta}(M)]$.

	The third step consists of recalling that, per construction, $P_\Theta G_\Theta =0$ and that, still on account of Theorem \ref{Thm: Existence_Uniqueness_Propagators}, $\textrm{Im}(G_\Theta)\subseteq\ker(P_\Theta)$. To prove the opposite inclusion, suppose that $u\in\ker(P_\Theta)$. Let $\chi\equiv\chi(\tau)$ be a smooth function such that there exists $\tau_0,\tau_1\in\mathbb{R}$ such that $\chi=1$ if $\tau>\tau_1$ and $\chi=0$ if $\tau<\tau_0$. Since $\Theta$ is a static boundary condition and, therefore, it commutes with $\chi$, it holds that $\chi u\in\mathcal{H}^{1,\infty}_{+,\Theta}(M)$. Hence setting $f\doteq P_\Theta\chi u$, a direct calculation shows that $G_\Theta f=u$
	
	To conclude we need to show that the map $P_\Theta$ on the before last arrow is surjective. To this end, let $j\in\dot{\mathcal{H}}^{-1,\infty}(M)$ and let $\chi\equiv\chi(\tau)$ be as above. Let $h\doteq G^+_\Theta\left(\chi j\right)+G^-_\Theta\left((1-\chi)j\right)$. Per construction $h\in\mathcal{H}^{1,\infty}(M)$ and $P_\Theta(h)=j$.
\end{proof}

\noindent Mainly for physical reasons we individuate the following special classes of boundary conditions. Recall that, according to Theorem \ref{Thm: globally hyperbolic} $M$ is isometric to $\mathbb{R}\times\Sigma$ and $\partial M$ to $\mathbb{R}\times\partial\Sigma$. 

\begin{defn}\label{Def: admissible boundary conditions}
	Let $\Theta\in\Psi^k_b(M)$ with $k\leq 2$ and let $\Theta=\Theta^*$ We call $\Theta$
	\begin{itemize}
		\item {\em physically admissible} if $WF_b^{-1,s+1}(\Theta u)\subseteq WF_b^{-1,s+1}(P_\Theta u)$ for all $u\in\mathcal{H}^{1,m}_{loc}(M)$ with $m\leq 0$ and $s\in\mathbb{R}\cup\{\infty\}$. 
		\item a {\em static} boundary condition if $\Theta\equiv\Theta_K$ is the natural extension to $\Psi^k_b(M)$ of a pseudodifferential operator $K=K^*\in\Psi_b^k(\partial\Sigma)$ with $k\leq 2$.
	\end{itemize}
\end{defn}

Observe that any static boundary condition is automatically local in time, see Definition \ref{Def: pseudo local in time}. Starting from these premises we can investigate further properties of the fundamental solutions, starting from the singularities of the advanced and retarded propagators. To this end let us introduce  $\mathcal{W}_b^{-\infty}(M)$  the space of bounded operators from $\dot{\mathcal{H}}_0^{-1,-\infty}(M)$ to $\mathcal{H}^{1,\infty}_{loc}(M)$ and we give a definition of wavefront set complementary to that of Definition \ref{Def: WF of PsiDO}.

\begin{defn}[Operatorial wavefront set $WF_b^{Op}(M)$]\label{Def: WF operatorial}
	Let $\Lambda : \dot{\mathcal{H}}^{-1,-\infty}_0(M) \rightarrow \mathcal{H}^{1,\infty}_{loc}(M)$ be a continuous map. A point $(q_1,q_2) \in {}^bS^*M \times {}^b S^* (M) \not \in WF_b^{Op}(M) $ if there exists two b-pseudodifferential operators $B_1$ and $B_2$ in $\Psi_b^0(M)$ elliptic at $q_1$ and $q_2$ respectively, such that $B_1 \Lambda B_2^* \in \mathcal{W}_b^{-\infty}(M)$. 
\end{defn}

Recalling Equation \eqref{Eq: cosphere bundle}, we can state the following theorem characterizing the singularities of the advanced and of the retarded fundamental solutions. The proof is a direct application of Theorem \ref{Thm: main theorem k positivo} or of Theorem \ref{Thm: main theorem k negativo}.

\begin{thm}\label{thm:wavefront_set_propagator}
	Let $\Delta$ denote the diagonal in ${}^b S^*M \times {}^b S^*M$ and let $\Theta$ be physically admissible as per Definition \ref{Def: admissible boundary conditions}. Then
	$$WF_b^{Op}(G_{\Theta}^\pm)  \setminus \Delta \subset \{ (q_1,q_2) \in {}^b S^*M \times {}^b S^*M \ | \ q_1 \dot{\sim} q_2, \ \pm t(q_1)>\pm t(q_2) \},$$
	where $q_1 \dot{\sim} q_2$ means that $q_1,q_2$ are two points in $\dot{\mathcal{N}}$, {\it cf.} Equation \eqref{Eq: compressed characteristic set} connected by a generalized broken bicharacteristic, {\it cf.} Definition \ref{Def: generalized broken bicharacteristics}.
\end{thm}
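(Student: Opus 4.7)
The plan is to translate the operatorial wavefront statement into one about the wavefront set of $G_\Theta^\pm f$ for arbitrary inputs $f$, and then invoke the propagation of singularities theorems, Theorem \ref{Thm: main theorem k positivo} and Theorem \ref{Thm: main theorem k negativo}, recalled from \cite{Dappiaggi:2020yxg}. Fix $(q_1,q_2)\in{}^bS^*M\times{}^bS^*M$ outside the diagonal and suppose $q_1$ is not joined to $q_2$ by any generalized broken bicharacteristic inside the compressed characteristic set $\dot{\mathcal{N}}$ with the prescribed time ordering $\pm t(q_1)>\pm t(q_2)$. Our task is to exhibit $B_1,B_2\in\Psi^0_b(M)$, elliptic at $q_1$ and $q_2$ respectively, for which $B_1 G_\Theta^\pm B_2^*\in\mathcal{W}^{-\infty}_b(M)$.

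First I would pick $B_2\in\Psi^0_b(M)$ elliptic at $q_2$ with essential support $WF'_b(B_2)$ contained in a small conic neighbourhood $\mathcal{V}_2$ of $q_2$. By the hypothesis on $(q_1,q_2)$ and the upper semicontinuity of the GBB flow on $\dot{\mathcal{N}}$, we can shrink $\mathcal{V}_2$ together with a chosen conic neighbourhood $\mathcal{V}_1$ of $q_1$ so that no GBB issued from a point of $\mathcal{V}_2\cap\dot{\mathcal{N}}$ reaches $\mathcal{V}_1$ while respecting $\pm t(q_1)>\pm t(q_2)$. For arbitrary $g\in\dot{\mathcal{H}}^{-1,-\infty}_0(M)$ set $f\doteq B_2^*g$ and $u\doteq G_\Theta^\pm f$. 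Because $g$ has compact support, $f\in\dot{\mathcal{H}}^{-1,-\infty}_\pm(M)$ with $WF_b^{-1,\infty}(f)\subseteq WF'_b(B_2^*)\subset\mathcal{V}_2$, and Theorem \ref{Thm: Existence_Uniqueness_Propagators} places $u\in\mathcal{H}^{1,-\infty}_\pm(M)$ with $P_\Theta u=f$ and with $\mathrm{supp}(u)$ contained in the causal future (respectively past) of $\mathrm{supp}(f)$.

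Next I would invoke the physical admissibility of $\Theta$, Definition \ref{Def: admissible boundary conditions}: $WF_b^{-1,s+1}(\Theta u)\subseteq WF_b^{-1,s+1}(P_\Theta u)=WF_b^{-1,s+1}(f)$ for every $s$. Thus one of Theorem \ref{Thm: main theorem k positivo} or Theorem \ref{Thm: main theorem k negativo} (according to whether $0<k\leq 2$ or $k\leq 0$) applies and yields that $WF_b^{1,\infty}(u)\setminus WF_b^{-1,\infty}(f)$ is a union of maximally extended GBBs within $\dot{\mathcal{N}}$. Any point of $WF_b^{1,\infty}(u)$ outside $\mathcal{V}_2$ must therefore lie on a GBB that passes through $\mathcal{V}_2$; combined with the causal support constraint on $u$, which already discards the wrong time orderings, no such GBB can visit $\mathcal{V}_1$, so $q_1\notin WF_b^{1,\infty}(u)$.

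Finally I would choose $B_1\in\Psi^0_b(M)$ elliptic at $q_1$ with $WF'_b(B_1)\subset\mathcal{V}_1$. The elliptic regularity characterisation of the wavefront set given in Definition \ref{Def: wavefrontset for Hkm-loc} then implies $B_1 u\in\mathcal{H}^{1,\infty}_{loc}(M)$; since the estimate is uniform in $g$ by the continuity assertions of Theorem \ref{Thm: Existence_Uniqueness_Propagators} and Lemma \ref{Lemma: PDO zero order bounded}, the composition $B_1 G_\Theta^\pm B_2^*$ maps $\dot{\mathcal{H}}^{-1,-\infty}_0(M)$ continuously into $\mathcal{H}^{1,\infty}_{loc}(M)$, hence lies in $\mathcal{W}^{-\infty}_b(M)$, which is the sought conclusion $(q_1,q_2)\notin WF^{Op}_b(G_\Theta^\pm)$. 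The main technical obstacle is ensuring that the neighbourhoods $\mathcal{V}_1,\mathcal{V}_2$ can be shrunk while preserving GBB-non-connectedness with the required time ordering; this is an upper semicontinuity statement for the GBB flow through $\dot{\mathcal{N}}$ that mirrors the Robin-case bookkeeping of \cite{GaWr18}, and it is what makes the whole argument compatible with the wider class of boundary conditions encoded by Hypothesis \ref{hypothesis_WF_Theta}.
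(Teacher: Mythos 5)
Your proposal is correct and follows exactly the route the paper intends: the paper's own ``proof'' is the single remark that the statement is a direct application of Theorems \ref{Thm: main theorem k positivo} and \ref{Thm: main theorem k negativo}, and your argument is a faithful unpacking of that -- microlocalize the input near $q_2$, use the support properties of $G^\pm_\Theta$ from Theorem \ref{Thm: Existence_Uniqueness_Propagators} for the time ordering, use physical admissibility to absorb the $WF_b^{-1,s+1}(\Theta u)$ term, and propagate. The two points you flag (stability of GBB-non-connectedness under shrinking of the conic neighbourhoods, and uniformity in $g$ of the microlocal estimates needed to pass from pointwise regularity of $B_1G^\pm_\Theta B_2^*g$ to membership of $B_1G^\pm_\Theta B_2^*$ in $\mathcal{W}^{-\infty}_b(M)$) are exactly the ingredients implicitly relied upon, and both are supplied by the quantitative form of the propagation estimates in the cited references.
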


\begin{rem}
	The reason for the hypothesis on $\Theta$ lies in the fact that we do not want to alter the microlocal behavior of the system in $\mathring{M}$. More precisely, if no restriction on the wavefront set of $\Theta u$ is placed, then by the propagation of singularities theorem, {\it cf.} Theorem \ref{Thm: main theorem k positivo}, in addition to the singularities propagating along the generalized broken bicharacteristics of the Klein-Gordon operator we should account also for those of $\Theta u$. On the one hand this would be in sharp contrast with what happens if $M$ were a globally hyperbolic spacetime without boundary. On the other hand, in concrete applications such as the construction of Hadamard two-point functions, one seeks for bi-distributions with a prescribed form of the wave front set and whose antisymmetric part coincides with the difference between the advanced and retarded fundamental solutions associated to the Klein-Gordon operator with boundary condition implemented by $\Theta$, see {\it e.g.} \cite{DF16,DF18,Dybalski:2018egv,Wro17,GaWr18}.
\end{rem}

\noindent In addition one can infer the following localization property which is sometimes referred to as {\em time-slice axiom}.

\begin{corollary}\label{Cor: time_slice_axiom}
	Let $\mathcal{H}^{-1,\infty}_{tc,[\tau_1,\tau_2]}(M)\subset\mathcal{H}^{-1,\infty}_{tc}(M)$ be the collection of all $u\in\mathcal{H}^{-1,\infty}_{tc}(M)$ such that $p\notin\textrm{supp}(u)$ whenever $\tau(p)\notin[\tau_1,\tau_2]$, $\tau_1,\tau_2\in\mathbb{R}$. Then, if $\Theta$ is a static boundary condition as per Definition \ref{Def: admissible boundary conditions}, the inclusion map $\iota_{\tau_1,\tau_2}: \dot{\mathcal{H}}^{-1,\infty}_{tc,[\tau_1,\tau_2]}(M) \rightarrow \dot{\mathcal{H}}^{-1,\infty}_{tc}(M)$ induces the isomorphism
	\begin{equation}\label{Eq: isomorphism initial data}
		[\iota_{\tau_1,\tau_2}] : \dfrac{\dot{\mathcal{H}}_{tc,[\tau_1,\tau_2]}^{1,\infty}(M)}{P_\Theta \mathcal{H}_{tc,[\tau_1,\tau_2]}^{1,\infty}(M)} \rightarrow \dfrac{\dot{\mathcal{H}}_{tc}^{1,\infty}(M)}{P_\Theta \mathcal{H}_{tc}^{1,\infty}(M)}.
	\end{equation}
\end{corollary}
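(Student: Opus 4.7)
The plan is to prove surjectivity and injectivity of the induced map separately, exploiting the causal propagator $G_\Theta$ of Remark \ref{Rem: causal propagator} and the uniqueness statement of Corollary \ref{Cor: uniqueness of the solution}. Well-definedness on quotients is immediate since $\iota_{\tau_1,\tau_2}$ clearly sends $P_\Theta\mathcal{H}^{1,\infty}_{tc,[\tau_1,\tau_2]}(M)$ into $P_\Theta\mathcal{H}^{1,\infty}_{tc}(M)$. The staticness of $\Theta$ is used systematically: since $\Theta\equiv\Theta_K$ with $K\in\Psi_b^k(\partial\Sigma)$, it commutes with multiplication by any smooth function depending only on the time coordinate $\tau$, hence time-cutoffs preserve both membership in $\mathcal{H}^{1,\infty}_{\pm,\Theta}(M)$ and the boundary condition $\gamma_+u=\Theta\gamma_-u$.

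For surjectivity, I would fix $f\in\dot{\mathcal{H}}^{-1,\infty}_{tc}(M)$ and pick a smooth cutoff $\chi\equiv\chi(\tau)$ with $\chi=0$ for $\tau<\tau_1$ and $\chi=1$ for $\tau>\tau_2$. Setting $h\doteq\chi G^+_\Theta f+(1-\chi)G^-_\Theta f$, the support properties of $G^\pm_\Theta f$ combined with those of $\chi$ and $1-\chi$ make both summands timelike compact, whence $h\in\mathcal{H}^{1,\infty}_{tc}(M)$. Applying $P_\Theta$ and using $P_\Theta G^\pm_\Theta f=f$ together with the Leibniz rule yields
\begin{equation*}
P_\Theta h = \chi f+(1-\chi)f+[P_\Theta,\chi]\bigl(G^+_\Theta f-G^-_\Theta f\bigr) = f + [P_\Theta,\chi] G_\Theta f.
\end{equation*}
Since $[P_\Theta,\chi]$ is a first-order operator with coefficients supported where $d\chi\neq 0$, the quantity $g\doteq -[P_\Theta,\chi]G_\Theta f$ has time support contained in $[\tau_1,\tau_2]$; the continuity of $G_\Theta:\dot{\mathcal{H}}^{-1,\infty}_0(M)\to\mathcal{H}^{1,\infty}_{loc}(M)$ from Theorem \ref{Thm: Existence_Uniqueness_Propagators} ensures $g\in\dot{\mathcal{H}}^{-1,\infty}_{tc,[\tau_1,\tau_2]}(M)$. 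The identity $f-g=P_\Theta h$ then proves $[\iota_{\tau_1,\tau_2}][g]=[f]$ in the larger quotient.

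For injectivity, assume $g\in\dot{\mathcal{H}}^{-1,\infty}_{tc,[\tau_1,\tau_2]}(M)$ satisfies $g=P_\Theta h$ for some $h\in\mathcal{H}^{1,\infty}_{tc}(M)$ obeying the boundary condition. Since $\mathrm{supp}(g)\subseteq\{\tau_1\leq\tau\leq\tau_2\}$, the restriction of $h$ to $\{\tau<\tau_1\}$ solves the homogeneous equation $P_\Theta h=0$ with vanishing support in the far past, so Corollary \ref{Cor: uniqueness of the solution} forces $h=0$ for $\tau<\tau_1$. A symmetric argument in the advanced direction gives $h=0$ for $\tau>\tau_2$, so $h\in\mathcal{H}^{1,\infty}_{tc,[\tau_1,\tau_2]}(M)$ and $[g]=0$ in the smaller quotient.

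The main technical point to verify carefully is that the Leibniz identity $P_\Theta(\chi u)=\chi P_\Theta u+[P_\Theta,\chi]u$ is legitimate at the level of the weak formulation \eqref{Eq: PTheta} including its compatibility with the boundary term in the energy functional $\mathcal{E}_\Theta$. This reduces to checking that, for $\chi$ depending only on $\tau$, the identities $\gamma_-(\chi u)=\chi|_{\partial M}\gamma_-u$ and $\Theta(\chi|_{\partial M}\gamma_-u)=\chi|_{\partial M}\Theta(\gamma_-u)$ both hold, so that the boundary integral in $\mathcal{E}_\Theta$ does not contribute to the commutator and $\chi u$ remains in $\mathcal{H}^{1,\infty}_{\pm,\Theta}(M)$. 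Only then can one invoke the identities $P_\Theta G^\pm_\Theta=\mathbb{I}$ and $G^\pm_\Theta P_\Theta=\mathbb{I}$ from Theorem \ref{Thm: Existence_Uniqueness_Propagators} without obstruction. This is precisely where the hypothesis that $\Theta$ be a static boundary condition, as codified in Definition \ref{Def: admissible boundary conditions}, enters the argument.
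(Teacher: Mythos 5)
Your proposal is correct and follows essentially the same route as the paper: surjectivity via a time cutoff $\chi(\tau)$ applied to the causal propagator (the paper uses $P_\Theta\circ\chi\circ G_\Theta$ as a two-sided inverse, which differs from your $f-g=P_\Theta h$ identity only by the bookkeeping of $G^-_\Theta f$ and an overall sign convention), with staticness of $\Theta$ invoked exactly where you place it, namely to commute the boundary term of $\mathcal{E}_\Theta$ with $\chi$. Your injectivity argument via Corollary \ref{Cor: uniqueness of the solution} supplies the details behind the paper's bare assertion that $[\iota_{\tau_1,\tau_2}]$ is ``manifestly injective''.
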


\begin{proof}
By direct inspection one can realize that the map $\iota_{\tau_1,\tau_2}$ descends to the quotient space $\dfrac{\dot{\mathcal{H}}_{tc,[\tau_1,\tau_2]}^{1,\infty}(M)}{P_\Theta \mathcal{H}_{tc,[\tau_1,\tau_2]}^{1,\infty}(M)}$. The ensuing application $[\iota_{\tau_1,\tau_2}]$ is manifestly injective. We need to show that it is also surjective. Consider therefore any $[f]\in\dfrac{\dot{\mathcal{H}}_{tc}^{1,\infty}(M)}{P_\Theta \mathcal{H}_{tc}^{1,\infty}(M)}$ and let $G_{\Theta}(f)$ be the associated solution of the Klein-Gordon equation, {\it cf.} Equation \eqref{Eq: short exact sequence for spacetimes with timelike boundary}. Let $\chi\equiv\chi(\tau)$ be a smooth function such that $\chi=1$ if $\tau>\tau_2$ while $\chi=0$ if $\tau<\tau_1$. The function $h\doteq P_\Theta\left(\chi G_\Theta(f)\right)\in\dot{\mathcal{H}}^{-1,\infty}_{tc,[\tau_1,\tau_2]}(M)$, where $G_\Theta$ is the causal propagator, {\it cf.} Remark \ref{Rem: causal propagator} and Proposition \ref{Prop: exact sequence of solutions}. Per construction the map $P_\Theta\circ\chi\circ G_\Theta$ descends to an application from $\dfrac{\dot{\mathcal{H}}_{tc}^{1,\infty}(M)}{P_\Theta \mathcal{H}_{tc}^{1,\infty}(M)}$ to $\dfrac{\dot{\mathcal{H}}_{tc,[\tau_1,\tau_2]}^{1,\infty}(M)}{P_\Theta \mathcal{H}_{tc,[\tau_1,\tau_2]}^{1,\infty}(M)}$ which is both a left and a right inverse of $[\iota_{\tau_1,\tau_2}]$.
\end{proof}

\section{Hadamard States}\label{Sec: Hadamard two-point distributions}

In this section, we discuss a specific application of the results obtained in the previous section, namely we prove existence of a family of distinguished two-point correlation functions for a Klein-Gordon field on a globally hyperbolic, asymptotically AdS spacetime, dubbed {\em Hadamard two-point distributions}. These play an important r\^{o}le in the algebraic formulation of quantum field theory, particularly when the underlying background is a generic globally hyperbolic spacetime with or without boundary, see {\it e.g.} \cite{Khavkine:2014mta} for a review as well as \cite{DF16,DF18,DFM18} for the analysis on anti-de Sitter spacetime and \cite{Wro17} for an that on a generic asymptotically AdS spacetime, though only in the case of Dirichlet boundary conditions.  

Here our goal is to prove that such class of two-point functions exists even if one considers more generic boundary conditions. To prove this statement, the strategy that we follow is divided in three main steps, which we summarize for the reader's convenience. To start with, we restrict our attention to static, asymptotically anti-de Sitter and globally hyperbolic spacetimes and to boundary conditions which are both physically acceptable and static, see Definition \ref{Def: admissible boundary conditions}. In this context, by means of spectral techniques, we give an explicit characterization of the advanced and retarded fundamental solutions. To this end we use the theory of boundary triples, a framework which is slightly different, albeit connected, to the one employed in the previous sections, see \cite{DDF18}. 

Subsequently we show that, starting from the fundamental solutions and from the associated causal propagator, it is possible to identify a distinguished two-point distributions of Hadamard form.

To conclude, we adapt and we generalize to the case in hand a deformation argument due to Fulling, Narcowich and Wald, \cite{FNW81} which, in combination with the propagation of singularities theorem, allows to infer the existence of Hadamard two-point distributions for a Klein-Gordon field on a generic globally hyperbolic and asymptotically AdS spacetime starting from those on a static background.

\subsection{Fundamental solutions on static spacetimes}\label{Sec: Static Fundamental Solutions}

In this section we give a concrete example of advanced and retarded fundamental solutions for the Klein-Gordon operator $P_\Theta$, {\it cf.} Equation \eqref{Eq: PTheta} on a static, globally hyperbolic, asymptotically AdS spacetime. For the sake of simplicity, we consider a massless scalar field, corresponding to the case $\nu = (n-1)/2$, see Equation \ref{Eq: nu parameter}. Observe that, since the detailed analysis of this problem has been mostly carried out in \cite{DDF18}, we refer to it for the derivation and for most of the technical details. Here we shall limit ourselves to giving a succinct account of the main results.

As a starting point, we specify precisely the underlying geometric structure:
\begin{defn}\label{Def: standard static}
	Let $(M,g)$ be an $n$-dimensional Lorentzian manifold. We call it a static globally hyperbolic, asymptotically AdS spacetime if it abides to Definition \ref{Def: asymptotically AdS} and, in addition,
	\begin{itemize}
		\item[1)] There exists an irrotational, timelike Killing field $\chi\in\Gamma(TM)$, such that $\mathcal{L}_\chi(x)=0$ where $x$ is the global boundary function,

		\item[2)] $(M,\hat{g})$ is isometric to a standard static spacetime, that is a warped product $\mathbb{R}\times_\beta S$ with line element $ds^2=-\alpha^2 dt^2+ h_S$ where $h_S$ is a $t$-independent Riemannian metric on $S$, while $\alpha\neq\alpha(t)$ is a smooth, positive function. 
	\end{itemize}
\end{defn}

\begin{rem}\label{Rem: our standard static}
	In the following, without loss of generality, we shall assume that, whenever we consider a static globally hyperbolic, asymptotically flat spacetime if it abides to Definition \ref{Def: asymptotically AdS}, the timelike Killing field $\chi$ coincides with the vector field $\partial_\tau$, {\it cf.} Theorem \ref{Thm: globally hyperbolic}. Hence the underlying line-element reads as $ds^2=-\beta d\tau^2 +\kappa$ where both $\beta$ and $\kappa$ are $\tau$-independent and $S$ can be identified with the Cauchy surface $\Sigma$ in Theorem \ref{Thm: globally hyperbolic}. For convenience we also remark that, in view of this characterization of the metric, the associated Klein-Gordon equation $Pu=0$ with $P=\Box_g$ reads
	\begin{equation}\label{Eq: static KG operator}
		\left(-\partial^2_\tau+ E\right)u=0,
	\end{equation}
	where $E=\beta\Delta_\kappa$, being $\Delta_\kappa$ the Laplace-Beltrami operator associated to the the Riemannian metric $\kappa$.
	
\end{rem}

Henceforth we consider only {\em static} boundary conditions as per Definition \ref{Def: admissible boundary conditions} which we indicate with the symbol $\Theta_K$ to recall that they are induced from $K\in\Psi^k_b(\partial M)$. Since the underlying spacetime is static, in order to construct the advanced and retarded fundamental solutions, we can focus our attention on $\mathcal{G}_{\Theta_K}\in \mathcal{D}^\prime(\mathring{M} \times \mathring{M})$ , the bi-distribution associated to the causal propagator $G_{\Theta_K}$, {\it cf.} Remark \ref{Rem: causal propagator}. It satisfies the following initial value problem, see also \cite{DDF18}:

\begin{equation}\label{eq: equations fundamental solutions}
	\begin{cases}
		(P_{\Theta_K} \otimes \mathbb{I}) \mathcal{G}_{\Theta_K} = (\mathbb{I} \otimes P_{\Theta_K})\mathcal{G}_{\Theta_K} = 0 \\
		\mathcal{G}_{\Theta_K}|_{\tau=\tau^\prime} = 0 \quad\\
		\partial_\tau\mathcal{G}_{\Theta_K}|_{\tau=\tau^\prime}= - \partial_{\tau^\prime} \mathcal{G}_{\Theta_K}|_{\tau=\tau^\prime} = \delta
	\end{cases}
\end{equation}
where $\delta$ is the Dirac distribution on the diagonal of $\mathring{M} \times \mathring{M}$. Starting from $\mathcal{G}_{\Theta_K}$ one can recover the advanced and retarded fundamental solutions $\mathcal{G}^\pm_{\Theta_K}$ via the identities:
\begin{equation}\label{eq: adv/ret from causal}
	\mathcal{G}^-_{\Theta_K}=\vartheta(\tau-\tau^\prime)\mathcal{G}_{\Theta_K}\quad\textrm{and}\quad\mathcal{G}^+_{\Theta_K}=-\vartheta(\tau^\prime-\tau)\mathcal{G}_{\Theta_K},
\end{equation}
where $\vartheta$ is the Heaviside function. The existence and the properties of $\mathcal{G}_{\Theta_K}$ have been thoroughly analyzed in \cite{DDF18} using the framework of boundary triples, {\it cf.} \cite{Grubb68}. Here we recall the main structural aspects.

\begin{defn}\label{Def: boundary triples}
	Let $H$ be a separable Hilbert space over $\mathbb{C}$ and let $S:D(S)\subset H \rightarrow H$ be a closed, linear and symmetric operator.
	A {\em boundary triple} for the adjoint operator $S^*$ is a triple $(\mathsf{h},\gamma_0,\gamma_1)$, where $\mathsf{h}$ is a separable Hilbert space over $\mathbb{C}$ while $\gamma_0,\gamma_1:D(S^*) \rightarrow \mathsf{h}$ are two linear maps satisfying
	\begin{itemize}
		\item[1)] For every $f,f^\prime \in D(P^*)$ it holds
		\begin{equation}\label{eq:LagrangeId}
			(S^*f|f^\prime)_H-(f|S^*f^\prime)_H = (\gamma_1 f | \gamma_0 f^\prime)_{\mathsf{h}} - (\gamma_0 f | \gamma_1 f^\prime)_{\mathsf{h}}
		\end{equation}
		\item[2)] The map $\gamma:D(S^*) \rightarrow\mathsf{h}\times\mathsf{h}$ defined by $\gamma(f) = (\gamma_0 f, \gamma_1 f)$ is surjective.
	\end{itemize}
\end{defn}

\noindent One of the key advantages of this framework is encoded in the following proposition, see \cite{Mal92} 

\begin{prop}\label{Prop: self-adjoint extensions via boundary triples}
	Let $S$ be a linear, closed and symmetric operator on $H$. Then an associated boundary triple $(\mathsf{h},\gamma_0,\gamma_1)$ exists if and only if $S^*$ has equal deficiency indices. In addition, if $\Theta:D(\Theta) \subseteq\mathsf{h}\rightarrow\mathsf{h}$ is a closed and densely defined linear operator, then $S_\Theta \doteq S^*|_{ker(\gamma_1-\Theta \gamma_0)}$ is a closed extension of $S$ with domain
	\begin{equation*}
		D(S_\Theta) \doteq \{ f \in D(S^*)\; |\; \gamma_0(f) \in D(\Theta), \; \textrm{and}\; \gamma_1(f) = \Theta\gamma_0(f) \}
	\end{equation*}
	The map $\Theta \mapsto S_\Theta$ is one-to-one and $S^*_\Theta = S_{\Theta^*}$. In other word there is a one-to-one correspondence between self-adjoint operators $\Theta$ on $\mathsf{h}$ and self-adjoint extensions of $S$.
\end{prop}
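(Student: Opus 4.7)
The proposition divides into three assertions: existence of a boundary triple iff $S^*$ has equal deficiency indices, structural properties of the extension $S_\Theta$, and the one-to-one correspondence between self-adjoint $\Theta$ and self-adjoint extensions of $S$. My plan is to reduce each one to the abstract Green's identity \eqref{eq:LagrangeId} together with the surjectivity of $\gamma=(\gamma_0,\gamma_1)$.

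For the existence part, the natural route is von Neumann's decomposition $D(S^*)=D(S)\oplus N_+\oplus N_-$ with $N_\pm=\ker(S^*\mp i)$, orthogonal in the graph norm. When $\dim N_+=\dim N_-$, pick a unitary $U:N_+\to N_-$, set $\mathsf{h}=N_+$, and for $f=f_0+f_++f_-$ define
\[
\gamma_0 f = f_+ + U^{-1}f_-, \qquad \gamma_1 f = i(f_+-U^{-1}f_-);
\]
a direct computation using $S^*f_\pm=\pm i f_\pm$ verifies \eqref{eq:LagrangeId}, and the parametrisation makes $\gamma$ surjective onto $\mathsf{h}\times\mathsf{h}$. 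Conversely, given any boundary triple, the extension $S^*|_{\ker\gamma_0}$ is symmetric by \eqref{eq:LagrangeId} and self-adjoint by surjectivity of $\gamma$, so $S$ admits a self-adjoint extension and its deficiency indices must coincide.

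For the second part, closedness of $S_\Theta$ rests on two ingredients: continuity of $\gamma$ from $D(S^*)$ equipped with its graph norm into $\mathsf{h}\times\mathsf{h}$, which one obtains via the closed graph theorem applied component-wise, and the closedness of $\Theta$. Indeed, if $f_n\to f$ in graph norm with $\gamma_1 f_n=\Theta\gamma_0 f_n$, then $\gamma_0 f_n\to\gamma_0 f$ and $\Theta\gamma_0 f_n=\gamma_1 f_n\to\gamma_1 f$, so closedness of $\Theta$ yields $\gamma_0 f\in D(\Theta)$ and $\Theta\gamma_0 f=\gamma_1 f$, i.e. $f\in D(S_\Theta)$. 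Injectivity of $\Theta\mapsto S_\Theta$ follows from the surjectivity of $\gamma$: given any $\phi\in D(\Theta)$, lift it to some $f\in D(S^*)$ with $\gamma_0 f=\phi$ and $\gamma_1 f=\Theta\phi$, whence $f\in D(S_\Theta)=D(S_{\Theta'})$ forces $\phi\in D(\Theta')$ and $\Theta'\phi=\Theta\phi$, and symmetry closes the argument. For $S_\Theta^*=S_{\Theta^*}$, I would substitute $\gamma_1 f=\Theta\gamma_0 f$ into \eqref{eq:LagrangeId}: any $g\in D(S_\Theta^*)$ automatically belongs to $D(S^*)$ because $D(S)\subset D(S_\Theta)$, and the residual boundary pairing reads $(\Theta\gamma_0 f\mid\gamma_0 g)_{\mathsf{h}}=(\gamma_0 f\mid\gamma_1 g)_{\mathsf{h}}$ for every $\phi=\gamma_0 f\in D(\Theta)$, which is exactly the statement that $\gamma_0 g\in D(\Theta^*)$ with $\Theta^*\gamma_0 g=\gamma_1 g$; the reverse inclusion is symmetric.

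The last assertion is then essentially tautological: $S_\Theta=S_\Theta^*$ is equivalent, by the just-proved identity and the injectivity of $\Theta\mapsto S_\Theta$, to $\Theta=\Theta^*$; conversely, any self-adjoint extension $\widetilde S$ of $S$ satisfies $S\subset\widetilde S\subset S^*$, so reading off the boundary data via $\gamma$ on $D(\widetilde S)$ produces a self-adjoint $\Theta$ with $S_\Theta=\widetilde S$. The step I expect to be most delicate is the bookkeeping around the \emph{unboundedness} of $\Theta$: one has to verify that $\gamma_0$ and $\gamma_1$ individually, not merely the pair, behave well on $D(S_\Theta)$, that $\Theta^*$ is densely defined so that the identity $S_\Theta^*=S_{\Theta^*}$ even makes sense, and that the graph-norm continuity of $\gamma$ meshes correctly with the natural topology induced by $\Theta$. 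Everything else is a direct unwinding of \eqref{eq:LagrangeId}.
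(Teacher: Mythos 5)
The paper offers no proof of this proposition: it is quoted verbatim from the boundary-triple literature with a citation to \cite{Mal92}, so there is no in-paper argument to compare against. Your route is the standard one from that literature, and most of it is sound. The von Neumann construction checks out: with $\mathsf{h}=N_+$ carrying the $H$-inner product, both sides of \eqref{eq:LagrangeId} reduce to $2i\bigl[(f_+|g_+)-(f_-|g_-)\bigr]$ by unitarity of $U$, and $\gamma$ is surjective because the system $\gamma_0f=\phi$, $\gamma_1f=\psi$ is solved by $f_+=\tfrac12(\phi-i\psi)$, $U^{-1}f_-=\tfrac12(\phi+i\psi)$. The converse direction (a boundary triple forces equal deficiency indices because $S^*|_{\ker\gamma_0}$ is self-adjoint) is correct, as are the closedness of $S_\Theta$, the injectivity of $\Theta\mapsto S_\Theta$, and the identity $S_\Theta^*=S_{\Theta^*}$. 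Two small remarks: the closed graph theorem is most naturally applied to the \emph{pair} $\gamma=(\gamma_0,\gamma_1)$ (closedness of $\gamma_0$ alone does not follow readily from \eqref{eq:LagrangeId}, whereas closedness of the pair does, using surjectivity to separate the two boundary components in the limit); and you should record explicitly that $D(S)=\ker\gamma_0\cap\ker\gamma_1$, which you use tacitly when asserting $S\subset S_\Theta$ and when restricting the boundary pairing in the adjoint computation.

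The genuine gap is in the last sentence of your argument. Given a self-adjoint extension $\widetilde S$ with $S\subset\widetilde S\subset S^*$, the boundary data $\{(\gamma_0f,\gamma_1f)\,:\,f\in D(\widetilde S)\}$ is a closed \emph{linear relation} in $\mathsf{h}\times\mathsf{h}$, and it is the graph of an operator only if $D(\widetilde S)\cap\ker\gamma_0=D(S)$, i.e.\ only if $\widetilde S$ is disjoint from the distinguished extension $A_0\doteq S^*|_{\ker\gamma_0}$. The extension $A_0$ itself is self-adjoint but corresponds to the purely multivalued relation $\{0\}\times\mathsf{h}$, so it is \emph{not} of the form $S_\Theta$ for any operator $\Theta$; "reading off the boundary data" therefore does not produce a densely defined operator in general, and the surjectivity step fails as written. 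The correct statement is a bijection between self-adjoint linear \emph{relations} on $\mathsf{h}$ and self-adjoint extensions of $S$, with self-adjoint operators parametrising only the extensions transversal to $A_0$. This imprecision is already present in the proposition as stated (it is the customary abuse in the boundary-triple literature and is harmless for the application in Theorem \ref{Theorem: construction of the advanced and retarded propagators}), but your proposed proof of that step cannot be repaired without either enlarging $\Theta$ to a relation or restricting the class of extensions.
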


%\begin{rem}
%	It is important to observe that boundary triples are a particularly convenient tool to discuss the self-adjoint extensions of elliptic differential operators on Riemannian manifolds, see \cite{HSF12} as well as \cite{DDF18}. In these cases  the maps $\gamma_0$ and $\gamma_1$ in Definition \ref{Def: boundary triples} are trace maps between suitable Sobolev spaces. However, in this paper, we need to modify partly this correspondence since we work with twisted Sobolev spaces, with $\gamma_-$ playing the r\^{o}le of $\gamma_1$ and $\gamma_+$ playing the r\^{o}le of $\gamma_0$, {\it cf.} Equation \eqref{Eq: Strong boundary KG}. 
%\end{rem}

Noteworthy is the application of this framework to the case where the r\^{o}le of $S$ is played by a second order elliptic partial differential operator $E$. Observe that this symbol is employed having in mind the subsequent application to Equation \eqref{Eq: static KG operator}. To construct a boundary triple associated with $E^*$, let $n$ be the unit, outward pointing, normal of $\partial\Sigma$ and let
$$\Gamma_0\colon H^2(\Sigma)\ni f\mapsto\Gamma f\in H^{3/2}(\Sigma),\,\qquad\Gamma_1\colon H^2(\Sigma)\ni f\mapsto-\Gamma \nabla_n f\in H^{1/2}(\Sigma)\,,$$
where $H^k(\Sigma)$ indicates the Sobolev space associated to the Riemannian manifold $(\Sigma,\kappa)$ introduced at the end of Section \ref{Sec: Manifolds of bounded geometry}. Here $\Gamma:H^s(\Sigma)\to H^{s-\frac{1}{2}}(\Sigma)$, $s>\frac{1}{2}$ is the continuous surjective extension of the restriction map from $C^\infty_0(\Sigma)$ to $C^\infty_0(\partial\Sigma)$, {\it cf.} \cite[Th. 4.10 \& Cor. 4.12]{GS13}.  In addition, since the inner product $(\,|\,)_{L^2(\partial\Sigma)}$ on $L^2(\partial\Sigma)\equiv L^2(\partial\Sigma;\iota^*_\Sigma d\mu_g)$, $\iota_\Sigma:\partial\Sigma\hookrightarrow\Sigma$, extends continuously to a pairing on $H^{-1/2}(\partial\Sigma)\times H^{1/2}(\partial\Sigma)$ as well as on $H^{-3/2}(\partial\Sigma)\times H^{3/2}(\partial\Sigma)$, there exist isomorphisms
$$\iota_\pm\colon H^{\pm 1/2}(\partial\Sigma)\to L^2(\partial\Sigma),\qquad j_\pm\colon H^{\pm 3/2}(\partial\Sigma)\to L^2(\partial\Sigma)\,,$$
such that, for all $(\psi,\phi)\in H^{1/2}(\partial\Sigma)\times H^{-1/2}(\partial\Sigma)$ and for all $(\widetilde{\psi},\widetilde{\phi})\in H^{3/2}(\partial\Sigma)\times H^{-3/2}(\partial\Sigma)$, 
\begin{align*}
	(\psi,\phi)_{(1/2,-1/2)}=(\iota_+\psi|\,\iota_-\phi)_{L^2(\partial\Sigma)}\,,\quad
	(\widetilde{\psi},\widetilde{\phi})_{(3/2,-3/2)}=(j_+\widetilde{\psi}|\,j_-\widetilde{\phi})_{L^2(\partial\Sigma)}\,,
\end{align*}
where $(,)_{(1/2,-1/2)}$ and $(,)_{(3/2,-3/2)}$ stand for the duality pairings between the associated Sobolev spaces.

\begin{rem}
Note that in the massless case, the two trace operators $\Gamma_0$ and $\Gamma_1$ coincide respectively with the restriction to $H^2(M)$ of the traces $\gamma_-$ and $\gamma_+$ introduced in Theorem \ref{Thm: gamma-} and in Lemma \ref{Lem: gamma+}. 
\end{rem}

Gathering all the above ingredients, we can state the following proposition, {\it cf.} \cite[Thm. 24 \& Rmk 25]{DDF18}:

\begin{prop}\label{Proposition: boundary triple of the Laplacian}
	Let $E^*$ be the adjoint of a second order, elliptic, partial differential operator on a Riemannian manifold $(\Sigma,\kappa)$ with boundary and of bounded geometry. Let 
	\begin{gather}\label{Equation: Dirichlet boundary map for Laplacian}
		\gamma_0\colon H^2(M)\ni f\mapsto \iota_+\Gamma_0f\in L^2(\partial  M)\,,\\
		\label{Equation: Neumann boundary map for Laplacian}
		\gamma_1\colon H^2(M)\ni f\mapsto j_+\Gamma_1f\in L^2(\partial  M)\,,
	\end{gather}
	Then $(L^2(\partial  M),\gamma_0,\gamma_1)$ is a boundary triple for $E^*$.
\end{prop}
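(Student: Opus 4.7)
The plan is to verify in sequence the two defining properties of a boundary triple. As a preliminary step I would identify $D(E^*)$ with $H^2(M)$; since $E$ is a second order elliptic operator on a Riemannian manifold of bounded geometry, this follows from standard regularity-up-to-the-boundary combined with the Sobolev theory for bounded geometry reviewed in \cite{GS13}. Once this is granted, $\Gamma_0$ and $\Gamma_1$ are continuous on $D(E^*)$ with values respectively in $H^{3/2}(\partial M)$ and $H^{1/2}(\partial M)$, and post-composing with the appropriate isomorphism among $\iota_\pm,j_\pm$ yields two well-defined continuous maps $\gamma_0,\gamma_1:D(E^*)\to L^2(\partial M)$.

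For the Lagrange identity \eqref{eq:LagrangeId}, I would first establish it on $C^\infty(\overline{M})\cap H^2(M)$ by integrating by parts twice with respect to $d\mu_\kappa$, obtaining the classical Green formula
\begin{equation*}
(E^* f\,|\,f')_{L^2(M)}-(f\,|\,E^* f')_{L^2(M)} = (\Gamma_1 f,\Gamma_0 f')_{\partial M}-(\Gamma_0 f,\Gamma_1 f')_{\partial M},
\end{equation*}
where the boundary pairings are the natural Sobolev dualities $H^{1/2}\times H^{-1/2}$ and $H^{3/2}\times H^{-3/2}$. Rewriting these dualities via the isomorphisms $\iota_\pm$ and $j_\pm$ recasts the right-hand side as the difference $(\gamma_1 f|\gamma_0 f')_{L^2(\partial M)}-(\gamma_0 f|\gamma_1 f')_{L^2(\partial M)}$ required by Definition \ref{Def: boundary triples}. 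Since both sides are jointly continuous in $(f,f')\in H^2(M)\times H^2(M)$, density of the smooth subspace in $H^2(M)$ extends the identity to all of $D(E^*)$.

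For surjectivity of $\gamma=(\gamma_0,\gamma_1):D(E^*)\to L^2(\partial M)\times L^2(\partial M)$, I would invoke the Lions--Magenes lifting theorem, transferred to manifolds of bounded geometry in \cite{GS13}: the joint trace
\begin{equation*}
(\Gamma_0,\Gamma_1):H^2(M)\longrightarrow H^{3/2}(\partial M)\times H^{1/2}(\partial M)
\end{equation*}
admits a bounded right inverse $R$. For any $(g_0,g_1)\in L^2(\partial M)\times L^2(\partial M)$, applying $R$ to the pair obtained from $(g_0,g_1)$ through the inverses of the appropriate isomorphisms $j_+$ and $\iota_+$ produces an $H^2$ element whose image under $\gamma$ is $(g_0,g_1)$, which is exactly the surjectivity required.

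The main obstacle will be the bookkeeping of the various boundary pairings in the Green formula: it intrinsically produces Sobolev dualities, not Hilbert-space inner products, and the conversion into an $L^2(\partial M)$ inner product hinges on the compatibility of $\iota_+,j_+$ with their ``negative'' counterparts $\iota_-,j_-$ spelled out in the excerpt. Once this compatibility is tracked correctly, the remainder of the argument reduces to a systematic combination of density, continuity and trace theory on manifolds of bounded geometry, so that the bulk of the technical work is absorbed into \cite{GS13} and \cite{DDF18}.
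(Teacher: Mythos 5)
The paper itself offers no proof of this proposition: it is imported verbatim from \cite[Thm.~24 \& Rmk.~25]{DDF18}, so your attempt can only be measured against the standard construction behind that reference. The architecture you propose (Green's formula for the Lagrange identity, the Lions--Magenes lifting for surjectivity, the isomorphisms $\iota_\pm,j_\pm$ to convert Sobolev dualities into $L^2(\partial\Sigma)$ inner products) is indeed the right skeleton, and the surjectivity argument is correct \emph{as a statement about the restriction of $\gamma=(\gamma_0,\gamma_1)$ to $H^2$}.

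The genuine gap is your very first step: $D(E^*)$ is \emph{not} $H^2$. For Definition \ref{Def: boundary triples} and Proposition \ref{Prop: self-adjoint extensions via boundary triples} to apply, $E$ must be the closed symmetric (minimal) operator, so $E^*$ is the maximal one, with domain $\{u\in L^2(\Sigma)\;|\;Eu\in L^2(\Sigma)\ \text{distributionally}\}$. This domain contains every $L^2$ solution of $Eu=0$; already for the Laplacian on a disk these include all harmonic functions with boundary data merely in $H^{-1/2}(\partial\Sigma)$, which fail to be in $H^2$. Elliptic regularity up to the boundary is available only \emph{after} an elliptic boundary condition has been imposed, so it cannot be invoked to identify the maximal domain with $H^2$. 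This is not a removable technicality but the actual content of the proposition: on $D(E^*)$ the traces $\Gamma_0,\Gamma_1$ take values only in $H^{-1/2}(\partial\Sigma)$ and $H^{-3/2}(\partial\Sigma)$ and are not continuous in the graph norm with values in $L^2(\partial\Sigma)$, so your ``density plus joint continuity'' extension of the Lagrange identity from the smooth subspace to $D(E^*)$ breaks down, and surjectivity of $\gamma$ must be established on $D(E^*)$ rather than on $H^2$. Repairing this requires either regularizing $\Gamma_1$ by a Dirichlet-to-Neumann correction in the sense of Grubb \cite{Grubb68}, or reading the statement in the weakened (quasi-boundary-triple) sense addressed by the very remark of \cite{DDF18} that the paper cites. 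A secondary, smaller issue: even on $H^2\times H^2$ the conversion of the Green boundary terms is not pure bookkeeping, since the duality identities yield $(\iota_+\Gamma_1 f\,|\,\iota_-\Gamma_0 f')_{L^2(\partial\Sigma)}$ whereas the triple requires $(j_+\Gamma_1 f\,|\,\iota_+\Gamma_0 f')_{L^2(\partial\Sigma)}$; these agree only for specific, mutually adjoint choices of the isomorphisms (e.g.\ powers of a boundary Laplacian), which you flag as ``the main obstacle'' but never actually exhibit.
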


\noindent Combining all data together, particularly Proposition \ref{Prop: self-adjoint extensions via boundary triples} and Proposition \ref{Proposition: boundary triple of the Laplacian} we can state the following theorem, whose proof can be found in \cite[Thm 29]{DDF18}

\begin{thm}\label{Theorem: construction of the advanced and retarded propagators}
	Let $(M,g)$ be a static, globally hyperbolic, asymptotically AdS spacetime as per Definition \ref{Def: standard static}.
	Let $(\gamma_0,\gamma_1,L^2(\partial M))$ be the boundary triple as in Proposition \ref{Proposition: boundary triple of the Laplacian} associated with $E^*$, the adjoint of the elliptic operator defined in \eqref{Eq: static KG operator} and let $K$ be a densely defined self-adjoint operator on $L^2(\partial\Sigma)$ which individuates a static and physically admissible boundary condition as per Definition \ref{Def: admissible boundary conditions}.
	Let $E_K$ be the self-adjoint extension of $E$ defined as per Proposition \ref{Prop: self-adjoint extensions via boundary triples} by $E_K\doteq E^*|_{D(E_K)}$, where $D(E_K)\doteq\ker(\gamma_1-K\gamma_0)$.
	Furthermore, let assume that the spectrum of $E_{K}$ is bounded from below.\\
	Then, calling $\Theta_K$ the associated boundary condition, the advanced and retarded Green's operators $\mathsf{G}^\pm_{\Theta_K}$ associated to the wave operator
	$\partial_t^2+E_K$ exist and they are unique.
	They are completely determined in terms of $\mathcal{G}^\pm_{\Theta_K}\in\mathcal{D}^\prime(\mathring{M}\times\mathring{M})$.
	These are bidistributions such that $\mathcal{G}^-_{\Theta_K}=\vartheta(t-t^\prime)\mathcal{G}_{\Theta_K}$ and $\mathcal{G}^+_{\Theta_K}=-\vartheta(t^\prime-t)\mathcal{G}_{\Theta_K}$ where $\mathcal{G}_{\Theta_K}\in\mathcal{D}^\prime(\mathring{M}\times\mathring{M})$ is such that, for all $f\in \mathcal{D}(\mathring{M})$
	\begin{align}\label{Equation: construction of the causal propagator}
		\mathcal{G}_{\Theta_K}(f_1,f_2)\doteq\int_{\mathbb{R}^2}\textrm{d}t\textrm{d}t'\,\bigg( f_1(t)\bigg| (-E_K)^{-\frac{1}{2}}\sin\big[(-E_K)^{\frac{1}{2}}(t-t^\prime)\big]f_2(t^\prime)\bigg),
	\end{align}
	where $f(t)\in H^2(\Sigma)$ denotes the evaluation of $f$, regarded as an element of $C_{\textrm{c}}^\infty(\mathbb{R},H^\infty(\Sigma))$ and $E_K^{-\frac{1}{2}}\sin\big[E_K^{\frac{1}{2}}(t-t^\prime)]$ is defined exploiting the functional calculus for $E_K$.
	Moreover it holds that 
	$$\mathsf{G}^\pm_{\Theta_K}\colon\mathcal{D}(\mathring{M})\to C^\infty(\mathbb{R},H^\infty_{\Theta_K}(\Sigma))\,,$$
	where $H^\infty_{\Theta_K}(\Sigma)\doteq\bigcap_{k\geq 0}D(E_{\Theta_K}^k)$.
	In particular,
	\begin{align}\label{Equation: boundary condition for advanced and retarded propagators}
		\gamma_1\big(\mathsf{G}^\pm_{\Theta_K} f\big)=\Theta_K\gamma_0\big(\mathsf{G}^\pm_{\Theta_K} f\big)
		\qquad\forall f\in C^\infty_0(\mathring{M})\,.
	\end{align}
\end{thm}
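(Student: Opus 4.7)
The plan is to exploit the spectral calculus for the self-adjoint operator $E_K$ on $L^2(\Sigma)$ to construct the causal propagator explicitly as an operator-valued family in time, then read off the advanced and retarded fundamental solutions by the usual Heaviside cutoff, and finally verify the claimed mapping properties and boundary condition. Since the spectrum of $E_K$ is bounded from below, after possibly shifting by a constant we may regard $-E_K$ as a non-negative self-adjoint operator on $L^2(\Sigma)$ with domain $D(E_K)=\ker(\gamma_1-K\gamma_0)$ provided by Proposition \ref{Prop: self-adjoint extensions via boundary triples}. The bounded Borel function $\lambda\mapsto \lambda^{-1/2}\sin(\lambda^{1/2}t)$, extended analytically at $\lambda=0$, then produces via functional calculus a strongly continuous one-parameter family $U(t)\doteq(-E_K)^{-1/2}\sin[(-E_K)^{1/2}t]$ of bounded operators on $L^2(\Sigma)$ with $U(0)=0$ and $\partial_t U(0)=\mathbb{I}$.

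Next, I would define $\mathcal{G}_{\Theta_K}\in\mathcal{D}'(\mathring{M}\times\mathring{M})$ by the formula \eqref{Equation: construction of the causal propagator}, regarding a test function $f\in\mathcal{D}(\mathring{M})$ as an element of $C^\infty_c(\mathbb{R},H^\infty(\Sigma))$. Differentiating twice under the spectral integral and using $\partial_t^2 U(t)=-(-E_K)U(t)$ verifies $(P_{\Theta_K}\otimes\mathbb{I})\mathcal{G}_{\Theta_K}=0$ in the sense of distributions, while the initial conditions in \eqref{eq: equations fundamental solutions} follow immediately from $U(0)=0$ and $\partial_t U(0)=\mathbb{I}$. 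The advanced and retarded bidistributions are then defined through \eqref{eq: adv/ret from causal}, and the associated operators $\mathsf{G}^\pm_{\Theta_K}$ are defined by their action on $\mathcal{D}(\mathring{M})$. Uniqueness of $\mathsf{G}^\pm_{\Theta_K}$ is inherited from Corollary \ref{Cor: uniqueness of the solution}, which applies since a static $\Theta_K$ is automatically local in time and self-adjoint.

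The mapping property $\mathsf{G}^\pm_{\Theta_K}\colon\mathcal{D}(\mathring{M})\to C^\infty(\mathbb{R},H^\infty_{\Theta_K}(\Sigma))$ follows from the fact that functional calculus with bounded Borel functions of $E_K$ maps $D(E_K^j)$ into itself for every $j\in\mathbb{N}$, so in particular preserves the intersection $H^\infty_{\Theta_K}(\Sigma)=\bigcap_{k\ge 0}D(E_K^k)$; by construction each element of this intersection satisfies $\Gamma_1 h=K\Gamma_0 h$, which translates via the identifications of Proposition \ref{Proposition: boundary triple of the Laplacian} into the identity \eqref{Equation: boundary condition for advanced and retarded propagators} at every time slice. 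Smoothness in $t$ follows from the strong continuity of $U(t)$ together with the smoothness of $f$ in the time variable, the derivatives being taken inside the functional calculus.

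The main obstacle, and the reason the argument is non-trivial, is reconciling the abstract spectral construction on the spatial Cauchy slice with the twisted Sobolev framework introduced in Section \ref{Sec: Twisted Sobolev Spaces}: one must verify that the domain $D(E_K)$ coming from the boundary triple is compatible with the traces $\gamma_-,\gamma_+$ of Theorem \ref{Thm: gamma-} and Lemma \ref{Lem: gamma+}, so that the condition $\gamma_1 f=K\gamma_0 f$ on the spatial problem indeed corresponds to $\gamma_+ u=\Theta_K\gamma_- u$ on the spacetime problem. This is exactly where the massless assumption $\nu=(n-1)/2$ enters, since in that regime $\gamma_-$ and $\gamma_+$ reduce on $H^2(\Sigma)$ to $\Gamma_0$ and $\Gamma_1$, and where the hypothesis that $K$ induces a static and physically admissible boundary condition (Definition \ref{Def: admissible boundary conditions}) ensures that the identification is preserved under the time evolution generated by $E_K$. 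Together with the spectral lower bound, this compatibility furnishes the bridge that makes the spectral formula \eqref{Equation: construction of the causal propagator} genuinely define the propagator associated with $P_{\Theta_K}$.
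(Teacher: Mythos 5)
Your proposal is correct and follows essentially the same route as the paper, which simply outsources the proof to \cite[Thm.~29]{DDF18}: there, too, one builds $\mathcal{G}_{\Theta_K}$ by applying the functional calculus of the self-adjoint boundary-triple extension $E_K$ to the (entire) function $\mu\mapsto\mu^{-1/2}\sin(\mu^{1/2}t)$, obtains $\mathcal{G}^\pm_{\Theta_K}$ by the Heaviside cutoff, and reads off the mapping property and the boundary condition from the invariance of $\bigcap_k D(E_K^k)$ under the calculus. The only point worth tightening is your ``shift by a constant'': one should instead observe that, with the spectrum bounded below, the entire function above is already bounded on the spectrum locally uniformly in $t$, since shifting $E_K$ would alter the wave operator.
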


\begin{rem}
	Observe that, in Theorem \ref{Theorem: construction of the advanced and retarded propagators} we have constructed the advanced and retarded fundamental solutions $\mathcal{G}^\pm_\Theta$ as elements of $\mathcal{D}^\prime(\mathring{M}\times\mathring{M})$. Yet we can combine this result with Theorem \ref{Thm: Existence_Uniqueness_Propagators} to conclude that there must exist unique and advanced retarded propagators on the whole $M$ whose restriction to $\mathring{M}$ coincides with $\mathcal{G}^\pm_{\Theta_K}$. With a slight abuse of notation we shall refer to these extended fundamental solutions with the same symbol.
\end{rem}

\subsection{Existence of Hadamard States on Static Spacetimes}\label{Sec: Existence of Hadamard States on Static Spacetimes}

In this section, we discuss the existence of Hadamard two-point functions. We stress that the so-called Hadamard condition and its connection to microlocal analysis have been first studied and formulated under the assumption that the underlying spacetime is without boundary and globally hyperbolic. We shall not enter into the details and we refer an interested reader to the survey in \cite{Khavkine:2014mta}.

As outlined in the introduction, if the underlying background possesses a timelike boundary, the notion of Hadamard two-point function needs to be modified accordingly. Here we follow the same rationale advocated in \cite{DF16, Dappiaggi:2017wvj} and also in \cite{Dybalski:2018egv, Wro17}.

\begin{defn}\label{Def: Hadamard 2-pt function}
Let $(M,g)$ be a globally hyperbolic, asymptotically AdS spacetime as per Definition \ref{Def: asymptotically AdS}. A bi-distribution $\lambda_2\in\mathcal{D}^\prime(M\times M)$ is called of {\em Hadamard form} if its restriction to $\mathring{M}$ has the following wavefront set
\begin{equation}\label{eq: Hadamard_Wavefront_Set}
WF( \lambda_2 ) = \left\{ (p,k,p^\prime,-k^\prime) \in T^*(\mathring{M}\times\mathring{M}) \setminus \{ 0 \}\; |\; (p,k) \sim (p^\prime,k^\prime)\;\textrm{and}\; k \triangleright 0  \right\},
\end{equation}
where $\sim$ entails that $(p,k)$ and $(p^\prime,k^\prime)$ are connected by a generalized broken bicharactersitic, while $k\triangleright 0$ means that the co-vector $k$ at $p\in\mathring{M}$ is future-pointing. Furthermore we call $\lambda_{2,\Theta}\in\mathcal{D}^\prime(M\times M)$ a {\em Hadamard two-point function} associated to $P_\Theta$, if, in addition to Equation \eqref{eq: Hadamard_Wavefront_Set}, it satisfies
$$(P_\Theta\otimes\mathbb{I})\lambda_{2,\Theta}=(\mathbb{I}\otimes P_\Theta)\lambda_{2,\Theta}=0,$$
and, for all $f,f^\prime\in\mathcal{D}(\mathring{M})$,
\begin{equation}\label{Eq: constraints on 2-pt function}
\lambda_{2,\Theta}(f,f)\geq 0,\quad\textrm{and}\quad\lambda_{2,\Theta}(f,f^\prime)-\lambda_{2,\Theta}(f^\prime,f)=i\mathcal{G}_\Theta(f,f^\prime),
\end{equation}
where $P_\Theta$ is the Klein-Gordon operator as in Equation \eqref{Eq: PTheta}, while $\mathcal{G}_\Theta$ is the associated causal propagator, {\it cf.} Remark \ref{Rem: causal propagator}.
\end{defn}

\begin{rem}\label{Rem: bulk-to-bulk two-point function}
	To make contact with the terminology often used in theoretical physics, given a Hadamard two-point function $\lambda_{2, \Theta}$, we can identify the following associated bidistributions:
	\begin{itemize}
	\item the {\em bulk-to-bulk two-point function } $\mathring{\lambda}_{2,\Theta}\in\mathcal{D}^\prime(\mathring{M}\times\mathring{M})$ such that $\mathring{\lambda}_{2,\Theta}\doteq\left.\lambda_{2,\Theta}\right|_{\mathring{M}}$ is the restriction of the Hadamard two-point function to $\mathring{M}\times\mathring{M}$. 
	\item the {\em boundary-to-boundary two-point function} $\lambda_{2,\partial,\Theta}\in\mathcal{D}^\prime(\partial M\times\partial M)$ such that $\lambda_{2,\partial,\Theta}\doteq(\iota_\partial^*\otimes\iota_\partial^*)\lambda_{2,\Theta}$ where $\iota_\partial:\partial M\to M$ is the embedding map of the boundary in $M$. 
\end{itemize}
Observe that $\lambda_{2,\partial,\Theta}$ is well-defined on account of Equation \eqref{eq: Hadamard_Wavefront_Set} and of \cite[Thm. 8.2.4]{Hor1}.
\end{rem}

The existence of Hadamard two-point functions is not a priori obvious and it represents an important question at the level of applications. Here we address it in two steps. First we focus on static, globally hyperbolic, asymptotically anti-de Sitter spacetimes and subsequently we drop the assumption that the underlying background is static, proving existence of Hadamard two-point functions via a deformation argument. 

Let us focus on the first step. To this end, on the one hand we need the boundary condition $\Theta$ to abide to Hypothesis \ref{hypothesis_WF_Theta}, while, on the other hand we make use of some auxiliary results from \cite{Wro17}, specialized to the case in hand. In the next statements it is understood that to any Hadamard two-point function $\lambda_{2,\Theta}$, it corresponds $\Lambda_\Theta: \dot{\mathcal{H}}_0^{-k,-\infty}(M) \rightarrow \mathcal{H}^{k,-\infty}_{loc}(M)$, with $k = \pm 1$. Recalling Definition \ref{Def: ellptic PsiDO} and \ref{Def: WF operatorial}, the following lemma holds true, {\it cf.} \cite[Lem. 5.3]{Wro17}:

\begin{lemma}
For any $q_1,q_2 \in {}^b S^*M$, $(q_1,q_2) \not \in WF^{Op}(\Lambda_\Theta)$ if and only if there exist neighbourhoods $\Gamma_i$ of $q_i$, $i=1,2$, such that for all $B_i \in \Psi_b^0(M)$ elliptic at $q_i$ satisfying $WF_b^{Op}(B_i) \subset \Gamma_i$, $B_1 \Lambda B_2 \in \mathcal{W}^{- \infty}_b(M)$.
\end{lemma}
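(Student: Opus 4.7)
I would attack the two implications separately, with the non-trivial direction based on the standard microlocal parametrix trick adapted to the $b$-pseudodifferential calculus of Section \ref{Sec: b-pseudodifferential operators}. Let me first observe that the apparent discrepancy between the adjoint in Definition \ref{Def: WF operatorial} and its absence in the statement of the lemma is harmless, since the class of $B_2\in\Psi_b^0(M)$ elliptic at $q_2$ with essential support in a conic neighbourhood of $q_2$ is stable under taking adjoints; so the two formulations are equivalent up to a renaming.

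The ``if'' direction is immediate: under the hypothesis one chooses any single pair $B_i\in\Psi_b^0(M)$ elliptic at $q_i$ with $WF_b^\prime(B_i)\subset \Gamma_i$ (for instance, a microlocal cutoff concentrated near $q_i$). The universal property then yields $B_1\Lambda_\Theta B_2^*\in\mathcal{W}_b^{-\infty}(M)$, which is exactly Definition \ref{Def: WF operatorial}.

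For the ``only if'' direction, start from operators $B_1,B_2\in\Psi_b^0(M)$, elliptic at $q_1,q_2$ respectively, with $B_1\Lambda_\Theta B_2^*\in\mathcal{W}_b^{-\infty}(M)$. By the elliptic parametrix construction within $\Psi_b^0(M)$, pick $E_i\in\Psi_b^0(M)$ and conic open neighbourhoods $U_i$ of $q_i$ such that the residuals $R_i\doteq E_iB_i-\mathbb{I}$ satisfy $WF_b^\prime(R_i)\cap U_i=\emptyset$. Choose $\Gamma_i\Subset U_i$ as the sought neighbourhoods. For any $B_i^\prime\in\Psi_b^0(M)$ elliptic at $q_i$ with $WF_b^\prime(B_i^\prime)\subset\Gamma_i$, the identity $B_i^\prime=(B_i^\prime E_i)B_i-B_i^\prime R_i$ holds, and the disjointness of the wavefront sets forces $B_i^\prime R_i\in\Psi_b^{-\infty}(M)$. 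Combining this with the analogous decomposition of $(B_2^\prime)^*$ produces
\begin{equation*}
B_1^\prime\,\Lambda_\Theta\,(B_2^\prime)^* \;=\; (B_1^\prime E_1)\,(B_1\Lambda_\Theta B_2^*)\,(B_2^\prime E_2)^* \;+\; \mathcal{R},
\end{equation*}
where $\mathcal{R}$ is a sum of three composition terms, each carrying at least one factor in $\Psi_b^{-\infty}(M)$.

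It remains to verify that every summand lies in $\mathcal{W}_b^{-\infty}(M)$. The leading one qualifies because its middle factor regularizes $\dot{\mathcal{H}}_0^{-1,-\infty}(M)$ into $\mathcal{H}_{loc}^{1,\infty}(M)$ by hypothesis, while the flanking factors $B_1^\prime E_1$ and $(B_2^\prime E_2)^*$ belong to $\Psi_b^0(M)$ and are therefore continuous on both scales thanks to Lemma \ref{Lemma: PDO zero order bounded} and its iteration to the spaces of Definition \ref{Def: tildeH spaces}. Each residual summand, carrying a properly supported $\Psi_b^{-\infty}(M)$ factor, is likewise regularizing from $\dot{\mathcal{H}}_0^{-1,-\infty}(M)$ to $\mathcal{H}_{loc}^{1,\infty}(M)$, and this regularizing property is preserved under composition with bounded order-zero $b$-$\Psi$DOs. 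The main technical obstacle I anticipate is the bookkeeping of proper supports and of the interaction between the $0$ and $loc$ subscripts near $\partial M$, which is settled by a partition-of-unity argument in the spirit of \cite{GaWr18} and of Section \ref{Sec: Twisted Sobolev Spaces}.
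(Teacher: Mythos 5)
The paper does not contain its own proof of this statement: it is imported from \cite[Lem. 5.3]{Wro17}, so your proposal can only be measured against the standard argument, which is indeed the microlocal elliptic-parametrix decomposition you use. Your ``if'' direction, your handling of the adjoint discrepancy, the construction of $E_i$ with $WF_b^\prime(E_iB_i-\mathbb{I})\cap U_i=\emptyset$, and the conclusion $B_i^\prime R_i\in\Psi_b^{-\infty}(M)$ are all correct.

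The gap is in the last step, where you assert that \emph{every} residual summand carrying a $\Psi_b^{-\infty}(M)$ factor is regularizing from $\dot{\mathcal{H}}_0^{-1,-\infty}(M)$ to $\mathcal{H}^{1,\infty}_{loc}(M)$. This holds only when the smoothing factor sits to the \emph{left} of $\Lambda_\Theta$: a properly supported $S\in\Psi_b^{-\infty}(M)$ maps $\mathcal{H}^{1,-\infty}_{loc}(M)$ into $\mathcal{H}^{1,\infty}_{loc}(M)$, so $S\,\Lambda_\Theta\,A\in\mathcal{W}_b^{-\infty}(M)$ for any $A\in\Psi_b^0(M)$. The cross term $(B_1^\prime E_1)B_1\,\Lambda_\Theta\,(B_2^\prime R_2)^*$, whose only smoothing factor is on the \emph{right}, does not follow from this reasoning: right-composition with $(B_2^\prime R_2)^*\in\Psi_b^{-\infty}(M)$ improves the regularity of the argument fed to $\Lambda_\Theta$, but $\Lambda_\Theta$ is only assumed continuous into $\mathcal{H}^{1,-\infty}_{loc}(M)$ (the codomain $\mathcal{H}^{1,\infty}_{loc}(M)$ printed in Definition \ref{Def: WF operatorial} would make $WF_b^{Op}$ empty and must be read as $\mathcal{H}^{1,-\infty}_{loc}(M)$, as in the surrounding text), so the output gains nothing and membership in $\mathcal{W}_b^{-\infty}(M)$ does not follow. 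A rank-one model $\Lambda=u\otimes v$ makes the asymmetry transparent: $S\Lambda=(Su)\otimes v$ is regular in the output slot, whereas $\Lambda S^*=u\otimes(Sv)$ is not unless $u$ already is. In other words, the inclusion your argument silently invokes, namely that composing a general continuous $\Lambda:\dot{\mathcal{H}}^{-1,-\infty}_0(M)\to\mathcal{H}^{1,-\infty}_{loc}(M)$ on the right with $\Psi_b^{-\infty}(M)$ lands in $\mathcal{W}_b^{-\infty}(M)$, is false. To close the gap one needs an extra input on $\Lambda_\Theta$ beyond bare continuity --- for instance that $\Lambda_\Theta$ maps $\dot{\mathcal{H}}^{-1,\infty}_0(M)$ continuously into $\mathcal{H}^{1,\infty}_{loc}(M)$, or a symmetric formulation of the operator classes under adjoints so that the offending term can be treated through $(B_2^\prime R_2)\Lambda_\Theta^*B_1^*(B_1^\prime E_1)^*$, or the positivity used in Proposition \ref{prop:prop_wavefront_two_points}. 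This one term should be isolated and argued separately rather than absorbed into a blanket claim about $\Psi_b^{-\infty}(M)$ factors.
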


Observe that this lemma entails in particular that, given any $f_i\in C^\infty(M)$, $i=1,2$ such that $\textrm{supp}(f_i)\subset\mathring{M}$ then $f_1 \Lambda_\Theta f_2$ has a smooth kernel over $\mathring{M}\times\mathring{M}$. In addition the following also holds true, {\it cf.} \cite[Prop. 5.6]{Wro17}:

\begin{prop}\label{prop:prop_wavefront_two_points}
Let $\Lambda_\Theta$ identify an Hadamard two-point function. If $(q_1,q_2) \in WF_b^{Op}(\Lambda_\Theta)$ for $q_1,q_2 \in T^*M \setminus \{0\}$, then $(q_1,q_1) \in WF_b^{Op}(\Lambda_\Theta)$ or $(q_2,q_2) \in WF_b^{Op}(\Lambda_\Theta)$.
\end{prop}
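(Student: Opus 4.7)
I would argue by contraposition and exploit crucially the positivity of $\Lambda_\Theta$ encoded in the first inequality of Equation \eqref{Eq: constraints on 2-pt function}. Assuming that both $(q_1,q_1) \notin WF_b^{Op}(\Lambda_\Theta)$ and $(q_2,q_2) \notin WF_b^{Op}(\Lambda_\Theta)$, by Definition \ref{Def: WF operatorial} there exist $B_i \in \Psi_b^0(M)$ elliptic at $q_i$, $i=1,2$, such that $B_i \Lambda_\Theta B_i^* \in \mathcal{W}_b^{-\infty}(M)$. The goal is to conclude that $B_1 \Lambda_\Theta B_2^* \in \mathcal{W}_b^{-\infty}(M)$, for which it is enough to verify that this composition is a continuous map $\dot{\mathcal{H}}_0^{-1,-\infty}(M) \to \mathcal{H}^{1,\infty}_{loc}(M)$.

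The starting point is the Cauchy--Schwarz inequality associated to the positive sesquilinear form defined by $\Lambda_\Theta$, namely
$$|\langle \Lambda_\Theta u, v\rangle|^2 \leq \langle \Lambda_\Theta u, u\rangle \cdot \langle \Lambda_\Theta v, v\rangle,$$
valid for every pair of admissible test elements $u,v$. I would apply this estimate with $u = B_2^* f$ and $v = B_1^* g$, yielding
$$|\langle B_1 \Lambda_\Theta B_2^* f, g\rangle|^2 \leq \langle B_2 \Lambda_\Theta B_2^* f, f\rangle \cdot \langle B_1 \Lambda_\Theta B_1^* g, g\rangle.$$
By the standing hypothesis, the two factors on the right-hand side are the bilinear forms of operators in $\mathcal{W}_b^{-\infty}(M)$, hence each of them is finite when $f,g$ are tested against arbitrarily strong $b$-Sobolev norms.

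To upgrade this pointwise bound on matrix elements to the required operator-continuity statement, I would proceed by induction on the order of regularity, using the mapping properties of $b$-pseudodifferential operators recalled in Lemma \ref{Lemma: PDO zero order bounded} and the bound displayed in Equation \eqref{Prop: Bound_PSI_Zero}. For any prescribed $N \in \mathbb{N}$ and any $A \in \Psi_b^N(M)$ elliptic at the image of $B_1$, one tests $A(B_1 \Lambda_\Theta B_2^* f)$ against a compactly supported $g$ and applies the Cauchy--Schwarz bound to $\tilde B_1 := A B_1$, which is elliptic where $B_1$ is and still satisfies, by composing with a parametrix and using $B_1 \Lambda_\Theta B_1^* \in \mathcal{W}_b^{-\infty}(M)$, the property $\tilde B_1 \Lambda_\Theta \tilde B_1^* \in \mathcal{W}_b^{-\infty}(M)$. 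Dualising on the other side to control the regularity of the input $f$ via $B_2 \Lambda_\Theta B_2^* \in \mathcal{W}_b^{-\infty}(M)$ completes the iteration and yields continuity into $\mathcal{H}_{loc}^{1,\infty}(M)$ out of $\dot{\mathcal{H}}_0^{-1,-\infty}(M)$.

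The hard part of this strategy is precisely the last step: passing from the scalar bound on $|\langle B_1 \Lambda_\Theta B_2^* f, g\rangle|$ to membership of the composition in $\mathcal{W}_b^{-\infty}(M)$ requires carefully tracking the interplay between the $b$-calculus operations and the Sobolev scale, ensuring that the auxiliary operators $\tilde B_i$ retain their ellipticity and that no genuine loss of regularity is introduced by the square roots appearing in the Cauchy--Schwarz estimate. Once this technical bookkeeping is carried out, the contrapositive is established and the proposition follows at once.
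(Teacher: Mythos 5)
Your strategy is the right one and coincides with the proof of the result the paper is quoting (the statement is imported from \cite[Prop.~5.6]{Wro17}, and that proof is exactly the Cauchy--Schwarz/positivity argument you give): assume both diagonal points are regular, pick $B_i$ elliptic at $q_i$ with $B_i\Lambda_\Theta B_i^*\in\mathcal{W}_b^{-\infty}(M)$, and use $|\langle\Lambda_\Theta B_2^*f,B_1^*g\rangle|^2\leq\langle B_2\Lambda_\Theta B_2^*f,f\rangle\,\langle B_1\Lambda_\Theta B_1^*g,g\rangle$. Two small points of hygiene: first, the definition of $WF_b^{Op}$ at a diagonal point only hands you two possibly different operators $B'$, $B''$ elliptic at $q_i$ with $B'\Lambda_\Theta (B'')^*$ smoothing, so you need the lemma stated just before the proposition (the analogue of \cite[Lem.~5.3]{Wro17}) to replace them by a single $B_i$ and obtain the symmetric combination $B_i\Lambda_\Theta B_i^*$; second, the positivity in Equation \eqref{Eq: constraints on 2-pt function} is stated on $\mathcal{D}(\mathring M)$ and must be extended by density to the arguments $B_i^*f$ you feed into it. Finally, the step you single out as ``the hard part'' is in fact immediate and requires no induction on the regularity order: since $\mathcal{W}_b^{-\infty}(M)$-membership of $B_1\Lambda_\Theta B_2^*$ is equivalent to the family of bilinear estimates $|\langle B_1\Lambda_\Theta B_2^*f,g\rangle|\leq C_N\|f\|_{\dot{\mathcal{H}}^{-1,-N}}\|g\|_{\dot{\mathcal{H}}^{-1,-N}}$ for all $N$, the Cauchy--Schwarz bound combined with the assumed mapping properties of $B_i\Lambda_\Theta B_i^*$ and duality between $\dot{\mathcal{H}}^{-1,-N}_0(M)$ and $\mathcal{H}^{1,N}_{loc}(M)$ already yields the conclusion; the auxiliary operators $\tilde B_1=AB_1$ are unnecessary.
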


Given any two points $q_1$ and $q_2$ in the cosphere bundle ${}^bS^* M$, {\it cf.} Equation \eqref{Eq: cosphere bundle} we shall write $q_1 \dot{\sim} q_2$ if both $q_1$ and $q_2$ lie in the compressed characteristic bundle $\dot{\mathcal{N}}$ and they are connected by a generalized broken bicharacteristic, {\it cf.} Definition \ref{Def: generalized broken bicharacteristics}. With these data and using \cite[Prop. 5.9]{Wro17} together with Hypothesis \ref{hypothesis_WF_Theta} and with Theorems \ref{Thm: main theorem k positivo} and \ref{Thm: main theorem k negativo}, we can establish the following operator counterpart of the propagation of singularities theorem:

\begin{prop}\label{prop:operatorial_propagation}
Let $\Lambda_\Theta: \dot{\mathcal{H}}_0^{-1,-\infty}(M) \rightarrow \mathcal{H}^{1,-\infty}_{loc}(M)$ and suppose that $(q_1,q_2) \in WF^{Op}_b(\Lambda_\Theta)$. If $P_\Theta\Lambda_\Theta = 0$, then $q_1 \in \dot{\mathcal{N}}$ and $(q_1^\prime,q_2) \in WF_b^{Op}(\Lambda_\Theta)$ for every $q_1^\prime$ such that $q_1^\prime \dot{\sim} q_1$. Similarly, if $\Lambda_\Theta P_\Theta  = 0$, then $q_2 \in \dot{\mathcal{N}}$ and $(q_1,q_2^\prime) \in WF_b^{Op}(\Lambda_\Theta)$ for all $q_2^\prime$ such that $q_2^\prime \dot{\sim} q_2$.
\end{prop}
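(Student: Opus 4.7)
The plan is to prove the statement by a microlocal lifting of the scalar propagation of singularities theorems (Theorems \ref{Thm: main theorem k positivo} and \ref{Thm: main theorem k negativo}) to the operator level, adapting to our more general boundary operator $\Theta$ the strategy used in \cite{Wro17} for the Dirichlet case. I would argue the contrapositive of each claim: if $(q_1',q_2) \notin WF_b^{Op}(\Lambda_\Theta)$ for some $q_1' \dot{\sim} q_1$, then $(q_1,q_2) \notin WF_b^{Op}(\Lambda_\Theta)$; and if $q_1 \notin \dot{\mathcal{N}}$, then $(q_1,q_2) \notin WF_b^{Op}(\Lambda_\Theta)$ for every $q_2$. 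By symmetry between the two slots (which is legitimate thanks to Hypothesis \ref{hypothesis_WF_Theta}, in particular $\Theta = \Theta^*$), treating the first statement is enough.

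The key observation is that for any fixed $u \in \dot{\mathcal{H}}_0^{-1,-\infty}(M)$ and any $B_2 \in \Psi_b^0(M)$, the element $v \doteq \Lambda_\Theta B_2^* u \in \mathcal{H}^{1,-\infty}_{loc}(M)$ satisfies $P_\Theta v = 0$ by the hypothesis $P_\Theta \Lambda_\Theta = 0$. Hence the right-hand side in Theorem \ref{Thm: main theorem k positivo} (or \ref{Thm: main theorem k negativo}) involving $WF_b^{-1,s+1}(P_\Theta v)$ is empty; using that $\Theta$ is physically admissible in the sense of Definition \ref{Def: admissible boundary conditions} also $WF_b^{-1,s+1}(\Theta v)$ is controlled by $WF_b^{-1,s+1}(P_\Theta v)$ and hence empty. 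Thus $WF_b^{1,\infty}(v)$ is a union of maximally extended GBBs contained in $\dot{\mathcal{N}}$. Consequently, if one finds $B_1'$ elliptic at $q_1'$ with $B_1' v \in \mathcal{H}^{1,\infty}_{loc}(M)$, i.e.\ $q_1' \notin WF_b^{1,\infty}(v)$, then $q_1 \notin WF_b^{1,\infty}(v)$ whenever $q_1 \dot{\sim} q_1'$, producing a $B_1 \in \Psi_b^0(M)$ elliptic at $q_1$ such that $B_1 v \in \mathcal{H}^{1,\infty}_{loc}(M)$. For the ellipticity claim $q_1 \in \dot{\mathcal{N}}$, the same reasoning is replaced by a microlocal elliptic regularity argument: if $q_1 \notin \dot{\mathcal{N}}$, then $x^{-2}P_\Theta$ is elliptic at $q_1$ and the standard microlocal parametrix construction produces $B_1$ elliptic at $q_1$ with $B_1 v$ smoothing, regardless of $B_2$ or $u$.

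The main obstacle is the transition from these pointwise (in $u$) microlocal statements to genuinely operatorial ones, i.e.\ proving that $B_1 \Lambda_\Theta B_2^*$ sends $\dot{\mathcal{H}}_0^{-1,-\infty}(M)$ continuously into $\mathcal{H}^{1,\infty}_{loc}(M)$, so that it lies in $\mathcal{W}_b^{-\infty}(M)$. This requires a uniform version of the propagation of singularities estimate in the sense of Definition \ref{Def: bounded PsiDOs}, namely that the microlocal regularity estimate $\|B_1 v\|_{\mathcal{H}^{1,s}(K)} \leq C\bigl(\|B_1' v\|_{\mathcal{H}^{1,s}(K')} + \|Ev\|_{\mathcal{H}^{1,s}(K')} + \|Pv\|_{\dot{\mathcal{H}}^{-1,s+1}(K')}\bigr)$ for a suitable regularizer $E$ follows from the proof in \cite{Dappiaggi:2020yxg} with constants independent of the input. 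I would therefore revisit the positive-commutator estimates underlying Theorems \ref{Thm: main theorem k positivo}--\ref{Thm: main theorem k negativo}, checking that the constants depend only on seminorms of the symbols of $B_1, B_1', B_2, \Theta$, and then combine this with the continuity of $\Lambda_\Theta$ between $\dot{\mathcal{H}}_0^{-1,-\infty}(M)$ and $\mathcal{H}^{1,-\infty}_{loc}(M)$ (together with a closed graph / uniform boundedness argument) to promote the pointwise control into operatorial boundedness, yielding $B_1 \Lambda_\Theta B_2^* \in \mathcal{W}_b^{-\infty}(M)$ as required.
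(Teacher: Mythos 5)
Your proposal is correct and follows essentially the same route as the paper, which offers no written proof of this proposition but justifies it by citing \cite[Prop. 5.9]{Wro17} together with Hypothesis \ref{hypothesis_WF_Theta} and Theorems \ref{Thm: main theorem k positivo}--\ref{Thm: main theorem k negativo}: your reconstruction (apply the scalar propagation theorem and elliptic regularity to $v=\Lambda_\Theta B_2^*u$, use physical admissibility to annihilate the $WF_b^{-1,s+1}(\Theta v)$ term, reduce the second slot to the first via $\Theta=\Theta^*$, and upgrade to the operatorial statement through uniformity of the positive-commutator estimates over bounded families of $b$-pseudodifferential operators) is exactly the argument the citation encodes. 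You also correctly isolate the only genuinely delicate point, namely that the constants in the propagation estimates of \cite{Dappiaggi:2020yxg} depend solely on seminorms of the symbols involved, which is what permits the passage from microlocal regularity of each $v$ to membership of $B_1\Lambda_\Theta B_2^*$ in $\mathcal{W}_b^{-\infty}(M)$.
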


\noindent Our next step consists of refining Theorem \ref{thm:wavefront_set_propagator} in $\mathring{M}$, {\it cf.} for similarities with \cite[Cor. 4.5]{DF18}. 

\begin{corollary}
Let $G_\Theta : \mathcal{H}^{-1,-\infty}(\mathring{M})\rightarrow \mathcal{H}^{1,-\infty}(\mathring{M})$ be the restriction to $\mathring{M}$ of the causal propagator as per Remark \ref{Rem: causal propagator} . Then
$$ WF_b^{Op}(G_\Theta) = \{ (q_1,q_2) \in {}^b S^*\mathring{M} \times {}^b S^*\mathring{M} \ | \ q_1 \dot{\sim} q_2 \}.$$
\end{corollary}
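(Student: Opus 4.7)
The strategy is to establish the two inclusions separately, using the results already available in the paper. The inclusion $\subseteq$ is essentially a repackaging of Theorem \ref{thm:wavefront_set_propagator}. Indeed, since $G_\Theta = G_\Theta^+ - G_\Theta^-$, a point $(q_1,q_2)\notin\Delta$ that fails to lie in $\{q_1\dot\sim q_2\}$ belongs to neither $WF_b^{Op}(G_\Theta^+)$ nor $WF_b^{Op}(G_\Theta^-)$, hence not to $WF_b^{Op}(G_\Theta)$. For points on the diagonal there is nothing to check, since trivially $q\dot\sim q$ (constant bicharacteristic). Restricting to $\mathring M$ is harmless because $\pi:T^*\mathring M\to{}^bT^*\mathring M$ is the identity there, and the compressed characteristic set coincides with $\mathcal N$.

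For the reverse inclusion $\supseteq$ the idea is to use Proposition \ref{prop:operatorial_propagation} as a propagation engine, feeding it an appropriate diagonal seed. Since $P_\Theta G_\Theta^\pm=\mathbb I$ one has $P_\Theta G_\Theta=0$ and $G_\Theta P_\Theta=0$, so Proposition \ref{prop:operatorial_propagation} tells us that if $(q_1,q_2)\in WF_b^{Op}(G_\Theta)$ then so is $(q_1',q_2')$ for every $q_1'\dot\sim q_1$ and every $q_2'\dot\sim q_2$. Consequently it is enough to show that for every $q\in\dot{\mathcal N}\cap T^*\mathring M$ the diagonal pair $(q,q)$ belongs to $WF_b^{Op}(G_\Theta)$; by applying the propagation on one entry and then on the other one recovers the entire set $\{(q_1,q_2):q_1\dot\sim q_2\}$ in ${}^bS^*\mathring M\times{}^bS^*\mathring M$.

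To produce the seed, I would work locally in $\mathring M$, where $P_\Theta$ reduces to a standard Klein-Gordon operator on a globally hyperbolic open set (no boundary contribution enters, so the $b$-calculus collapses to the usual pseudodifferential calculus). The initial conditions satisfied by $\mathcal G_\Theta$, namely $\mathcal G_\Theta|_{\tau=\tau'}=0$ and $\partial_\tau\mathcal G_\Theta|_{\tau=\tau'}=\delta_\Sigma$ from Equation \eqref{eq: equations fundamental solutions}, force the Schwartz kernel of $G_\Theta$ to have a conormal singularity along the diagonal in $\mathring M\times\mathring M$: this is a purely local statement, and a classical Hadamard/parametrix construction for the Cauchy problem (valid because $P_\Theta$ is a normally hyperbolic operator in the interior) shows that $WF'(G_\Theta)$ contains the flow-out of the conormal to the diagonal inside $\mathcal N\times\mathcal N$. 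Translating this into the operator wavefront set via Proposition \ref{prop:prop_wavefront_two_points} applied to $G_\Theta$ (or directly by testing against elliptic cutoffs supported near $q$), we conclude that $(q,q)\in WF_b^{Op}(G_\Theta)$ for every $q\in\dot{\mathcal N}\cap T^*\mathring M$. Combining this with the propagation step finishes the argument.

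The main obstacle is the seed step: one needs to be sure that the interior singular behaviour of $G_\Theta$ — which was constructed here by a global Hilbert-space/functional-analytic argument rather than via a Hadamard parametrix — really does match, modulo smooth kernels on $\mathring M\times\mathring M$, the standard causal propagator of the local Klein-Gordon operator. This reduces, however, to uniqueness of solutions of the Cauchy problem on relatively compact causal diamonds contained in $\mathring M$, which is a consequence of Proposition \ref{Prop: existence and uniqueness} applied with data of compact support away from $\partial M$.
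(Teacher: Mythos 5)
Your argument is correct, and the forward inclusion is exactly the paper's: both of you read it off from Theorem \ref{thm:wavefront_set_propagator}. The reverse inclusion is where you diverge. The paper disposes of it in one line by observing that over $\mathring{M}$ one has ${}^bT^*\mathring{M}\simeq T^*\mathring{M}$ and then citing the classical theorem (B\"ar--Fredenhagen, Ch.~4, Thm.~16 of \cite{BF09}) that the causal propagator of a normally hyperbolic operator on a globally hyperbolic spacetime has wavefront set \emph{equal to} the flow-out of the conormal to the diagonal. You instead rebuild that statement by hand: a diagonal seed $(q,q)\in WF_b^{Op}(G_\Theta)$ obtained from the local Hadamard parametrix and the uniqueness/support properties of $G^\pm_\Theta$ on small causal diamonds in $\mathring{M}$, followed by the operatorial propagation result, Proposition \ref{prop:operatorial_propagation}, applied in each slot. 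This is essentially how the cited theorem is proved, so your route is a self-contained (if longer) version of the same argument; what it buys is independence from the external reference and an explicit illustration of how Proposition \ref{prop:operatorial_propagation} transports singularities, at the cost of having to justify carefully that the globally constructed $G_\Theta$ agrees, modulo smoothing, with the local parametrix --- a point you correctly flag and reduce to Proposition \ref{Prop: existence and uniqueness}. One small inaccuracy: in the forward inclusion you dismiss diagonal points with ``trivially $q\dot\sim q$'', but $q\dot\sim q$ holds only for $q\in\dot{\mathcal N}$; for $q\notin\dot{\mathcal N}$ the pair $(q,q)$ is \emph{not} in the right-hand side, so you must still exclude it from $WF_b^{Op}(G_\Theta)$. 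This follows from microlocal elliptic regularity, i.e.\ from the first assertion of Proposition \ref{prop:operatorial_propagation} (a bisolution has operatorial wavefront set contained in $\dot{\mathcal N}\times\dot{\mathcal N}$), so the gap is cosmetic rather than substantive.
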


\begin{proof}
A direct application of Theorem \ref{thm:wavefront_set_propagator} yields 
\begin{equation*}
WF^{Op}(G_\Theta) \subseteq \{(q_1,q_2) \in {}^b S^*\mathring{M} \times {}^b S^*\mathring{M} \; | \; q_1 \dot{\sim} q_2 \}
\end{equation*}
From this inclusion, it descends that every pair of points in the singular support of $G$ is connected by a generalized broken bicharacteristic completely contained in $\mathring{M}$. Since $^{b}T^*\mathring{M} \simeq T^*\mathring{M}$, we can apply \cite[Ch.4, Thm. 16]{BF09} and the sought statement is proven.
\end{proof}

With these data, we are ready to address the main question of this section. Suppose that $(M,g)$ is a static, globally hyperbolic, asymptotically AdS spacetime, {\it cf.} Definition \ref{Def: asymptotically AdS} and \ref{Def: standard static}. Let $P_\Theta$ be the Klein-Gordon operator as per Equation \eqref{Eq: PTheta} and let $\Theta\equiv\Theta_K$ be a static boundary condition as per Theorem \ref{Theorem: construction of the advanced and retarded propagators}. For simplicity we also assume that the spectrum of $E_K$ is contained in the positive real axis. Then the following key result holds true: 

\begin{prop}\label{prop:wavefront_bulk_to_bulk}
Let $(M,g)$ be a static, globally hyperbolic asymptotically AdS spacetime and let $P_{\Theta_K}$ be the Klein-Gordon operator with a static and physically admissible boundary condition as per Definition \ref{Def: admissible boundary conditions} Then there exists a Hadamard two-point function associated to $P_\Theta$,  $\lambda_{2,\Theta_K}\in\mathcal{D}^\prime(M\times M)$ such that, for all $f_1,f_2\in\mathcal{D}(M)$ 
\begin{align}\label{Eq: Hadamard 2-pt static}
	\lambda_{2,\Theta_K}(f_1,f_2)\doteq 2i\int_{\mathbb{R}^2}\textrm{d}t\textrm{d}t'\,\bigg( f_1(t)\bigg|\frac{\exp\big[i E_{\Theta_K}^{\frac{1}{2}}(t-t^\prime)\big]}{(-E_{\Theta_K})^{\frac{1}{2}}}f_2(t^\prime)\bigg),
\end{align}
\end{prop}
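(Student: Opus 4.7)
The plan is to verify the three defining properties of a Hadamard two-point function from Definition \ref{Def: Hadamard 2-pt function} for the explicit expression \eqref{Eq: Hadamard 2-pt static}. Throughout, I set $H\doteq(-E_{\Theta_K})^{1/2}$, which is a non-negative self-adjoint operator on $L^2(\Sigma)$ under the spectral assumption on $E_K$, so that every spectral expression below is unambiguously defined via the functional calculus of $H$; moreover $f_1,f_2\in\mathcal{D}(M)$ are viewed as elements of $C^\infty_{\mathrm{c}}(\mathbb{R};H^\infty_{\Theta_K}(\Sigma))$ as in Theorem \ref{Theorem: construction of the advanced and retarded propagators}.

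First, I would check well-definedness and the bi-solution property simultaneously. Separate continuity of the right-hand side of \eqref{Eq: Hadamard 2-pt static} in the topology of $\mathcal{D}(M)\times\mathcal{D}(M)$ follows from the boundedness of $e^{iH\sigma}$ and the regularity of $H^{-1}$ on $H^\infty_{\Theta_K}(\Sigma)$, so that $\lambda_{2,\Theta_K}\in\mathcal{D}^\prime(M\times M)$. Since in the static slicing of Remark \ref{Rem: our standard static} one has $P_{\Theta_K}=-\partial_\tau^2+E_{\Theta_K}$, and $\tau\mapsto e^{iH\tau}$ solves $\partial_\tau^2 v = -H^2 v = E_{\Theta_K} v$ by functional calculus, the bi-solution property $(P_{\Theta_K}\otimes\mathbb{I})\lambda_{2,\Theta_K}=(\mathbb{I}\otimes P_{\Theta_K})\lambda_{2,\Theta_K}=0$ follows at once. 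The boundary condition $\gamma_+=\Theta_K\gamma_-$ is automatically built into the self-adjoint realization $E_{\Theta_K}$ of $E$, again as in Theorem \ref{Theorem: construction of the advanced and retarded propagators}. Next, the identity $e^{iH\sigma}-e^{-iH\sigma}=2i\sin(H\sigma)$ with $\sigma=\tau-\tau'$, combined with \eqref{Equation: construction of the causal propagator}, yields after relabelling the integration variables
$$\lambda_{2,\Theta_K}(f_1,f_2)-\lambda_{2,\Theta_K}(f_2,f_1)=i\mathcal{G}_{\Theta_K}(f_1,f_2).$$
Positivity $\lambda_{2,\Theta_K}(\overline f,f)\geq 0$ is then a direct consequence of the spectral theorem: denoting by $\mathrm{d}E_\lambda$ the spectral resolution of $H$ and by $\widehat f(\lambda;\cdot)$ the partial temporal Fourier transform of $f$ tested against $\mathrm{d}E_\lambda$, the right-hand side of \eqref{Eq: Hadamard 2-pt static} collapses to a manifestly non-negative expression proportional to $\int \lambda^{-1}\,\|\widehat f(\lambda;\cdot)\|^2\,\mathrm{d}E_\lambda$.

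The main obstacle is the verification of the Hadamard wavefront condition \eqref{eq: Hadamard_Wavefront_Set} on $\mathring{M}\times\mathring{M}$. Writing $\Lambda_{\Theta_K}$ for the operator associated to $\lambda_{2,\Theta_K}$ by the Schwartz kernel theorem, the bi-solution property gives $P_{\Theta_K}\Lambda_{\Theta_K}=\Lambda_{\Theta_K}P_{\Theta_K}=0$, so Proposition \ref{prop:operatorial_propagation} implies that $WF_b^{Op}(\Lambda_{\Theta_K})$ is contained in $\dot{\mathcal{N}}\times\dot{\mathcal{N}}$ and is invariant under the GBB flow in each entry. On the interior, where time translation in $\tau$ is a symmetry, the spectral representation of $\Lambda_{\Theta_K}$ shows that its partial temporal Fourier transform is supported in $\{\omega>0\}$ in the first variable and $\{\omega<0\}$ in the second, which is exactly the sign condition $k\triangleright 0$ appearing in \eqref{eq: Hadamard_Wavefront_Set}. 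Combining this half-sided spectral support with the GBB propagation in the remaining directions and with Proposition \ref{prop:prop_wavefront_two_points}, which reduces the analysis of $WF_b^{Op}(\Lambda_{\Theta_K})$ to its diagonal component, one recovers the announced form of the wavefront set on $\mathring{M}\times\mathring{M}$ and concludes the proof.
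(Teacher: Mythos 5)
Your argument is correct and follows the same skeleton as the paper's proof: the algebraic properties (bi-solution, antisymmetric part equal to $i\mathcal{G}_{\Theta_K}$, positivity) are read off from the functional calculus for $E_{\Theta_K}$ exactly as in Theorem \ref{Theorem: construction of the advanced and retarded propagators}, and the wavefront set condition \eqref{eq: Hadamard_Wavefront_Set} is obtained by first establishing the Hadamard form in the interior and then upgrading it via Propositions \ref{prop:prop_wavefront_two_points} and \ref{prop:operatorial_propagation}. The one point where you genuinely diverge is the interior step: the paper disposes of it by citing the passivity theorem of Sahlmann and Verch \cite{Sahlmann:2000fh}, which guarantees that a ground state on a static spacetime satisfies the microlocal spectrum condition on every globally hyperbolic region not meeting the boundary, whereas you sketch the underlying mechanism by hand, namely half-sided temporal spectral support combined with containment of the wavefront set in the characteristic set and propagation along GBBs. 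Your sketch contains the right ingredients, but the passage from the \emph{global} one-sided frequency support of the kernel to the \emph{pointwise} sign condition $k\triangleright 0$ is precisely the nontrivial content of \cite{Sahlmann:2000fh}; the phrase ``which is exactly the sign condition'' compresses that step into an assertion, so you should either cite the reference or spell out the standard lemma (one-sided decay of the partial temporal Fourier transform constrains the sign of the $\tau$-component of any covector in the wavefront set, and a nonzero null covector over a point where $\partial_\tau$ is timelike is future- or past-pointing according to that sign). Your explicit verification of positivity and of the commutator identity, which the paper dismisses as holding ``per construction'', is a useful addition --- just write the spectral expression as an integral against the scalar measure $\mathrm{d}\big(E_\lambda\widehat{f}\,\big|\,\widehat{f}\big)$ rather than $\|\widehat{f}(\lambda;\cdot)\|^2\,\mathrm{d}E_\lambda$ --- and your appeal to Proposition \ref{prop:operatorial_propagation} is in fact what the paper's proof intends where it erroneously cites the proposition itself.
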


\begin{proof}
Observe that, per construction $\lambda_{2,\Theta_k}$ is a bi-solution of the Klein-Gordon equation associated to the operator $P_{\Theta_K}$ and it abides to Equation \eqref{Eq: constraints on 2-pt function}. We need to show that Equation \eqref{eq: Hadamard_Wavefront_Set} holds true. To this end it suffices to combine the following results. From \cite{Sahlmann:2000fh} one can infer that, the restriction of $\mathring{\lambda}_{2,\Theta_K}$, the bulk-to-bulk two-point distribution, to every globally hyperbolic submanifold of $M$ not intersecting the boundary is consistent with Equation \eqref{eq: Hadamard_Wavefront_Set}. At this point it suffices to invoke Proposition \ref{prop:prop_wavefront_two_points} and \ref{prop:wavefront_bulk_to_bulk} to draw the sought conclusion.
\end{proof}

\begin{rem}
Observe that, from a physical viewpoint, in the preceding theorem, we have individuated the two-point function of the so-called {\em ground state} with boundary condition prescribed by $\Theta_K$.
\end{rem}

\subsection{A Deformation Argument}\label{Sec: A deformation Argument}
In order to prove the existence of Hadamard two-point functions on a generic asymptotically anti-de Sitter spacetime for a Klein-Gordon field with prescribed static boundary condition, we shall employ a a deformation argument akin to that first outlined in \cite{FNW81} on globally hyperbolic spacetimes with empty boundary. 

To this end we need the following lemma, see \cite[Lem. 4.6]{Wro17}, slightly adapted to the case in hand. In anticipation, recalling Equation \eqref{Eq: metric near the boundary}, we say that a globally hyperbolic, asymptotically AdS spacetime is {\em even} modulo $\mathcal{O}(x^3)$ close to $\partial M$ if $h(x)=h_0+x^2h_1(x)$ where $h_1$ is a symmetric two-tensor, see \cite[Def. 4.3]{Wro17}.

\begin{lemma}\label{Lem: deformation_spacetime}
Suppose $(M,g)$ is a globally hyperbolic, asymptotically anti-de Sitter spacetime. For any $\tau_2 \in \mathbb{R}$ there a static, globally hyperbolic asymptotically AdS spacetime $(M,g^\prime)$ as well as $\tau_0,\tau_1$ with $\tau_0<\tau_1<\tau_2$ such that $g^\prime=g$ if $\{ \tau \geq \tau_1\}$, while, if $\{ \tau \leq \tau_0\}$, $(M,g^\prime)$ is isometric to a standard static asymptotically AdS spacetime $(M,g_S)$ which is even modulo $\mathcal{O}(x^3)$ and in which $C \leq \beta \leq C^{-1}$ for some $C>0$, with $\beta$ as in Equation \eqref{Eq: line element}.
\end{lemma}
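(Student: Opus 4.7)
The plan is to follow closely the construction of Wrochna \cite{Wro17}, modifying it so that one can interpolate between the given metric $g$ in the far future and a static reference metric $g_S$ in the far past, while preserving global hyperbolicity and the asymptotically AdS structure. First, I would invoke Theorem \ref{Thm: globally hyperbolic} to realize $(M,g)$ as $\mathbb{R}\times \Sigma$ with line element $ds^2=-\beta\, d\tau^2+\kappa_\tau$, and employ Equation \eqref{Eq: metric near the boundary} to describe $g$ in a collar $U\times[0,\varepsilon)$ of $\partial M$ as $x^{-2}(-dx^2+h_x)$, with $h_x$ smoothly depending on $x$ and $\tau$. Then I would fix some $\tau_0<\tau_1<\tau_2$ and seek $g'$ that equals $g$ for $\tau\geq\tau_1$ and equals $g_S$ for $\tau\leq\tau_0$.

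The construction of $g_S$ proceeds in two substeps. First, I freeze the data on the Cauchy surface $\Sigma_{\tau_0}$ by declaring $\beta_S(y)$ and $\kappa_{S}(y)$ to be the restrictions of $\beta$ and $\kappa_\tau$ to $\tau=\tau_0$, then extend them trivially in $\tau$. To enforce the bound $C\leq\beta_S\leq C^{-1}$, I would modify $\beta_S$ outside a compact neighbourhood of $\partial\Sigma$ via a smooth truncation that leaves its values in the collar unchanged; this is harmless because the conformal structure of $g$ near $\partial M$ is determined by $h_x$, not by $\beta$. Second, to achieve the evenness modulo $\mathcal{O}(x^3)$ near the boundary, I would expand $h_x=h_0+xh_1(y)+x^2h_2(y)+O(x^3)$ and replace $h_1$ by $\psi(x/\varepsilon')h_1$ with $\psi\in C^\infty_0([0,\varepsilon'))$ a bump killing the linear-in-$x$ term near $x=0$ while leaving the metric unchanged away from a thin collar; conditions a), b), c) of Definition \ref{Def: asymptotically AdS} are manifestly preserved, and in particular $\hat{g}_S=x^2 g_S$ still extends smoothly. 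Finally, to interpolate I pick $\chi\in C^\infty(\mathbb{R})$ with $\chi(\tau)=0$ for $\tau\leq\tau_0$ and $\chi(\tau)=1$ for $\tau\geq\tau_1$, and set
\begin{equation*}
g'=\chi(\tau)\,g+(1-\chi(\tau))\,g_S,
\end{equation*}
interpreted as the convex combination of the coefficients of $-d\tau^2$ and of the spatial metric.

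The main obstacle is verifying that $(M,g')$ is indeed globally hyperbolic and asymptotically AdS. The asymptotically AdS property is the easier of the two: since $x^2 g$ and $x^2 g_S$ both extend to smooth Lorentzian metrics on $M$ agreeing with $h_0$ on $\partial M$ modulo the evenness modification, and since $\widehat{g}'(dx,dx)=1$ is preserved by the convex combination with the same boundary function $x$, conditions a)-c) of Definition \ref{Def: asymptotically AdS} hold for $g'$ as well. For global hyperbolicity I would argue as follows: by construction $\partial_\tau$ is timelike for both $g$ and $g_S$ pointwise, hence also for any convex combination; thus each slice $\{\tau\}\times\Sigma$ remains spacelike for $g'$ and $\tau$ is a Cauchy temporal function. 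Absence of closed causal curves follows from $\tau$ being strictly increasing along future-directed causal curves, and compactness of $J^+(p)\cap J^-(q)$ can be inherited by comparing causal cones of $g'$ with those of an auxiliary metric $g_{\mathrm{wide}}$ whose null cones contain those of both $g$ and $g_S$ uniformly, so that $(M,g_{\mathrm{wide}})$ is still globally hyperbolic with timelike boundary. The delicate points are thus the uniform control on the widened cones and the compatibility of the truncations used to define $g_S$ with the global causal structure, both of which can be handled by shrinking $\varepsilon'$ and working within the collar neighbourhood theorem.
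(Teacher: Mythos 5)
The paper itself offers no proof of this lemma: it is imported from \cite[Lem. 4.6]{Wro17} ``slightly adapted'', so your reconstruction is measured against that construction rather than against an argument printed here. Your overall route --- realize the spacetime as $\mathbb{R}\times\Sigma$ via Theorem \ref{Thm: globally hyperbolic}, freeze the data on $\Sigma_{\tau_0}$, truncate the lapse, remove the odd term of the collar expansion, and interpolate lapse and spatial metric separately with a cutoff in $\tau$ --- is exactly the standard Fulling--Narcowich--Wald/Wrochna deformation and is the intended one. Two bookkeeping points: the splitting of Theorem \ref{Thm: globally hyperbolic} and the bound $C\leq\beta\leq C^{-1}$ pertain to the conformally rescaled metric $\widehat g=x^2g$ (the lapse of $g$ itself blows up like $x^{-2}$ at $\partial M$), so the freezing and the convex combination must be performed at the level of $\widehat g$ and then divided by $x^2$; and your bump function should vanish near $x=0$ and equal $1$ outside the thin collar, which is not what $\psi\in C^\infty_0([0,\varepsilon'))$ says.

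The genuine gap is the causality verification, which is the actual content of the lemma and which you assert rather than prove. First, the frozen metric $g_S$ must itself be shown globally hyperbolic: a standard static metric $-\beta_S d\tau^2+\kappa_S$ on $\mathbb{R}\times\Sigma$ is globally hyperbolic essentially when $(\Sigma,\kappa_S/\beta_S)$ is complete (suitably interpreted in the presence of the timelike boundary), and this is where the truncation $C\leq\beta_S\leq C^{-1}$ is genuinely needed rather than cosmetic; the fact that $\Sigma_{\tau_0}$ is Cauchy for $g$ does not transfer by itself to the static extension. Second, for the interpolating metric the pointwise identity $g'(v,v)=\chi\, g(v,v)+(1-\chi)\,g_S(v,v)$ only places the causal cone of $g'$ inside the \emph{union} of the cones of $g$ and $g_S$; a $g'$-causal curve need not be causal for either metric, so compactness of $J^+(p)\cap J^-(q)$ does not follow from this containment. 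One must either exhibit a single globally hyperbolic metric with timelike boundary whose cones contain those of $g'$ everywhere --- and the existence and global hyperbolicity of such a $g_{\mathrm{wide}}$ on a possibly non-compact $\Sigma$ with non-compact $\partial\Sigma$ is precisely the uniformity issue you defer --- or argue directly that every inextendible $g'$-causal curve meets every slice, using two-sided uniform bounds on $\beta'$ and on $\kappa'_\tau$ against a fixed complete reference metric over the compact $\tau$-interval $[\tau_0,\tau_1]$ where the interpolation takes place. The closing remark that everything ``can be handled by shrinking $\varepsilon'$'' does not address this: the difficulty sits in the $\tau$-interpolation region and in the non-compact directions of $\Sigma$, not in the thickness of the boundary collar.
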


Consider now a generic, globally hyperbolic, asymptotically anti-de Sitter spacetime $(M,g)$ and a deformation as per Lemma \ref{Lem: deformation_spacetime}. Observe that, per construction, all generalized broken bicharacteristics reach the region of $M$ with $\tau \in [\tau_1,\tau_2]$. This observation leads to the following result which is a direct consequence of the propagation of singularities theorem \ref{Thm: main theorem k negativo} and \ref{Thm: main theorem k positivo}. Mutatis mutandis, the proof is as that of \cite[Lem. 5.10]{Wro17} and, thus, we omit it.

\begin{lemma}\label{Lem: deformation_hadamard}
Suppose that $\Lambda_\Theta\in\mathcal{D}^\prime(M\times M)$ is a bi-solution of the Klein-Gordon equation ruled by $P_\Theta$ abiding to Equation \eqref{Eq: constraints on 2-pt function} and with a wavefront set of Hadamard form in the region of $M$ such that $\tau_1<\tau<\tau_2$. Then $\Lambda_\Theta$ is a Hadamard two-point function.
\end{lemma}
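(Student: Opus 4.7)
The strategy is to observe that the only nontrivial content to verify is the wavefront set condition \eqref{eq: Hadamard_Wavefront_Set}. Indeed, $\Lambda_\Theta$ is by hypothesis a bi-solution of $P_\Theta$ and its antisymmetric part is already proportional to $\mathcal{G}_\Theta$ via \eqref{Eq: constraints on 2-pt function}; the positivity condition in \eqref{Eq: constraints on 2-pt function} is also postulated. Hence, recalling Remark \ref{Rem: bulk-to-bulk two-point function}, it suffices to control the singular structure of the bulk-to-bulk restriction $\mathring\Lambda_\Theta$ on $T^*(\mathring M\times\mathring M)\setminus\{0\}$, where $^{b}T^*\mathring M\simeq T^*\mathring M$ so that $WF_b^{Op}$ coincides with the standard two-point wavefront set.

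The backbone of the argument is the operatorial propagation of singularities, Proposition \ref{prop:operatorial_propagation}, which applies because $P_\Theta\Lambda_\Theta=0=\Lambda_\Theta P_\Theta$. Combined with Lemma \ref{Lem: deformation_spacetime}, which ensures that the deformed spacetime agrees with $(M,g)$ for $\tau\geq\tau_1$ and with a static model for $\tau\leq\tau_0$, together with global hyperbolicity, every maximally extended generalized broken bicharacteristic crosses the intermediate strip $\{\tau_1<\tau<\tau_2\}$. Thus, given any $(q_1,q_2)\in WF_b^{Op}(\Lambda_\Theta)$, one first propagates the second leg using $\Lambda_\Theta P_\Theta=0$ to obtain $(q_1,q_2')\in WF_b^{Op}(\Lambda_\Theta)$ with the base point of $q_2'$ lying in the strip, and then propagates the first leg using $P_\Theta\Lambda_\Theta=0$ to reach $(q_1',q_2')$ with both base points in the strip.

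At this stage the hypothesis comes in: in the region $\{\tau_1<\tau<\tau_2\}$ the wavefront set of $\Lambda_\Theta$ is known to have Hadamard form, so $(q_1',q_2')$ must satisfy $q_1'\dot{\sim}q_2'$ and the corresponding future-pointing condition on the covector at $q_1'$. Transitivity of $\dot{\sim}$ along the two GBBs used for propagation then yields $q_1\dot{\sim}q_1'\dot{\sim}q_2'\dot{\sim}q_2$, so $q_1\dot{\sim}q_2$. The future-pointing character is preserved along a single GBB (since the Hamiltonian flow of $\widehat p$ in $\mathring M$ preserves the sign of the time component of the covector, and the reflection law at $\mathcal H\subset{}^b\dot T^*M$ in Definition \ref{Def: generalized broken bicharacteristics} does the same), so the direction condition on $q_1'$ transfers back to $q_1$. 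This proves $WF_b^{Op}(\mathring\Lambda_\Theta)$ is contained in the Hadamard set. The reverse inclusion is obtained by combining this upper bound with the identity $\mathring\Lambda_\Theta-\overline{\mathring\Lambda_\Theta}=i\,\mathring{\mathcal G}_\Theta$ from \eqref{Eq: constraints on 2-pt function} and Theorem \ref{thm:wavefront_set_propagator}, which fixes the wavefront set of $\mathring{\mathcal G}_\Theta$ to be exactly the set of $(q_1,q_2)$ with $q_1\dot{\sim}q_2$ in the bulk; positivity then forces the splitting into the future-pointing and past-pointing halves, so the Hadamard inclusion cannot be strict.

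The main obstacle I expect is precisely the last step, namely ensuring that the orientation of the covectors is correctly transported along generalized broken bicharacteristics, including the reflection at hyperbolic points of $^{b}\dot T^*M$, so that the sign condition $k\triangleright 0$ is preserved globally and not merely locally. A secondary subtlety is verifying that the propagation argument, which is stated at the operator level in terms of $WF_b^{Op}$, really transcribes to the standard bi-distributional wavefront set of Definition \ref{Def: Hadamard 2-pt function} on $\mathring M\times\mathring M$; this is routine but requires invoking the identification $^{b}T^*\mathring M\simeq T^*\mathring M$ explicitly so as to translate between the two notions of singularity.
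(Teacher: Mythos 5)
Your strategy coincides with the one the paper actually follows: the text does not spell out the argument but refers to \cite[Lem.\ 5.10]{Wro17}, and that is exactly the propagation-of-singularities deformation argument you reconstruct (verify only the wavefront condition, propagate both legs through the interpolating slab, invoke the Hadamard form there, then use the antisymmetric part to upgrade the inclusion to an equality). So the core is right and you have identified all the correct ingredients.

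Two points deserve tightening. First, you invoke ``transitivity of $\dot{\sim}$'' without justification, and this is not automatic: $\dot{\sim}$ means ``lies on a common GBB,'' and a naive concatenation of the three GBBs $\gamma_1$ (through $q_1$ and $q_1'$), $\gamma_3$ (through $q_1'$ and $q_2'$) and $\gamma_2$ (through $q_2'$ and $q_2$) need not be a GBB. It works here because one should arrange the intermediate points $q_1',q_2'$ to lie over $\mathring M\cap\{\tau_1<\tau<\tau_2\}$, where ${}^b\dot T^*M$ reduces to $T^*\mathring M$, bicharacteristics are uniquely determined by a single point, and hence $\gamma_1,\gamma_3$ locally agree near $q_1'$ (likewise at $q_2'$) so the three pieces glue to a single GBB. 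This is also why the paper states the Hadamard wavefront condition \eqref{eq: Hadamard_Wavefront_Set} only over $\mathring M\times\mathring M$: the slab hypothesis is really asserted over the interior, so you should say explicitly that the GBBs through $q_1$ and $q_2$, being inextendible and (by Lemma \ref{Lem: deformation_spacetime} and global hyperbolicity) forced to cross the slab, do so at interior points. In fact the route actually suggested by the paper's preamble uses Proposition \ref{prop:prop_wavefront_two_points}, which reduces to the diagonal $(q,q)$ and propagates both legs along a \emph{single} GBB; that variant sidesteps the concatenation issue entirely, which is precisely what positivity buys you. Your version is also correct once the interior-point caveat is added, but it is a genuinely different way around that subtlety and you should be aware of what you are trading.

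Second, in the reverse inclusion you write $\mathring\Lambda_\Theta-\overline{\mathring\Lambda_\Theta}=i\mathring{\mathcal G}_\Theta$, but Equation \eqref{Eq: constraints on 2-pt function} gives the antisymmetric part $\lambda_{2,\Theta}(f,f')-\lambda_{2,\Theta}(f',f)=i\mathcal G_\Theta(f,f')$, i.e.\ $\Lambda_\Theta-\Lambda_\Theta^T=i G_\Theta$ with $T$ the transpose, not a complex conjugate; these differ at the level of wavefront sets (the transpose swaps the two slots, conjugation flips the sign of the covector). Also, positivity is not actually needed at that step: once you have $WF(\mathring\Lambda_\Theta)\subseteq\mathcal C^+$ and hence $WF(\mathring\Lambda_\Theta^T)\subseteq\mathcal C^-$, the disjointness of $\mathcal C^\pm$ forces $WF(\mathring\Lambda_\Theta-\mathring\Lambda_\Theta^T)=WF(\mathring\Lambda_\Theta)\cup WF(\mathring\Lambda_\Theta^T)$, and comparison with $WF(\mathring{\mathcal G}_\Theta)=\mathcal C^+\cup\mathcal C^-$ immediately gives $WF(\mathring\Lambda_\Theta)=\mathcal C^+$. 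So invoking positivity there confuses the logic; positivity has already done its work through the Hadamard hypothesis on the slab (and, in the paper's route, through Proposition \ref{prop:prop_wavefront_two_points}).
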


\noindent To conclude, employing Corollary \ref{Cor: time_slice_axiom} we can prove the sought result:

\begin{thm}
Let $(M,g)$ be a globally hyperbolic, asymptotically anti-de Sitter spacetime and let $(M_S,g_S)$ be its static deformation as per Lemma \ref{Lem: deformation_spacetime}. Let $\Theta_K$ be a static and physically admissible boundary condition so that the Klein-Gordon operator $P_{\Theta_K}$ on $(M_S,g_S)$ admits a Hadamard two-point function as per Proposition \ref{prop:wavefront_bulk_to_bulk}. Then there exists a Hadamard two point-function on $(M,g)$ for the associated Klein-Gordon operator with boundary condition ruled by $\Theta_K$.
\end{thm}

\begin{proof}
Let $(M,g)$ be as per hypothesis and let $(M,g_S)$ be a static, globally hyperbolic, asymptotically AdS spacetime such that there exists a third, globally hyperbolic, asymptotically AdS spacetime $(M,g^\prime)$ interpolating between $(M,g)$ and $(M,g_S)$ in the sense of Lemma \ref{Lem: deformation_spacetime}. On account of Theorem \ref{Thm: globally hyperbolic}, in all three cases $M$ is isometric to $\mathbb{R}\times\Sigma$. 

On account of Proposition \ref{prop:wavefront_bulk_to_bulk}, on $(M,g_S)$ we can identify an Hadamard two-point function as in Equation \eqref{Eq: Hadamard 2-pt static} subordinated to the boundary condition $\Theta_K$. We indicate it with $\lambda_{2,S}$ omitting any reference to $\Theta_K$ since it plays no explicit r\^{o}le in the analysis. 

Focusing the attention on $(M,g^\prime)$, Lemma \ref{Lem: deformation_spacetime} guarantees that, if $\tau<\tau_0$, $\tau$ being the time coordinate along $\mathbb{R}$, then therein $(M,g^\prime)$ is isometric to $(M,g_S)$. Calling this region $M_0$, the restriction $\lambda_{2,S}|_{M_0\times M_0}$ identifies a two-point distribution of Hadamard form. Notice that we have omitted to write explicitly the underlying isometries for simplicity of notation.

Using the time-slice axiom in Corollary \ref{Cor: time_slice_axiom}, for any pair of test-functions $f,f^\prime\in\mathcal{D}(M^\prime)$ such that for all $p\in\textrm{supp}(f)\cup\textrm{supp}(f^\prime)$, $\tau(p)>\tau_0$, we set $h=P_{\Theta_K}\chi G_{\Theta_K}(f)$ and $h^\prime=P_{\Theta_K}\chi G_{\Theta_K}(f^\prime)$ where $G_{\Theta_K}$ is the causal propagator associated to $P_{\Theta_K}$ in $(M,g^\prime)$, while $\chi=\chi(\tau)$ is any smooth function such that there exists $\tau_1,\tau_2<\tau_0$ for which $\tau=0$ if $\tau<\tau_1$ while $\chi=1$ if $\tau>\tau_2$. We define
$$\lambda^\prime_2(f,f^\prime)=\lambda_{2,S}(h,h^\prime).$$
Observe that $h,h^\prime\in C^\infty_{tc}(M)$ and therefore the right-hand side of this identity is well-defined. In addition, since $G_{\Theta_K}$ is continuous on $\mathcal{D}(M)$, sequential continuity entails that $\lambda_2^\prime\in\mathcal{D}(M^\prime\times M^\prime)$. In addition, per construction, it is a solution of the Klein-Gordon equation ruled by $P_{\Theta_K}$ on $(M^\prime,g^\prime)$ and abiding to Equation \eqref{Eq: constraints on 2-pt function}. 

Furthermore Lemma \ref{Lem: deformation_hadamard} yields that $\lambda_2^\prime$ is of Hadamard form. 

To conclude it suffices to focus on $(M,g)$ recalling that there exists $\tau_1\in\mathbb{R}$ such that, in the region $(M_1,g^\prime)\subset (M,g^\prime)$ for which $\tau>\tau_1$, $(M,g^\prime)$ is isometric to $(M,g)$. Hence,we can repeat the argument given above. More precisely we consider $\lambda_2^\prime|_{M^\prime\times M^\prime}$ and, using the time-slice axiom, see Corollary \ref{Cor: time_slice_axiom}, we can identify $\lambda_2\in\mathcal{D}^\prime(M\times M)$ which is a solution of the Klein-Gordon equation ruled by $P_{\Theta_K}$ and it abides to Equation \eqref{Eq: constraints on 2-pt function}. Lemma \ref{Lem: deformation_hadamard} entails also that it is of Hadamard form, hence proving the sought result.
\end{proof}

\section*{Acknowledgments}

We are grateful to Benito Juarez Aubry for the useful discussions which inspired the beginning of this project and to Nicolò Drago both for the useful discussions and for pointing out references \cite{Ginoux, GMP}. We are also grateful to Simone Murro and to Micha\l Wrochna for the useful discussions. The work of A. Marta is supported by a fellowship of the Università Statale di Milano, which is gratefully acknowledged. C. Dappiaggi is grateful to the Department of Mathematics of the Università Statale di Milano for the kind hospitality during the realization of part of this work.

\end{document}